\newtheorem{theorem}{Theorem}[section]
\newtheorem{lemma}[theorem]{Lemma}
\newtheorem{corollary}[theorem]{Corollary}
\newtheorem{conjecture}[theorem]{Conjecture}
\newtheorem*{op}{Open Problem}
\journal{Theoretical Computer Science}
\newcommand{\torus}{\mathbb{T}}
\newcommand{\nat}{\mathbb{N}}
\newcommand{\intg}{\mathbb{Z}}
\newcommand{\rel}{\mathbb{R}}
\newcommand{\rat}{\mathbb{Q}}
\newcommand{\com}{\mathbb{C}}
\newcommand{\alg}{\overline{\rat}}
\newcommand{\ralg}{\rel \cap \alg}
\newcommand{\K}{\mathbb{K}}
\newcommand{\aut}{\mathcal{A}}
\newcommand{\Kcal}{\mathcal{K}}
\newcommand{\Lcal}{\mathcal{L}}
\newcommand{\Ocal}{\mathcal{O}}
\newcommand{\Scal}{\mathcal{S}}
\newcommand{\Tcal}{\mathcal{T}}
\newcommand{\Ical}{\mathcal{I}}
\newcommand{\eap}{effectively almost-periodic\xspace}
\newcommand{\ie}{i.e.\ }
\newcommand{\st}{\colon}
\newcommand{\cl}{\operatorname{Cl}}
\newcommand{\seq}[1]{(#1)_{n \in \mathbb{N}}}
\newcommand{\Log}{\operatorname{Log}}
\newcommand{\im}{\bm{i}}
\newcommand{\Rea}{\operatorname{Re}}
\newcommand{\Ima}{\operatorname{Im}}
\newcommand{\comp}{\mathrel{\Delta}}
\begin{document}

\begin{frontmatter}



\title{The Monadic Theory of Toric Words} 


\affiliation[irif]{organization={Universit\unexpanded{\'e} Paris Cit\unexpanded{\'e}, IRIF, CNRS},
	city={Paris},
	postcode={F-75013}, 
	country={France}}
	
\affiliation[mpi]{organization={Max Planck Institute for Software Systems, Saarland~Informatics~Campus},
	city={\\Saarbr\unexpanded{ü}cken},
	postcode={66123}, 
	country={Germany}}
	
\affiliation[oxford]{organization={University of Oxford, Department of Computer Science},
	city={\\Oxford},
	postcode={OX1~3QG}, 
	country={United Kingdom}}
	
	
\author[irif]{Val\'erie Berth\'e}      
\author[mpi]{Toghrul Karimov}
\author[mpi]{Joris Nieuwveld}
\author[mpi]{Jo\"el Ouaknine}
\author[mpi]{Mihir Vahanwala}
\author[oxford]{James Worrell}

\begin{abstract}
  For which unary predicates ${P_1,\ldots,P_m}$ is the MSO theory of
  the structure ${\langle \nat; <, P_1,\ldots,P_m \rangle}$ decidable?
  We survey the state of the art, leading us to investigate combinatorial
  properties of almost-periodic, morphic, and toric words.  In doing
  so, we show that if each $P_i$ can be generated by a toric dynamical system of a certain kind, then the attendant MSO theory is
  decidable.
  We give various applications of toric words, including the recent result of \cite{berthe2024mso} that the MSO theory of $\langle \nat; <, \{2^n \st n\in\nat\}, \{3^n \st n \in \nat\}\rangle$ is decidable.
\end{abstract}



\begin{keyword}
	Monadic second-order logic \sep morphic words \sep toric words
        \sep Pisot conjecture \sep dynamical systems \sep linear recurrence sequences 



\end{keyword}

\end{frontmatter}


\section{Introduction}
\label{sec-intro}
In 1962, B\"uchi proved in his seminal work \cite{buchi-collected-works} that the monadic second-order (MSO) theory of the structure ${\langle \nat; < \rangle}$ is decidable.
Shortly afterwards, in 1966, Elgot and Rabin \cite{elgot66_decid_undec_exten_secon_order_theor_succes} showed how to decide the MSO theory of ${\langle \nat; <, P\rangle}$ for various interesting unary predicates $P$. 
On the other hand, it was known already in the 1960s that extending ${\langle \nat; <\rangle}$ with the addition or even the doubling function yields a structure with an undecidable MSO theory  \cite{robinson1958restricted,trahtenbrot1962finite}.
In this paper, we focus on the following question: which unary
predicates ${P_1,\ldots,P_m}$ can one add to $\langle \nat; < \rangle$ whilst maintaining decidability of the MSO theory?
We give an overview of the state of the art and provide some new answers.
In particular, we identify a class of predicates generated by
rotations on a torus, any number of which can be adjoined to $\langle
\nat; < \rangle$ and still preserve decidability of the attendant
monadic theory.


By a predicate $P$ we mean a function with type $\nat\to \Sigma$, where $\Sigma$ is a finite alphabet.
When $\Sigma = \{0,1\}$, we identify $P$ with $\{n  \in \nat \st P(n) = 1\}\subseteq\nat$.
The \emph{characteristic word} of $P$ is the string $\alpha \in \Sigma^\omega$ whose $n$th letter is $P(n)$.
Let us take the primes predicate as an example, defined by ${P(n) = 1}$ if $n$ is prime and ${P(n) = 0}$ otherwise.
Recall that in a monadic second-order language we have access to the membership relation $\in$ and quantification over elements (written $Q x$ for a quantifier $Q$) as well as subsets of the universe (written $Q X$), which is $\nat$ in our case.
Consider the sentence $\psi$ given by
\begin{align*}
	\varphi(X) &\coloneqq 1 \in X \:\land\: 0,2 \notin X \:\land\: \forall x.\: x \in X \Longleftrightarrow s(s(s(x))) \in X\\
	\psi &\coloneqq \exists X\st \varphi(X) \:\land\: \forall y.\, \exists z > y\st z\in X \:\land\: P(z)
\end{align*}
where $s(\cdot)$ is the successor function defined by $s(x)=y$ if and only if
\[
x < y \:\land\: \forall z.\, x< z \Rightarrow y \le z.
\]
The formula $\varphi$ defines the subset ${\{n \st n \equiv 1 \pmod{3}\}}$ of $\nat$, and $\psi$ is the sentence ``there are infinitely many primes congruent to 1 modulo 3'', which is true.
Another example of a number-theoretic statement expressible in our setting would be the twin prime conjecture, which is given by the first-order sentence
\[
\forall x.\,\exists y > x\st P(y) \:\land\: P(s(s(y))).
\]
Unsurprisingly, the decidability of the MSO theory of the structure ${\langle \nat; <, P\rangle}$, where $P$ is the primes predicate, remains open.
Conditional decidability is known subject to Schinzel's hypothesis H, a number-theoretic conjecture which implies, in particular, the existence of infinitely many twin primes~\cite{bateman1993decidability}.

The MSO theory of $\nat$ equipped with the order relation is intimately connected to the theory of finite automata.
The \emph{acceptance problem} for a word $\alpha \in \Sigma^\omega$, denoted $\mathsf{Acc}_\alpha$, is to decide, given a deterministic (e.g., Muller) automaton $\aut$ over $\Sigma$, whether $\aut$ accepts $\alpha$.
In order for this algorithmic problem to be well defined, we assume that the word $\alpha$ is computable;
in other words, there is a Turing machine which, upon receiving $n$ as input, prints the $n$th letter of $\alpha$.
The previously mentioned result of B\"uchi establishes that
the MSO theory of ${\langle\nat; <, P_1,\ldots,P_m\rangle}$ is decidable if and only if $\mathsf{Acc}_\alpha$ is decidable for the word ${\alpha = \alpha_1\times \cdots\times \alpha_m}$, where each $\alpha_i$ is the characteristic word of $P_i$.\footnote{The original formulation by B\"uchi was given in terms of nondeterministic B\"uchi automata. The formulations involving deterministic automata with a Muller, Rabin, or parity acceptance condition are equivalent.}
Hence our central question can be reformulated as follows: for which classes of words ${\alpha_1,\ldots,\alpha_m}$ is $\mathsf{Acc}_\alpha$ decidable?

In this work we consider the classes of \emph{almost-periodic}, \emph{morphic}, and \emph{toric} words.
Almost-periodic words were introduced by Sem\"enov in \cite{semenov84_logic_theor_one_place_funct}. 
He showed that for an \emph{effectively} almost-periodic word $\alpha$, the MSO theory of the structure ${\langle\nat; <, P_\alpha\rangle}$ is decidable, where $P_\alpha$ is the predicate whose characteristic word is $\alpha$. We discuss almost-periodic words in \Cref{sec-ap}. We then move on to morphic words (\Cref{sec-morphic}), focussing on the result of
Carton and Thomas~\cite{carton-thomas-morphic-words} that for a morphic word $\alpha$, the MSO theory of ${\langle\nat; <, P_\alpha\rangle}$ is decidable.
These two works provide answers to our main question for a single predicate, i.e., in the case of ${m=1}$.
In \Cref{sec-toric}, we introduce the class of toric words, which are codings of a rotation with respect to target sets consisting of finitely many connected components. In \Cref{toric-mso}, we give a large class $\Kcal$ of toric words such that the MSO theory of the structure ${\langle \nat; <, P_1,\ldots,P_m \rangle}$ is decidable for any number $m$ of predicates with characteristic words belonging to $\Kcal$. We also study almost periodicity and closure properties of toric words (\Cref{sec-toric-ap}), and give an account of the overlap between toric words and various other well-known families of words.
Below is a summary of how we apply the theory of toric words.

\begin{enumerate}
	\item[(a)] Sturmian words are toric. In \Cref{sec-sturmian} we use the theory of toric words to show that for Sturmian words $\alpha_1,\ldots,\alpha_m$ that satisfy a certain effectiveness assumption, the MSO theory of ${\langle \nat; <, P_{\alpha_1},\ldots,P_{\alpha_m}\rangle}$ is decidable. 
	This answers a question posed in \cite{carton-thomas-morphic-words}.
	\item[(b)] One of the central problems in symbolic dynamics is to understand the morphic words for which the associated \emph{shift space} has a representation as a \emph{geometric dynamical system} \cite{akiyama-pisot}.
	A slightly different (but similar in spirit) question is: which morphic words are toric?
	The Pisot conjecture identifies a class of morphic words for
        which a representation as a simple geometrical dynamical system is believed to exist. 
	We discuss the conjecture and how it relates morphic and toric words in \Cref{sec-pisot}.
	\item[(c)] Recently, we used the machinery of toric words to show that for a large class of predicates given by linear recurrence sequences with a single, non-repeated real dominant root, the attendant MSO theory is decidable \cite{berthe2024mso}. 
	Let integers $k_1,\ldots,k_m > 1$, and $P_i = \{k_i^n \st n \in \nat\}$ for $1 \le i \le m$.
	We showed that the MSO theory of $\langle \nat; <, P_1,P_2 \rangle$ is (unconditionally) decidable, and the MSO theory of $\langle \nat; <,P_1,\ldots,P_m\rangle$ is decidable assuming Schanuel's conjecture in transcendental number theory.
	We discuss this result in \Cref{sec-procyclic-and-sparse}.
	\item[(d)] Toric words arise naturally in the study of linear recurrence sequences.
 	In fact, specialised classes of toric words have already been used in the literature \cite{karimov-power-of-positivity,Karimov2022,karimov-popl21} to study \emph{sign patterns} of linear recurrence sequences, discussed in \Cref{sec-sign-patterns}.
	We construct sign patterns of linear recurrence sequences (LRS) that prove that the product of an almost-periodic word with a toric word that is almost-periodic need not be almost-periodic.
	\item[(e)] Finally, in \Cref{sec-lds} we give an overview of how modelling sign patterns of LRS using toric words yields decision procedures for 
	the \emph{model-checking problem} for linear dynamical systems.
\end{enumerate}


\section{Mathematical background}
\label{sec-mathematical-background}
By an alphabet $\Sigma$ we mean a non-empty finite set.
For a word $\alpha \in \Sigma^+ \cup \Sigma^\omega$ and $n \in \nat$ we let $\alpha(n)$ denote the $n$th letter of $\alpha$.
For $\alpha \in \Sigma^\omega$ we let $P_\alpha$ denote the predicate defined by $P_\alpha(n) = \alpha(n)$ for all $n$.
We write $\alpha[n,m)$ for the finite word $u = \alpha(n)\cdots\alpha(m-1)$.
Such a $u$ is called a \emph{factor} of $\alpha$.
We write $\alpha[n,\infty)$ for the infinite word $\alpha(n)\alpha(n+1)\cdots$.

Let $\alpha_i \in \Sigma_i^\omega$ for $0 \le i < L$.
The \emph{product} $\alpha_0\times \cdots\times \alpha_{L-1}$ of $\alpha_0,\ldots,\alpha_{L-1}$ is the word $\alpha$ over the product alphabet $\Sigma_0 \times \cdots \times \Sigma_{L-1}$ defined by $\alpha(n) = (\alpha_0(n),\ldots,\alpha_{L-1}(n))$.
The \emph{merge} (alternatively, the \emph{shuffling} or the \emph{interleaving}) of $\alpha_0,\ldots,\alpha_{L-1}$ is the word $\alpha$ defined by $\alpha(nL+r) = \alpha_r(n)$ for all $n \in \nat$ and $0 \le r < L$. 
Let $\Sigma_1,\Sigma_2$ be two alphabets.
A \emph{morphism} $\tau\colon \Sigma_1^*\to\Sigma_2^*$ is a map satisfying $\tau(a_1\cdots a_l) = \tau(a_1)\cdots\tau(a_l)$ for all $a_1,\ldots,a_l \in \Sigma_1$.

We write $\Log$ for the principal branch of the complex logarithm.
That is, $\Ima(\Log(z)) \in (-\pi, \pi]$ for all non-zero $z \in \com$.
For $z = {(z_1,\ldots,z_d) \in \com^d}$ and $p \ge 1$, we let $\Vert z\Vert _p$ denote the $\ell_p$ norm ${\sqrt[p]{|z_1|^p +\cdots+|z_d|^p}}$.

By a \emph{$\K$-semialgebraic} subset of $\rel^d$, where $\K \subseteq \rel$, we mean a set that can be defined by polynomial inequalities with coefficients belonging to $\K$; recall that $p(\mathbf{x}) = 0 \Longleftrightarrow p(\mathbf{x}) \ge 0 \:\land\: p(\mathbf{x})\le 0$.
A set $X \subseteq \com^d$ is \emph{$\K$-semialgebraic} if
\[
\{(x_1,y_1,\ldots,x_d,y_d) \st (x_1+y_1\im, \ldots,x_d+y_d\im) \in X\}
\] 
is a $\K$-semialgebraic subset of $\rel^{2d}$, where $\im$ denotes the imaginary unit $\sqrt{-1}$.

A sequence $\seq{u_n}$ over a ring $R$ is a \emph{linear recurrence sequence (LRS)} over $R$ if there exist $d > 0$ and ${(a_0, \ldots, a_{d-1})} \in R^d$ such that the linear recurrence
\[
u_{n+d} = a_0 u_n + \cdots + a_{d-1}u_{n+d-1}
\]
holds for all $n \in \nat$.
Here, $d$ is the \emph{order} of the linear recurrence, and the \emph{order} of an LRS $\seq{u_n}$ is the smallest number $d$ such that $\seq{u_n}$ satisfies a linear recurrence of order $d$.
An LRS $\seq{u_n}$ over $R$ of order $d$ can be written in the form $u_n = c^\top M^n s$ for some $c,s \in R^d$ and $M \in R^{d \times d}$.
If $R$ is an integral domain, then for every $p \in R[x_1,\ldots,x_d]$, $u_n = p(M^ns)$ defines an LRS over $R$.
This is a consequence of Fatou's lemma \cite[Chapter 7.2]{berstel_noncommutative}.

The most famous problem about LRS is the Skolem problem (over $\rat$): given an LRS $\seq{u_n}$ over $\rat$, decide whether there exists $n$ such that $u_n = 0$.
The Skolem problem has been open for some ninety years, counting from the seminal work \cite{skolem-sml} of Skolem, and is currently known to be decidable for LRS (over $\rat$) of order 4 or less \cite{mignotte-shorey-tijdeman-skolem,vereschagin-skolem}.
A related result is the celebrated Skolem-Mahler-Lech theorem \cite{skolem-sml,mahler-sml,lech-sml}, which asserts that the set of zeros of an LRS over a field of characteristic zero is a union of a finite set $F$ and finitely many arithmetic progressions ${a_1 + b_1\nat}, \ldots,{a_k+b_k\nat}$, where $0 \le a_i$ for all $i$.
The values of $k, a_i, b_i$ can all be effectively computed, whereas determining whether $F$ is empty is exactly the Skolem problem.
Berstel and Mignotte showed in~\cite{berstel-deux-decidable-properties-of-lrs} that for an LRS $\seq{u_n}$ there exists an effectively computable $L \ge 1$ such that for all $0 \le r  < L$, the subsequence $\seq{u_{nL+r}}$ has finitely many zeros or finitely many non-zero terms.
Consequently, if we assume existence of an oracle for the Skolem problem, then we can effectively compute all elements of $F$ in the Skolem-Mahler-Lech theorem: take $L$ subsequences and repeatedly apply the Skolem oracle to each non-zero subsequence until all zeros have been found. 

Other well-known open decision problems on LRS include the Positivity problem (given $\seq{u_n}$, decide if $u_n \ge 0$ for all $n$) and the Ultimate Positivity problem (given $\seq{u_n}$, decide if $u_n \ge 0$ for all sufficiently large~$n$).
These decision problems were already encountered in the 1970s by Salomaa and others when studying growth and related problems in formal languages \cite{rozenberg1995cornerstones,Salomaa_1978}.
The Skolem problem for LRS over $\rat$ can be reduced to the Positivity problem for LRS over $\rat$, but the latter is also, independently from the Skolem problem, hard with respect to certain open problems in Diophantine approximation~\cite{pos-low-order}.

\section{Almost-periodic words}
\label{sec-ap}

A word $\alpha \in \Sigma^\omega$ is \emph{almost-periodic} if for every finite word $u \in \Sigma^*$, there exists $R(u) \in \nat$ with the following property.
\begin{itemize}
	\item[(a)] Either $u$ does not occur in $\alpha[R(u), \infty)$, or
	\item[(b)] it occurs in every factor of $\alpha$ of length $R(u)$.
\end{itemize}
The word $\alpha$ is \emph{\eap} if (i) $\alpha(n)$ can be effectively computed for every $n$, and (ii) given $u$, we can effectively compute a value $R(u)$ with the properties above.
We represent an \eap word with two programs that compute $\alpha(n)$ on $n$ and $R(u)$ on $u$, respectively.
The word~$\alpha$ is \emph{strongly almost-periodic} if it is almost-periodic and every finite word $u$ either does not occur in $\alpha$, or occurs infinitely often.
Strongly almost-periodic words are also known as \emph{uniformly recurrent} words in the literature; see \cite{allouche_shallit_2003,Queffelec:10}.
For such words, $R(u)$ is an upper bound on the \emph{return time} of~$u$.
We will see that certain morphic words, sign patterns of linear recurrence sequences, as well as large classes of toric words are almost-periodic. 
The characteristic word $\alpha_{n!} = {01100010000\cdots}$ of the set ${\{n!\mid n \in \nat\}}$ of all factorial numbers, on the other hand, is an example of a word that is not almost-periodic.

Remarkably, for an \eap word $\alpha$ the acceptance problem $\mathsf{Acc}_\alpha$ and hence the MSO theory of the structure  $\langle\nat; <, P_\alpha \rangle$ are decidable.
We refer to this result as \textbf{Sem\"enov's theorem}.\footnote{See \cite{colcombet2011green} for a characterisation of predicates $P$ for which the MSO theory of $\langle N; <, P \rangle$ is decidable, also due to Sem\"enov.}
\begin{theorem}
	Given a deterministic automaton $\aut$ and an effectively almost-periodic word $\alpha$, it is decidable whether $\aut$ accepts $\alpha$.
\end{theorem}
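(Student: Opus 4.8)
The plan is to reduce the Acceptance Problem to computing one finite object --- the set of states that the run of $\aut$ on $\alpha$ visits infinitely often --- and then to recover that object by a search whose termination and correctness rest on the uniformity that effective almost-periodicity supplies. The input is taken to be $\aut$ together with the two programs representing $\alpha$ (one for $\alpha(n)$ given $n$, one for $R(u)$ given $u$).

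First I would normalise $\aut$ to a deterministic Muller automaton $(Q, \Sigma, \delta, q_0, \mathcal{F})$, which by the equivalences recalled above costs nothing. Then $\aut$ has a unique run $\rho = q_0 q_1 q_2 \cdots$ on $\alpha$, with $q_{n+1} = \delta(q_n, \alpha(n))$, and $\aut$ accepts $\alpha$ exactly when $\mathrm{Inf}(\rho) \coloneqq \{q \in Q : q = q_n \text{ for infinitely many } n\}$ belongs to $\mathcal{F}$. Hence it suffices to compute $\mathrm{Inf}(\rho)$. I would encode the automaton's behaviour in the finite \emph{profile monoid} $\mathcal{M}$: to a finite word $u$ attach $\pi(u) = \bigl(\, q \mapsto \hat\delta(q,u)\, ,\ q \mapsto V_u(q) \,\bigr)$, where $\hat\delta(q,u)$ is the state reached from $q$ on $u$ and $V_u(q)$ the set of states traversed on that path; profiles multiply by concatenation, $\mathcal{M}$ is finite, and $\pi(u)$ is computable from $u$. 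A Ramsey argument on the colouring $(i,j) \mapsto \pi(\alpha[i,j))$ produces a factorisation $\alpha = w_0 w_1 w_2 \cdots$ in which all $\pi(w_i)$ with $i \ge 1$ equal a single idempotent of $\mathcal{M}$, and the visited-states data this idempotent carries along $\rho$ is precisely $\mathrm{Inf}(\rho)$. The theorem asserts that such data can be extracted effectively, and the rest of the argument is about how almost-periodicity makes the Ramsey step constructive.

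The crucial computable consequences of almost-periodicity are these. For each length $\ell$, the set $\mathrm{Rec}_\ell$ of length-$\ell$ factors of $\alpha$ that occur infinitely often is computable, because a factor $u$ is recurrent if and only if it occurs in $\alpha[R(u), 2R(u))$ --- a condition decided by reading finitely many letters of $\alpha$; and there is a computable threshold $N_\ell$ (for example $N_\ell = \max_{|u| = \ell} R(u)$) past which $\alpha$ contains only recurrent factors of length $\ell$, each of which moreover occurs inside every window of $\alpha$ of length $N_\ell$. Using these I would: (i) compute a block length $\ell^\ast$ large enough that the profiles of all recurrent factors of length $\ell^\ast$ are confined to a rigid part of $\mathcal{M}$ --- concretely, one expects the minimal (two-sided) ideal, or at least a single $\mathcal{J}$-class, because such a factor contains as sub-factors an entire fixed menu of shorter recurrent factors; (ii) simulate $\rho$ beyond position $N_{\ell^\ast}$ and search for positions $a < b$ with $q_a = q_b$ such that $\alpha[a,b)$ is recurrent and $\pi(\alpha[a,b))$ is idempotent inside that rigid part; (iii) output ``accept'' iff the set of states $\rho$ traverses between $a$ and $b$ lies in $\mathcal{F}$. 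The density property at (ii) should guarantee that such a pair exists and is located by exhaustive search, and the rigidity of the profiles should guarantee that the set of states traversed between $a$ and $b$ is exactly $\mathrm{Inf}(\rho)$.

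The main obstacle is the combinatorial lemma behind steps (i)--(iii): that for a computable $\ell^\ast$ the profiles of all sufficiently long recurrent factors sit inside a part of $\mathcal{M}$ that is simple enough for one checkable recurrent block, read anywhere along the run, to report $\mathrm{Inf}(\rho)$ faithfully --- in particular, never to traverse a state that $\rho$ visits only finitely often. Ramsey's theorem gives only the non-effective skeleton ``a homogeneous factorisation exists''; the decisive gain is that recurrence of a finite block is, here, a \emph{decidable} property, so ``a homogeneous factorisation exists'' is upgraded to ``a bounded search finds a block that provably recurs forever, carrying its visited states back infinitely often.'' The delicate point is the transient prefix of $\rho$ --- finite but of no a priori bounded length --- during which the run has not yet settled inside $\mathrm{Inf}(\rho)$: one must show that \emph{any} recurrent, suitably idempotent block detected, even inside that prefix, already sees precisely the states of $\mathrm{Inf}(\rho)$, and this is exactly what confining the profiles of recurrent factors to the rigid part of $\mathcal{M}$ is meant to deliver. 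Finally, one should check that all the constants produced ($\mathrm{Rec}_\ell$, $N_\ell$, $\ell^\ast$) depend computably on $\aut$ and the two given programs, so the procedure is uniform in its input.
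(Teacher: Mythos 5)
Your reduction to computing $\mathrm{Inf}(\rho)$ and your two computable consequences of effective almost-periodicity (recurrence of a factor $u$ is decided by looking for $u$ in $\alpha[R(u),2R(u))$, and $N_\ell=\max_{|u|=\ell}R(u)$ bounds both the transient region and the return times) are correct. The gap is the combinatorial lemma that you yourself flag as the main obstacle: you never prove that a detected recurrent block $u=\alpha[a,b)$ with $q_a=q_b$ and $\pi(u)$ idempotent satisfies $V_u(q_a)=\mathrm{Inf}(\rho)$, and without the length restriction of step (i) this is simply false. Take $\alpha=(0011)^\omega$ and the automaton with states $\{s,t\}$, transitions $\delta(\cdot,0)=s$ and $\delta(\cdot,1)=t$, and initial state $s$: the run is $s\,s\,s\,t\,t\,s\,s\,t\,t\cdots$, so $\mathrm{Inf}(\rho)=\{s,t\}$, yet the block $u=\alpha[0,1)=0$ is recurrent, has $q_0=q_1=s$ and idempotent profile, and reports $V_u(s)=\{s\}$. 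So everything rests on step (i), where two things are missing. First, effectiveness of $\ell^\ast$: the $\mathcal{J}$-classes of profiles of recurrent factors do eventually collapse to a single minimal class as the length grows, but observing that the set of classes is unchanged from length $\ell$ to length $\ell+N_\ell$ does not certify stabilisation (a further strict drop can occur later), so computing $\ell^\ast$ needs an extra argument. Second, and more seriously, correctness: even for $u$ with $\pi(u)$ idempotent in the minimal class, the state $q_a$ at the particular occurrence you detect need not be the state in which the run enters the infinitely many later occurrences of $u$; relating $V_u(q_a)$ to those later visited sets (and hence to $\mathrm{Inf}(\rho)$, in both inclusions) requires a genuine argument via Green's relations and the group structure of the $\mathcal{H}$-class of $\pi(u)$, none of which appears. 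As written, the proposal is an algorithm whose correctness proof is missing at exactly the decisive step.

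For comparison, the paper takes a different and much shorter route: it invokes the result of Muchnik, Semenov and Ushakov that the state sequence $\aut(\alpha)$ of the run is itself effectively almost-periodic, after which $\mathrm{Inf}(\rho)$ is read off directly by computing $R(q)$ for each single state $q$ and checking whether $q$ occurs in $\aut(\alpha)[R(q),2R(q))$. The difficulty you are wrestling with (the unbounded transient prefix) is exactly what that cited lemma absorbs; if you want a self-contained semigroup proof along your lines, the effort must go into a full proof of the lemma behind steps (i)--(iii), not into the surrounding search procedure.
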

See \cite{muchnik03_almos_period_sequen} for an elegant proof, showing that the sequence of states $\aut(\alpha)$ obtained when a deterministic automaton $\aut$ reads an \eap word $\alpha$ is also \eap.
It remains to determine which states occur infinitely often in $\aut(\alpha)$.
This can be done by computing $R(q)$ for every state $q$ and then checking whether $q$ occurs in ${\aut(\alpha)[R(q), 2R(q))}$.

We next give a few closure properties of almost-periodic words, which are proven in \cite{muchnik03_almos_period_sequen}.

\begin{theorem}
	Let $\alpha \in \Sigma_1^\omega$ be (effectively) almost-periodic, $\beta$ be ultimately periodic, and $\gamma$ an infinite word output by a finite-state deterministic transduces on input $\alpha$.
	Then the words $\tau(\alpha)$, $\alpha \times \beta$, and $\gamma$ are (effectively) almost-periodic.
\end{theorem}

On the other hand, by the result \cite{semenov84_logic_theor_one_place_funct} of Sem\"enov, the product of two effectively almost-periodic words need not be effectively almost-periodic.
This tells us that we cannot immediately use Sem\"enov's theorem to show decidability of the MSO theory of the structure ${\langle\nat;<,P_\alpha,P_\beta\rangle}$ for \eap words $\alpha,\beta$.
In \Cref{sec-sign-patterns}, we give explicit words $\alpha, \beta$ that are sign patterns of linear recurrences sequences and effectively almost-periodic (in fact, one of these words is toric), whereas the product $\alpha \times \beta$ is not almost-periodic.
The proof obtained in \cite{semenov84_logic_theor_one_place_funct}, in comparison, is indirect: it constructs two effectively almost-periodic words $\alpha,\beta$ that encode information about Turing machines such that the MSO theory of ${\langle \nat; <, P_{\alpha\times\beta}\rangle}$ is undecidable.
It follows that the word $\alpha\times \beta$ cannot be \emph{effectively} almost-periodic.

\section{Morphic words}
\label{sec-morphic}
By \emph{substitution} we mean a non-erasing morphism $\tau\colon \Sigma^* \to \Sigma^*$.
That is, $\tau(a) \in \Sigma^+$ for all $a \in \Sigma$. 
Let $\tau$ be a substitution and $a \in \Sigma$ be a letter such that $\tau(a) = aw$ for some $w \in \Sigma^*$.
Iterating $\tau$ on $a$, we obtain a sequence $\seq{x_n}$ of words
given by $x_0 = a$ and
$x_{n+1} = a w \tau(w) \tau^2(w)\cdots \tau^{n+1}(w)$.
For every $k, n \in \nat$, $x_{n}$ is a prefix of $x_{n+k}$.
If $|{\tau^n(a)| \to \infty}$ as ${n\to \infty}$, then $\seq{x_n}$ converges to an infinite word $\alpha \in \Sigma^\omega$ that is a fixed point of $\tau$.
Such $\alpha$ is called a \emph{substitutive} (alternatively, a
\emph{pure morphic}) word; see \cite{Queffelec:10} for an account of
the dynamics of these words.
Substitutive words are similar to and subsumed by words generated by D0L systems; the latter are obtained by iteratively applying a morphism to a word $w \in \Sigma^*$, as opposed to a single letter \cite{Kari_1997}.
We next give a few well-known examples of substitutive words.

\begin{figure}[t]
	\begin{subfigure}[h]{.5\textwidth}
		\centering
		\begin{tikzpicture}[scale=0.9]
			\def\x{2.5};
			\draw[-{Latex[length=2mm]}] (0,-\x) -- (0,\x);
			\draw[-{Latex[length=2mm]}] (-\x,0) -- (\x,0);
			\draw[thick,domain=132.5:270] plot ({2*cos(\x)}, {2*sin(\x)});
			\draw[Cyan,thick,domain=-90:132.5] plot ({2*cos(\x)}, {2*sin(\x)});
			\draw[Rhodamine,thick,domain=132.5:270] plot ({2*cos(\x)}, {2*sin(\x)});
			\node[Cyan] at (1.9,1.4) {$\bm{S_0}$};
			\node[Rhodamine] at (-2.2,0.8) {$\bm{S_1}$};
			\draw[thick,domain=0:360,fill=white] plot ({0.1*cos(\x)-2*0.675}, {0.1*sin(\x)+2*0.737});
			\draw[thick,domain=0:360,fill=white] plot ({0.1*cos(\x)-2*0}, {0.1*sin(\x)-2*1});
			\draw[->] (0,0) -- (-0.737*2,-0.675*2);
			\draw[thick, domain=0:222.5] plot ({0.2*cos(\x)}, {0.2*sin(\x)});
			\node at (-0.737*2-0.3,-0.675*2-0.3) {$\bm{\gamma}$};
		\end{tikzpicture}
		\subcaption{}
	\end{subfigure}
	\begin{subfigure}[h]{.5\textwidth}
		\centering
		\includegraphics*[scale=0.38]{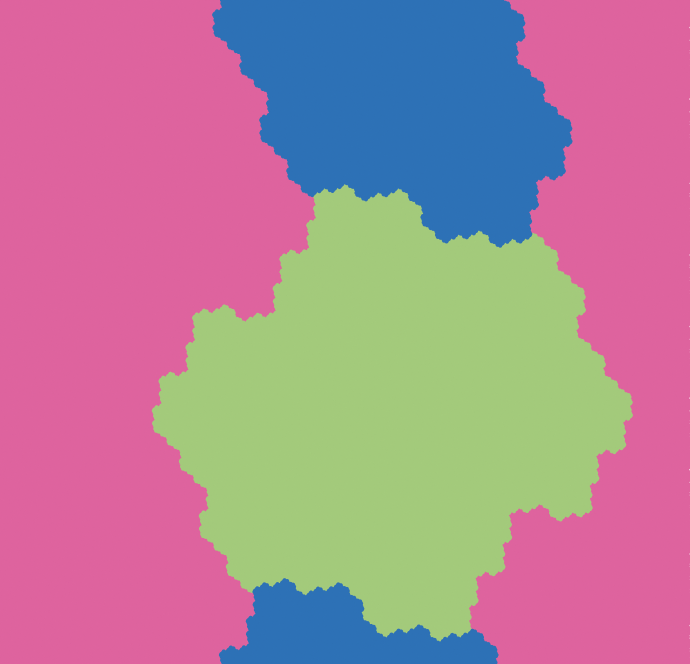}
		\subcaption{}
	\end{subfigure}
	\caption{Target sets for the Fibonacci and Tribonacci words. In~(b), the pink, green, and blue sets correspond to $S_1,S_2,S_3$, respectively.}
	\label{Fibonacci and Tribonacci words} 
\end{figure}

\begin{itemize}
	\item[(a)] The Thue-Morse sequence ${0110100110\cdots}$ is generated by the substitution ${0\to01}$ and ${1\to 10}$, starting with the letter $0$.
	\item[(b)] The Fibonacci word $\alpha_F = {01001010010\cdots}$, generated by the substitution ${0\to 01}$ and ${1 \to 0}$.
	This famous sequence has many equivalent definitions, one of them as the coding of a rotation (\Cref{Fibonacci and Tribonacci words}~(a)).
	Let ${\torus = \{z \in \com \st |z| = 1\}}$.
	Let $\varphi = \frac{1+\sqrt{5}}{2} \approx 1.618$ and $\Phi = \varphi-1$ denote the golden ratio and its multiplicative inverse, respectively, and write $\gamma = e^{\im2\pi/\varphi}$.
	The long-run ratio of zeros to ones in $\alpha_F$ is equal to $1/\Phi$, and
	$\alpha_F$ is the coding of $\seq{\gamma^n}$ with respect to ${\{S_0,S_1\}}$, where $S_0,S_1$ are open interval subsets of $\torus$ with lengths $2\pi\Phi$ and $2\pi\Phi^2$, respectively.
	That is, for all $n \in \nat$ and $a \in \{0,1\}$, $\alpha(n)=a \Longleftrightarrow \gamma^n\in S_a$.
	We will see in \Cref{sec-families-of-words} that $\alpha_F$ is also a Sturmian word and a Pisot word.
	\item[(c)] The Tribonacci word $\alpha_T = {121312112131\cdots}$, generated by the substitution $1{\to 12}$, ${2\to 13}$, ${3\to 1}$.
	Let $\beta \approx 1.839$ be the real root of ${x^3-x^2-x-1}$ and $\Gamma = {(e^{\im2\pi/\beta},e^{\im2\pi/\beta^2})} \in \torus^2$.
	The word $\alpha_T$ has a representation as the coding of $\seq{\Gamma^n}$ with respect to three open subsets $S_1,S_2,S_3$ of~$\torus^2$ with fractal boundaries \cite{Rauzy:82}.
	For $z \in \torus$, let $f(z) = \frac{\Log(z)}{\im2\pi}+\frac{1}{2}$.
	If we identify the multiplicative group $\torus^2$ with the additive group $\rel^2/\intg^2=[0,1)^2$ via $(z_1,z_2) \to (f(z_1),f(z_2))$, the images of $S_1, S_2, S_3$ form the \emph{Rauzy fractal}. See \Cref{Fibonacci and Tribonacci words}~(b).
	\item[(d)] (Carton and Thomas~\cite{carton-thomas-morphic-words}.)
	Consider the substitution $\tau$ given by ${a \to ab}$, ${b \to ccb}$, ${c \to c}$, and let $x_n = \tau^n(a)$.
	We have that $x_1 = {ab}$, $x_2 = {abccb}$, $x_3 = {abccbccccb}$, and so on, with the fixed point $\alpha = {abc^2bc^4bc^6bc^8\cdots}$ that is not almost-periodic.
	\item[(e)]  (Salomaa, \cite{salomaa1981jewels}.) Consider the morphism $a \to aab, b \to a$. 
	The fixed point $\alpha = aabaabaaabaabaaab\cdots$ is also a Sturmian (see \Cref{sec-sturmian}) and hence a toric word \cite{berstel1994remark}.
\end{itemize}

Let $\tau$ be a substitution, and order the letters of the alphabet $\Sigma$ as ${a_1,\ldots,a_k}$. 
The matrix $M_\tau$, where $(M_\tau)_{i,j}$ is the number of occurrences of $a_j$ in $\tau(a_i)$, is called the \emph{incidence matrix} of $\tau$.
Observe that $M_\tau^n$ counts the number of occurrences of each letter in $\tau^n(a_i)$ for $1\le i \le k$.
A substitution is called \emph{primitive} if there exists $n$ such that all entries of $M_\tau^n$ are strictly positive.

The factorial word $\alpha_{n!} \in \{0,1\}^\omega$, \ie the characteristic word of the set ${\{n!\mid n \in \nat\}}$, is not substitutive.
This can be shown by observing that every fixed point of a substitution $\tau$ can be factorised as ${a\tau^0(w)\tau^1(w)\tau^2(w)\cdots}$ where ${\seq{|\tau^n(w)|}}$ grows at most exponentially. 
The blocks of zeros of $\alpha_{n!}$, however, grow super-exponentially.
Substitutive words need not be almost-periodic (see Example~(d) above), but fixed points of primitive substitutions are strongly and effectively almost-periodic \cite[Chapter 10.9]{allouche_shallit_2003}.

We say that a word $\beta \in \Sigma_2^\omega$ is \emph{morphic} if there exist a substitutive word $\alpha \in \Sigma_1^\omega$ and a renaming of letters ${\mu\colon \Sigma_1 \to \Sigma_2}$ such that $\beta = \mu(\alpha)$.
As an example, if we apply the morphism $\mu$ given by ${a\to 1}$, ${b\to 1}$, ${c\to 0}$ to the word $\alpha ={ abc^2bc^4bc^6\cdots}$ above, the word $\beta = \mu(\alpha)$ we obtain is the characteristic word of the squares predicate: ${\beta(n) = 1} \Longleftrightarrow {n = k^2}$ for some ${k \in \nat}$.
Carton and Thomas \cite{carton-thomas-morphic-words} showed that, in fact, for every integer $m \ge 1$ and polynomial $p \in \intg[x]$ satisfying $p(n) \ge 0$ for $n \in \nat$, the characteristic word of the set ${\{p(n)m^n\st n \in \nat\}}$ is morphic.
Morphic words moreover subsume the class of automatic words \cite[Chapter 6.3]{allouche_shallit_2003}.


\subsection{MSO decidability for morphic words}
In this section we discuss the semigroup approach used in \cite{carton-thomas-morphic-words} to show that, for a predicate~$P$ whose characteristic word is morphic, the MSO theory of ${\langle \nat; <, P\rangle}$ is decidable.
Let $\aut$ be a deterministic automaton over an alphabet~$\Sigma$ with the set of states $Q$. 
We can associate a semigroup with $\aut$ as follows.
Two words $u_1,u_2 \in \Sigma^*$ are equivalent with respect to $\aut$, written $u_1 \equiv_\aut u_2$, if for every state $q$, there exist $R \subseteq Q$ and $t \in Q$ with the following property.
For $i \in \{1,2\}$, when $u_i$ is read in the state $q$, the run visits exactly the states in $R$ and ends in the state $t$.
Observe that $\Sigma^* / \equiv_\aut$ consists of finitely many equivalence classes.
Let $[u]_\aut$ denote the equivalence class of $u \in \Sigma^*$, noting that $u \equiv_\aut v$ implies $uw \equiv_\aut vw$ and $wu \equiv_\aut wv$ for all finite words $u,v,w$.
We define the semigroup $G_\aut = \{[u]_\aut \st u \in \Sigma^*\}$ with $[u]_\aut\cdot [v]_\aut \coloneqq [uv]_\aut$.
The semigroup $G_\aut$ associated with $\aut$ has been known since the work of B\"uchi \cite{buchi-collected-works}.

Carton and Thomas \cite{carton-thomas-morphic-words} define the class of \emph{profinitely ultimately periodic} words for which the acceptance problem is decidable.
A word $\alpha$ is profinitely ultimately periodic if it has a factorisation $\alpha  = u_0u_1u_2\cdots$ into finite words $\seq{u_n}$ such that for every morphism $\sigma\colon \Sigma^* \to G$ into a finite semigroup $G$, the sequence $\seq{\sigma(u_n)}$ is ultimately periodic.
This property is \emph{effective} if given $\sigma$, we can compute $a, b \in G^*$ such that $\sigma(\alpha) = \sigma(u_0)\sigma(u_1)\cdots = a b^\omega$.

\begin{theorem}
	If $\alpha \in \Sigma^*$ is effectively profinitely ultimately periodic, then the MSO theory of $\langle \nat; <, P_\alpha\rangle$ is decidable.
\end{theorem}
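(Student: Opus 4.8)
We want to decide, given a deterministic Muller automaton $\aut$ over $\Sigma$, whether $\aut$ accepts $\alpha$. The strategy is to reduce the infinite run of $\aut$ on $\alpha$ to a question about the finite semigroup $G_\aut$ associated with $\aut$, and then exploit the effective profinite ultimate periodicity of $\alpha$ with respect to the canonical morphism $\sigma \colon \Sigma^* \to G_\aut$.

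First I would recall the semigroup setup. The element $[u]_\aut \in G_\aut$ records, for every starting state $q$, the set of states visited while reading $u$ from $q$ and the final state reached. This is exactly the data needed to determine acceptance of an $\omega$-word: if $\alpha = u_0 u_1 u_2 \cdots$ and we set $s_n = \sigma(u_0 u_1 \cdots u_{n-1}) = \sigma(u_0)\cdots\sigma(u_{n-1}) \in G_\aut$, then the sequence of states $\aut(\alpha)$ is completely determined by the sequence $\seq{s_n}$ together with the initial state, and in particular the set of states occurring infinitely often in $\aut(\alpha)$ is determined by it. Since $\alpha$ is effectively profinitely ultimately periodic, applying the definition to the morphism $\sigma$ lets us compute words $a, b \in G_\aut^*$ (say of lengths $k$ and $\ell$, with $\ell \ge 1$) such that $\sigma(\alpha) = \sigma(u_0)\sigma(u_1)\cdots = a\, b^\omega$; here we may also arrange that $b$ corresponds to a genuinely repeating block, i.e. the prefix sums stabilise into an eventually periodic pattern.

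The next step is to read off acceptance from $a b^\omega$. Write $e$ for the idempotent power of the product of the letters of $b$: since $G_\aut$ is finite there is $t \ge 1$ with $(\sigma\text{-product of } b)^t$ idempotent, and by the Ramsey/idempotent argument standard in B\"uchi's theory, the infinite tail $b^\omega$ can be grouped into blocks each evaluating to this idempotent $e$. Concretely, let $g$ be the $G_\aut$-element obtained from the prefix $a$ followed by enough copies of $b$ to reach a state configuration that then repeats under multiplication by $e$; the pair $(g, e)$ is a "linked pair" in the sense of B\"uchi, and from it one extracts exactly which states of $\aut$ occur infinitely often along $\aut(\alpha)$. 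Checking whether this set of infinitely-occurring states satisfies the Muller acceptance condition of $\aut$ is then a finite computation. (For a Muller condition the infinitely-recurring set of states is precisely what acceptance depends on; for Rabin or parity conditions the same finite data suffices by the equivalences noted earlier.)

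I expect the main obstacle to be purely bookkeeping rather than conceptual: making precise the passage from the eventually periodic sequence of prefix-product elements $\seq{s_n}$ to a genuine linked pair $(g,e)$, and verifying that the set of states visited infinitely often in the actual run $\aut(\alpha)$ on $\alpha$ equals the set predicted by this linked pair. One has to be careful that the factorisation $\alpha = u_0 u_1 \cdots$ witnessing profinite ultimate periodicity need not align with the periodicity of $\seq{s_n}$ in any nice way, so one should argue at the level of $G_\aut$-elements — using that $\equiv_\aut$ is a congruence of finite index — rather than trying to manipulate the $u_n$ directly. Once the linked-pair extraction is set up correctly, decidability of $\mathsf{Acc}_\alpha$, and hence (by B\"uchi's equivalence recalled in the introduction) decidability of the MSO theory of $\langle \nat; <, P_\alpha\rangle$, follows immediately.
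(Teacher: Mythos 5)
Your proof is correct and takes essentially the same route as the paper: instantiate the profinite ultimate periodicity of $\alpha$ with the canonical morphism $\sigma(u) = [u]_\aut$ into the finite semigroup $G_\aut$, effectively compute $a,b$ with $\sigma(\alpha) = ab^\omega$, and extract the set of infinitely-visited states to check against the acceptance condition. The linked-pair/idempotent bookkeeping you describe is exactly the content of the step the paper summarises as ``it remains to extract from $a$ and $b$ the set $S$ of states visited infinitely often.''
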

\begin{proof}
	Recall that decidability of the MSO theory is equivalent to decidability of the acceptance problem for $\alpha$: given a deterministic automaton $\aut$, decide if $\aut$ accepts $\alpha$.
	Take $\sigma$ to be the morphism that maps each $u \in \Sigma^*$ to $[u]_\aut$.
	By the assumption on $\alpha$, we can effectively compute $a, b \in {(G_\aut)}^*$ such that $\sigma(\alpha) = a b^\omega$.
	It remains to extract from $a$ and $b$ the set $S$ of states that are visited infinitely often when $\aut$ reads $\alpha$, and check $S$ against the acceptance condition of $\aut$.
\end{proof}

All morphic words are effectively profinitely ultimately periodic \cite{carton-thomas-morphic-words}.
(In fact, by a closer inspection of the Ramsey theory argument used in \cite{rabinovich2007decidability} it can be shown that all words are profinitely ultimately periodic.)
Hence the MSO theory of $\langle \nat; <, P_\alpha\rangle$ for a morphic word~$\alpha$ is decidable.
Effectively profinitely ultimately periodic words also subsume all words~$\alpha$ for which  Elgot and Rabin \cite{elgot66_decid_undec_exten_secon_order_theor_succes} showed decidability of the MSO theory of ${\langle\nat;<,P_\alpha\rangle}$ using their \emph{contraction method}.
The factorial word  $\alpha_{n!}$, for example, is an effectively profinitely ultimately periodic word that is amenable to the approach of Elgot and Rabin.
The factorisation of $\alpha_{n!}$ that yields profinite ultimate periodicity is $u_0 = 0$ and for $n \ge 1$, $u_{2n-1} = 1$ and $u_{2n} = 0^{n!-(n-1)!}$.
Rabinovich (\cite{rabinovich2007decidability}, see also \cite{rabinovich2006decidable}) showed that, in fact, the MSO theory of $\langle \nat; < ,P_1,\ldots,P_m \rangle$ is decidable \emph{if and only if} $\alpha = \alpha_1\times \cdots \times \alpha_m$, where each $\alpha_i$ is the characteristic word of $P_i$, is effectively profinitely ultimately periodic.
However, 
if we do not have any \emph{a priori} information on the decidability of the MSO theory, this characterisation 
does not give us any means to determine whether the word $\alpha$ is effectively profinitely ultimately periodic or not.

Effectively profinitely ultimately periodic words are not known to be closed under products, which makes the approach of \cite{carton-thomas-morphic-words} inapplicable to the case of multiple predicates.
Let $\mathsf{Squares} = \{n^2 \st n \in \nat\}$ and $\mathsf{Cubes} = \{n^3 \st n \in \nat\}$.
As discussed above, the characteristic words of both predicates are morphic and hence the MSO theories of the structures $\langle \nat; <, \mathsf{Squares}\rangle$ and $\langle \nat; <, \mathsf{Cubes}\rangle$ are decidable.
However, decidability of the MSO theory of $\langle \nat; <, \mathsf{Squares},\mathsf{Cubes} \rangle$ is currently unknown.
Decidability of the latter theory is connected to finding the solutions of the famous Mordell equation $n^2 = m^3 + K$ for which Baker showed that when $K \ne 0$, the solutions satisfy $n,m < \exp((10^{10}|K|)^{10^4})$~\cite{baker1968contributions}. 
However, it is unclear whether Baker's result is sufficient to prove decidability of the full MSO theory.

\begin{op}
	Is the MSO theory of $\langle \nat; <, \mathsf{Squares},\mathsf{Cubes}\rangle$ decidable?
\end{op}

\section{Toric words}
\label{sec-toric}

Recall that $\torus$ is the set $\{z \in \com\st |z|=1\}$, viewed as an abelian group under multiplication.
A word $\alpha \in \Sigma^\omega$ is \emph{toric} if there exist $d > 0$, a collection $\Scal ={ \{S_a \st a \in \Sigma\}}$ of pairwise disjoint subsets of~$\torus^d$, and $\Gamma \in \torus^d$ with the following properties.
Each $S_a$ has finitely many connected components (in the Euclidean topology), and
for all $n \in \nat$ and $a \in \Sigma$, 
\[
\alpha(n) = a \Longleftrightarrow \Gamma^n \in S_a.
\]
In particular, $\Gamma^n \in \bigcup_{a \in \Sigma} S_a$ for all~$n$.
We say that $\alpha$ is generated by $(\Gamma, \Scal)$.
In the symbolic dynamics literature, $\alpha$ is referred to as the \emph{coding} of the orbit $\seq{\Gamma^n}$ with respect to the collection of sets $\Scal$.
We let $\Tcal$ denote the class of all toric words.

The purpose of the topological restriction that each $S_a$ must have finitely many connected components is to avoid the situation where every word is toric with $d = 1$.
Below we define further special subclasses of toric words that will help us better classify Sturmian words, certain morphic words, sign patterns of linear recurrence sequences, and so on.
\begin{itemize}
	\item[(a)] We let $\Tcal_O$ denote the class of toric words that are generated by $(\Gamma, \Scal)$ where each set in $\Scal$ is open in the Euclidean topology on~$\torus^d$.
	\item[(b)] The class $\Tcal_{\mathit{SA}}$ comprises all toric words generated by $(\Gamma, \Scal)$ where each set in $\Scal$ is an $\rel$-semialgebraic subset of~$\torus^d$. 
	\item[(c)] Finally, we let $\Tcal_{\mathit{SA}(\rat)}$ denote the set of all words generated by $(\Gamma, \Scal)$ such that $\Gamma \in (\torus \cap \alg)^d$, \ie $\Gamma$ has algebraic entries, and each set in $\Scal$ is $\rat$-semialgebraic.
\end{itemize}
Clearly, $\Tcal_{\mathit{SA}} \supseteq \Tcal_{\mathit{SA}(\rat)} $. 
A desirable property that the latter class has is that all operations we will need to perform on $\alpha \in \Tcal_{\mathit{SA}(\rat)}$ are effective, although $\Tcal_{\mathit{SA}(\rat)}$ is not the only subclass of $\Tcal_{\mathit{SA}}$ with this property.

We have already seen that the Tribonacci word, which is generated by the morphism ${1\to12}, {2\to 13}, {3\to 1}$ and the starting letter 1, belongs to $\Tcal_{O}$: it is generated by $(\Gamma, \Scal)$ where $\Gamma\in \torus^2$ and the sets in $\Scal$ constitute the Rauzy fractal.
We will later show that Sturmian words belong to $\Tcal_{\mathit{SA}}$, and the sign patterns of various linear recurrence sequences belong to $\Tcal_{\mathit{SA}(\rat)}$.

\subsection{Orbits in $\torus^d$}
\label{sec-orbits-in-Td}
In order to understand toric words, we have to understand the time steps at which the orbit $\mathcal{O}(\Gamma) \coloneqq \seq{\Gamma^n}$ of $\Gamma \in \torus^d$ visits a given subset of~$\torus^d$.
In this section we will show that unlike the discrete orbit $\mathcal{O}(\Gamma)$, its Euclidean closure $\torus_\Gamma \coloneqq \cl(\mathcal{O}(\Gamma))$ is $\rat$-semialgebraic and effectively computable under some assumptions on $\Gamma$.
Moreover, $\Ocal(\Gamma)$ visits every open subset of $\torus_\Gamma$ infinitely often.

The key to proving these results is the notion of a \emph{multiplicative relation}.
We say that $(a_1,\ldots, a_d) \in \intg^d$ is a multiplicative relation of $z = (z_1, \ldots, z_d)$,  $z \in (\com^\times)^d$ if $z_1^{a_1}\cdots z_d^{a_d} = 1$.
For such $z$, 
\[
G(z) \coloneqq \{(a_1,\ldots, a_d) \in \intg^d \mid z_1^{a_1}\cdots z_d^{a_d}=1\}
\]
is called the \emph{group of multiplicative relations} of $z$.
For all $z$, $G(z)$ is a free abelian group under addition with a basis containing at most $d$ vectors from~$\intg^d$.
If the entries of $z$ are algebraic, then such a basis can be effectively computed: by a theorem of Masser \cite{masser-mult-rel-bound}, $G(z)$ has a basis $v_1,\ldots,v_m$ of vectors satisfying $\Vert v_i\Vert_2  < B$ for all $i$, where $B$ is a bound that can be effectively computed from $z$. 
It remains to find a maximally linearly independent set of vectors of the form $a = (a_1,\ldots,a_d) \in \intg^d$ with the property that $z_1^{a_1}\cdots z_d^{a_d} = 1$ and $\Vert a\Vert_2  < B$ by enumeration. 

To describe $\torus_\Gamma$ we will employ Kronecker's theorem  in simultaneous Diophantine approximation. 
For $x,y \in \rel$, let $[\![x]\!]_y$ be the distance from $x$ to a nearest integer multiple of $y$.
Further write $[\![x]\!]$ for $[\![x]\!]_1$.
The following is a classical version of Kronecker's theorem \cite{gonek-kronecker}.
\begin{theorem}
	Let $x = (x_1,\ldots,x_d)\in \rel^d$ and $y = (y_1,\ldots,y_d) \in \rel^d$ be such that for all $b \in \intg^d$,
	\[
	b \cdot x \in \intg \Rightarrow b \cdot y \in \intg.
	\]
	For every $\epsilon>0$ there exist infinitely many values $n \in \nat$ satisfying
	\[
	\sum_{j=1}^d  [\![ nx_j - y_j  ]\!]  < \epsilon.
	\] 
\end{theorem}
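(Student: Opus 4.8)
The plan is to recognise this statement as the ``annihilator duality'' for closed subgroups of the torus. Identify $\torus^d$ with the additive group $\rel^d/\intg^d$ via the projection $\pi\colon \rel^d \to \rel^d/\intg^d$, and note that $\rho(\pi(s),\pi(t)) := \sum_{j=1}^d [\![ s_j - t_j ]\!]$ is a well-defined translation-invariant metric on $\rel^d/\intg^d$ inducing its Euclidean topology. So the desired conclusion ``there are infinitely many $n$ with $\sum_j [\![ nx_j - y_j ]\!] < \epsilon$'' is exactly the assertion that $\pi(y)$ lies in the closure of the tail orbit $\{\pi(nx) \st n \ge N\}$ for every $N$. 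Set $H := \cl(\{\pi(nx) \st n \in \nat\})$. Since $\{\pi(nx) \st n \in \nat\}$ is a sub-semigroup of the compact group $\rel^d/\intg^d$, its closure $H$ is a closed subgroup (a closed sub-semigroup of a compact group is a subgroup).

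Next I would compute the annihilator $H^\perp := \{ b \in \intg^d \st b\cdot t \in \intg \text{ whenever } \pi(t) \in H \}$. Because $b\cdot\pi(nx) \equiv n(b\cdot x) \pmod 1$ and $u \mapsto b\cdot u \bmod 1$ is continuous on $\rel^d/\intg^d$, a vector $b$ annihilates $H$ iff it annihilates the orbit iff $b\cdot x \in \intg$; thus $H^\perp = \{ b \in \intg^d \st b\cdot x \in \intg \}$, which is precisely the index set appearing in the hypothesis. The crux is the duality $(H^\perp)^\perp = H$, where $(H^\perp)^\perp := \{ \pi(t) \st b\cdot t \in \intg \text{ for all } b \in H^\perp \}$. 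The inclusion $H \subseteq (H^\perp)^\perp$ is immediate. For the converse, if $\pi(t) \notin H$ then $\pi(t)+H$ is a nonzero element of the compact abelian group $(\rel^d/\intg^d)/H$, so some character of $(\rel^d/\intg^d)/H$ is nontrivial on it; pulling back and using that every character of $\rel^d/\intg^d$ has the form $u \mapsto e^{2\pi\im\, b\cdot u}$ with $b \in \intg^d$, we obtain $b \in H^\perp$ with $b\cdot t \notin \intg$, i.e.\ $\pi(t) \notin (H^\perp)^\perp$. (This is Pontryagin duality for $\rel^d/\intg^d$, equivalently the classification of its closed subgroups; it can also be proven by hand via Urysohn's lemma together with Stone--Weierstrass to approximate by trigonometric polynomials.) The hypothesis of the theorem now reads: $b\cdot y \in \intg$ for every $b \in H^\perp$, i.e.\ $\pi(y) \in (H^\perp)^\perp = H$.

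Finally I would upgrade ``$\pi(y) \in H$'' to infinitely many witnesses: for each $N$, the tail $\{\pi(nx) \st n \ge N\}$ equals $\pi(Nx) + \{\pi(mx) \st m \in \nat\}$, so its closure is $\pi(Nx) + H = H$ (as $\pi(Nx) \in H$), which contains $\pi(y)$; hence there is $n \ge N$ with $\rho(\pi(nx),\pi(y)) = \sum_j [\![ nx_j - y_j ]\!] < \epsilon$, yielding infinitely many such $n$. The main obstacle is the duality step $(H^\perp)^\perp = H$, the sole non-elementary ingredient; everything else is soft topology and group theory. An alternative is to phrase the whole argument as Weyl equidistribution of $\seq{\pi(nx)}$ on the subtorus $H$, but this merely repackages the same content, and a direct geometry-of-numbers proof is possible but grows unwieldy for general $d$.
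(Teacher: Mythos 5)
The paper does not actually prove this statement: it is quoted verbatim as ``a classical version of Kronecker's theorem'' with a citation to the literature, so there is no in-paper argument to compare yours against. Judged on its own, your proof is correct and is one of the standard routes to Kronecker's theorem, namely via the structure of closed subgroups of $\rel^d/\intg^d$. Each step checks out: $\rho(\pi(s),\pi(t))=\sum_j [\![ s_j-t_j ]\!]$ is indeed a translation-invariant metric for the quotient topology; the closure $H$ of the orbit semigroup is a closed subgroup because a nonempty closed subsemigroup of a compact Hausdorff group is a group; the computation $H^\perp=\{b\in\intg^d \st b\cdot x\in\intg\}$ is right (continuity reduces annihilation of $H$ to annihilation of the orbit, and $n=1$ suffices); and the tail argument $\cl\{\pi(nx)\st n\ge N\}=\pi(Nx)+H=H$ correctly upgrades membership in $H$ to infinitely many witnesses. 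The one genuinely non-elementary ingredient is the biduality $(H^\perp)^\perp=H$, which you correctly isolate and justify by the existence of enough characters on the compact quotient $(\rel^d/\intg^d)/H$ together with the classification of characters of the torus; citing Pontryagin duality here is legitimate, though it means your proof is not self-contained at exactly that point. What this approach buys, compared with the classical inductive or geometry-of-numbers proofs, is a clean conceptual statement that also explains the lemma the paper actually uses it for (\Cref{toric-Kronecker-torus}): the orbit closure is precisely the common zero set of the multiplicative relations, i.e.\ $(H^\perp)^\perp$.
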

Writing $X = (e^{\im 2\pi x_1}, \ldots, e^{\im 2\pi x_d})$ and $Y = (e^{\im 2\pi y_1}, \ldots, e^{\im 2\pi y_d})$, the condition that for all $b \in \intg^d$, $b \cdot x \in \intg \Rightarrow b \cdot y \in \intg$
is equivalent to $G(X) \subseteq G(Y)$. 
That is, ``every multiplicative relation of $X$ is also a multiplicative relation of $Y$''.
We can now prove the main result of this section.

\begin{lemma}
	\label{toric-Kronecker-torus}
	Let $\Gamma = (\gamma_1,\ldots,\gamma_d) \in \torus^d$.
	\begin{enumerate}
		\item[(a)] If $z \in \torus^d$ is such that $G(\Gamma) \subseteq G(z)$, then for every open $O \subset \torus_\Gamma$ containing $z$ there exist infinitely many values $n \in \nat$ such that $\Gamma^n \in O$.
		\item[(b)] $\torus_\Gamma$ is equal to
		$
		\{z \in \torus^d\st G(\Gamma) \subseteq G(z)\}, is
		$
		$\rat$-semialgebraic, and is effectively computable given a basis of $G(\Gamma)$.
	\end{enumerate}
\end{lemma}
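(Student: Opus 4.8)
The plan is to establish the set identity in part~(b) first, obtain part~(a) as a by-product of the same estimate, and then read off semialgebraicity and effectiveness. Throughout, write $\gamma_j = e^{\im 2\pi x_j}$ and, for a candidate point $z = (z_1,\ldots,z_d) \in \torus^d$, $z_j = e^{\im 2\pi y_j}$, with $x = (x_1,\ldots,x_d), y = (y_1,\ldots,y_d) \in \rel^d$. As observed just before the lemma, the hypothesis of Kronecker's theorem for the pair $(x,y)$ is exactly the condition $G(\Gamma) \subseteq G(z)$, so Kronecker's theorem is the engine of the argument, the rest being bookkeeping between the multiplicative picture on $\torus^d$ and the additive picture on $\rel^d/\intg^d$.

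\emph{Easy inclusion $\torus_\Gamma \subseteq \{z : G(\Gamma) \subseteq G(z)\}$.} I would fix $a = (a_1,\ldots,a_d) \in G(\Gamma)$ and consider the monomial map $w \mapsto w_1^{a_1}\cdots w_d^{a_d}$, which is a continuous map $\torus^d \to \torus$ (on $\torus^d$ a negative exponent just means the corresponding inverse). Since $\prod_j \gamma_j^{a_j} = 1$, we get $\prod_j (\gamma_j^n)^{a_j} = (\prod_j \gamma_j^{a_j})^n = 1$ for every $n$, so this continuous map is identically $1$ on the orbit $\seq{\Gamma^n}$, hence identically $1$ on its closure $\torus_\Gamma$. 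Thus every $z \in \torus_\Gamma$ satisfies $a \in G(z)$; letting $a$ range over $G(\Gamma)$ gives $G(\Gamma) \subseteq G(z)$.

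\emph{Reverse inclusion and part~(a).} Suppose $G(\Gamma) \subseteq G(z)$. Kronecker's theorem then gives, for every $\epsilon>0$, infinitely many $n$ with $\sum_{j=1}^d [\![ nx_j - y_j ]\!] < \epsilon$; for each such $n$, using $|\gamma_j^n - z_j| = |e^{\im 2\pi(nx_j - y_j)} - 1| \le 2\pi\,[\![ nx_j - y_j ]\!]$ and $\sqrt{\sum t_j^2}\le \sum t_j$ for $t_j \ge 0$, one gets $\Vert \Gamma^n - z\Vert_2 \le 2\pi\epsilon$. Hence $z$ lies in the closure of the orbit, i.e.\ $z \in \torus_\Gamma$, which completes the proof of the identity in~(b). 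For part~(a), given an open $O \subseteq \torus_\Gamma$ containing $z$, write $O = U \cap \torus_\Gamma$ with $U$ open in $\torus^d$, pick $\delta>0$ with the Euclidean $\delta$-ball about $z$ contained in $U$, and apply the above with $\epsilon = \delta/(2\pi)$: the infinitely many resulting $n$ satisfy $\Gamma^n \in U$, and since $\Gamma^n \in \torus_\Gamma$ always, in fact $\Gamma^n \in O$.

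\emph{Semialgebraicity and effectiveness.} Fix a basis $v_1,\ldots,v_m \in \intg^d$ of the free abelian group $G(\Gamma)$. Then $G(\Gamma) \subseteq G(z)$ holds iff $z_1^{(v_i)_1}\cdots z_d^{(v_i)_d} = 1$ for $i = 1,\ldots,m$. Writing $z_j = p_j + \im q_j$ with $p_j,q_j \in \rel$ and clearing negative exponents via $z_j^{-1} = \overline{z_j} = p_j - \im q_j$, each such monomial becomes a polynomial in $p_1,q_1,\ldots,p_d,q_d$ with integer coefficients, so the equation ``monomial $=1$'' is the conjunction $\Rea(\cdot) = 1 \:\land\: \Ima(\cdot) = 0$ of two integer-coefficient polynomial equations; adjoining the equations $p_j^2 + q_j^2 = 1$ defining $\torus^d$ exhibits $\torus_\Gamma = \{z \in \torus^d : G(\Gamma) \subseteq G(z)\}$ as a $\rat$-semialgebraic subset of $\torus^d$, and this finite list of polynomials is produced mechanically from $v_1,\ldots,v_m$, so the description is effective. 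The only substantive step is the appeal to Kronecker's theorem; the one point needing a little care is that in part~(a) the set $O$ is open in the subspace topology on $\torus_\Gamma$, which is why one passes to an ambient open $U \subseteq \torus^d$ before invoking the approximation estimate.
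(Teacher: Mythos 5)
Your proposal is correct and follows essentially the same route as the paper: Kronecker's theorem supplies the density of the orbit in $\{z \st G(\Gamma) \subseteq G(z)\}$ via the estimate $|\gamma_j^n - z_j| \le 2\pi [\![nx_j - y_j]\!]$, the easy inclusion comes from the set being closed and containing the orbit (your continuity-of-monomials phrasing is the same fact), and semialgebraicity is read off from a basis of $G(\Gamma)$. Your explicit handling of the subspace topology in part~(a) is a minor point of extra care that the paper glosses over, but it is not a different argument.
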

\begin{proof}
	Consider $z = (z_1, \ldots, z_d) \in \torus^d$ with $G(\Gamma) \subseteq G(z)$.
	Define $x_j = \frac{\Log(\gamma_j)}{\im2\pi}$ and $y_j = \frac{\Log(z_j)}{\im2\pi}$ for $1 \le j \le d$. 
	We have $x_j, y_j \in (-1/2, 1/2]$.
	For all $n \in \nat$, 
	\begin{align*}
		\Vert \Gamma^n - z\Vert _1 &= \sum_{j=1}^d |\gamma_j^n-z_j| \\
		&\leq \sum_{j=1}^d |\Log(\gamma_j^n/z_j)|\\
		&= \sum_{j=1}^d [\![n \Log(\gamma_j)/\im- \Log(z_j)/\im]\!]_{2\pi} \\
		&= 2\pi \sum_{j=1}^d [\![nx_j - y_j ]\!]
	\end{align*}
	where the last equality follows from the fact that $[\![x]\!]_{2\pi} = 2\pi[\![x/(2\pi)]\!]$ for all $x \in \rel$.
	Applying Kronecker's theorem, for each $\epsilon > 0$ there exist infinitely many values $n$ such that $\Vert \Gamma^n - z\Vert _1 < \epsilon$.
	This proves (a).
	
	To prove (b), let $V = \{v_1, \ldots, v_m\}$ be a basis of $G(\Gamma)$, where for all ${1 \le k \le m}$, $v_k = {(v_{k,1}, \ldots, v_{k,d})}$.
	Since for $z= (z_1, \ldots, z_d)$,
	\[
	G(\Gamma) \subseteq G(z) \quad\Longleftrightarrow\quad \bigwedge_{k=1}^m z_1^{v_{k,1}} \cdots z_d^{v_{k,d}} = 1,
	\]
	the set $\{z \in \torus^d\st G(\Gamma) \subseteq G(z)\}$ is closed and $\rat$-semialgebraic.
	It moreover contains the orbit $\Ocal(\Gamma)$ as $G(\Gamma) \subseteq G(\Gamma^n)$ for all $n \in \nat$.
	Invoking (a), the orbit $\Ocal(\Gamma)$ is dense in $\{z \in \torus^d\st G(\Gamma) \subseteq G(z)\}$.
	Hence the latter must be exactly the closure of~$\Ocal(\Gamma)$.
\end{proof}


\subsection{Closure properties of toric words}
We now investigate closure properties of toric words under various word operations.
First we will show that unlike the class of almost-periodic words, all classes of toric words that we have defined are closed under products.

\begin{theorem}
	\label{toric-product}
	Let $\alpha_0, \ldots, \alpha_{L-1} \in \Kcal$, where $\Kcal$ is one of $\Tcal,\Tcal_O, \Tcal_{\mathit{SA}}, \Tcal_{\mathit{SA}(\mathbb{Q})}$.
	The product word $\alpha = \alpha_0 \times \cdots \times \alpha_{L-1}$ also belongs to $\Kcal$. 
\end{theorem}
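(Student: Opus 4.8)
The plan is to take the generating data of each factor word and glue it together on a product torus. Concretely, suppose each $\alpha_i \in \Sigma_i^\omega$ is generated by $(\Gamma_i, \Scal_i)$ with $\Gamma_i \in \torus^{d_i}$ and $\Scal_i = \{S_a^{(i)} \st a \in \Sigma_i\}$ a collection of pairwise disjoint subsets of $\torus^{d_i}$, each with finitely many connected components. Set $d = d_0 + \ldots + d_{L-1}$ and $\Gamma = (\Gamma_0, \ldots, \Gamma_{L-1}) \in \torus^d$, so that $\Gamma^n = (\Gamma_0^n, \ldots, \Gamma_{L-1}^n)$ under the obvious identification $\torus^d \cong \torus^{d_0} \times \ldots \times \torus^{d_{L-1}}$. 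For a letter $\mathbf{a} = (a_0, \ldots, a_{L-1})$ of the product alphabet $\Sigma = \Sigma_0 \times \ldots \times \Sigma_{L-1}$, define $S_\mathbf{a} = S_{a_0}^{(0)} \times \ldots \times S_{a_{L-1}}^{(L-1)}$. Then $\alpha(n) = \mathbf{a}$ iff $\alpha_i(n) = a_i$ for every $i$ iff $\Gamma_i^n \in S_{a_i}^{(i)}$ for every $i$ iff $\Gamma^n \in S_\mathbf{a}$, which is exactly what is required.

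The remaining work is to check that the collection $\Scal = \{S_\mathbf{a} \st \mathbf{a} \in \Sigma\}$ has the right properties for each of the four classes. First, pairwise disjointness: if $\mathbf{a} \neq \mathbf{b}$ they differ in some coordinate $i$, and $S_{a_i}^{(i)} \cap S_{b_i}^{(i)} = \emptyset$ forces $S_\mathbf{a} \cap S_\mathbf{b} = \emptyset$. Second, the topological restriction: a product of two sets each having finitely many connected components again has finitely many connected components (the components of $A \times B$ are exactly the products of a component of $A$ with a component of $B$), and this extends to finite products by induction, so $S_\mathbf{a}$ always satisfies the defining restriction of $\Tcal$. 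Third, for $\Tcal_O$: a finite product of open sets is open in the product topology, which coincides with the Euclidean topology on $\torus^d$. Fourth, for $\Tcal_{SA}$ and $\Tcal_{SA(\rat)}$: a finite Cartesian product of semialgebraic sets is semialgebraic (conjunction of the defining polynomial conditions in disjoint blocks of variables), and the coefficient field is preserved, so if each $S_{a_i}^{(i)}$ is $\rel$-semialgebraic (resp. $\rat$-semialgebraic) then so is $S_\mathbf{a}$; moreover if each $\Gamma_i$ has algebraic entries then so does $\Gamma$. In the $\Tcal_{SA(\rat)}$ case one should also note that $\bigcup_\mathbf{a} S_\mathbf{a} = \prod_i \left( \bigcup_{a} S_a^{(i)} \right)$ contains $\Ocal(\Gamma)$ since each factor union contains $\Ocal(\Gamma_i)$, so the ``covering'' condition $\Gamma^n \in \bigcup_\mathbf{a} S_\mathbf{a}$ holds automatically.

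I do not anticipate a genuine obstacle here; the proof is essentially a bookkeeping argument, and the only mildly delicate point is the elementary topological fact that finitely many connected components is preserved under finite products, which one might want to state explicitly (or cite) rather than leave to the reader. It is worth remarking, as contrast to the preceding discussion of almost-periodic words, that this closure is exactly what makes toric words amenable to handling multiple predicates at once, whereas the product of two effectively almost-periodic words need not be almost-periodic.
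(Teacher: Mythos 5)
Your proof is correct and takes essentially the same route as the paper's: concatenate the rotation vectors into $\Gamma \in \torus^{d_0+\ldots+d_{L-1}}$ and take Cartesian products of the target sets, then observe that openness and ($\rat$- or $\rel$-)semialgebraicity are preserved. The additional checks you spell out (disjointness, finitely many connected components under products, the covering condition) are left implicit in the paper but are exactly the right bookkeeping.
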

\begin{proof}
	Suppose each $\alpha_i \in \Sigma_i^\omega$ and is generated by $(\Gamma_i, \{S^{(i)}_a\st a \in \Sigma_i\})$, where $\Gamma_i \in \torus^{d_i}$.
	Let $\Sigma$ be the product alphabet $\Sigma_0\times \cdots\times \Sigma_{L-1}$, noting that $\alpha \in \Sigma^\omega$.
	Further let $d= d_0 + \cdots +d_{L-1}$ and $\Gamma = \prod_{r = 0}^{L-1} \Gamma_i \in \torus^{d}$. 
	For each letter $b = (a_0,\ldots,a_{L-1}) \in \Sigma$, define $S_{b} = \prod_{r=0}^{L-1} S^{(r)}_{a_r}$.
	The word $\alpha$ is toric and generated by $(\Gamma, \{S_b \st b \in \Sigma\})$.
	It remains to observe that if every $S^{(i)}_a$ is open, or $\K$-semialgebraic for $\K = \rat$ or $\K = \rel$, then the same applies to $S_b$ for every $b \in \Sigma$.
\end{proof}

The classes of toric words we consider are also closed under applications of $k$-uniform morphisms, \ie $\tau\st \Sigma_1 \to \Sigma_2$ for which $|\tau(a)| = k$ for all $a \in \Sigma_1$.
\begin{theorem}
	Let $\alpha \in \Kcal$, where $\Kcal$ is one of $\Tcal,\Tcal_O, \Tcal_{\mathit{SA}}, \Tcal_{\mathit{SA}(\mathbb{Q})}$.
	Suppose $\alpha \in \Sigma_1^\omega$, and let $\tau\colon \Sigma_1 \to \Sigma_2$ be a $k$-uniform morphism.
	The word $\beta \coloneqq \tau(\alpha)$ also belongs to $\Kcal$.
\end{theorem}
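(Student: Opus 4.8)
The plan is to realize $\beta$ directly as the coding of an orbit in one extra torus dimension: a root of unity will keep track of the position modulo $k$, and a $k$-th root of $\Gamma$ will keep track of the ``slow clock'' $n=\lfloor m/k\rfloor$. Suppose $\alpha\in\Sigma_1^\omega$ is generated by $(\Gamma,\{S_a\st a\in\Sigma_1\})$ with $\Gamma\in\torus^d$. First I would fix a $k$-th root $\Gamma'\in\torus^d$ of $\Gamma$, i.e.\ $(\Gamma')^k=\Gamma$, together with $\zeta=e^{\im2\pi/k}$, and set $\Delta=(\Gamma',\zeta)\in\torus^{d+1}$. Writing $m=kn+r$ with $0\le r<k$, one computes $\Delta^m=(\Gamma^n(\Gamma')^r,\,\zeta^r)$; thus the last coordinate $\zeta^r$ records $r$, and once $r$ is known the remaining coordinates recover $\Gamma^n$ by a fixed rotation. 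Next I would choose, for each $r\in\{0,\dots,k-1\}$, a semialgebraic open neighbourhood $U_r$ of $\zeta^r$ in $\torus$ (concretely, the intersection of $\torus$ with a small rational box around $\zeta^r$) such that the $U_r$ are pairwise disjoint; since the $k$-th roots of unity are distinct points, this is possible, and each $U_r$ is open, $\rat$-semialgebraic, has finitely many connected components, and contains $\zeta^s$ only for $s=r$. For every letter $b\in\Sigma_2$ I then define
\[
	T_b \;=\; \bigcup_{r=0}^{k-1}\;\bigcup_{\substack{a\in\Sigma_1\\ \tau(a)_r=b}} \bigl((\Gamma')^r S_a\bigr)\times U_r ,
\]
where $\tau(a)_r$ is the $r$-th letter of $\tau(a)$ (so $\beta(kn+r)=\tau(\alpha(n))_r$) and $(\Gamma')^r S_a=\{(\Gamma')^r w\st w\in S_a\}$.

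The verification proceeds in three steps. \emph{Correctness}: for $m=kn+r$, the point $\Delta^m=(\Gamma^n(\Gamma')^r,\zeta^r)$ lies in $T_b$ iff $\zeta^r\in U_{r'}$ for some term with index $r'$, which by pairwise disjointness of the $U_{r'}$ forces $r'=r$, and $\Gamma^n(\Gamma')^r\in(\Gamma')^r S_a$ for some $a$ with $\tau(a)_r=b$; the latter says $\Gamma^n\in S_a$, i.e.\ $\alpha(n)=a$, so altogether $\Delta^m\in T_b$ iff $\tau(\alpha(n))_r=b$ iff $\beta(m)=b$. In particular every orbit point $\Delta^m$ lies in $T_{\beta(m)}\subseteq\bigcup_b T_b$. \emph{Pairwise disjointness}: if $(w,z)\in T_b\cap T_{b'}$ then $z$ lies in some $U_r$ and some $U_{r'}$, forcing $r=r'$; then $w\in(\Gamma')^r S_a\cap(\Gamma')^r S_{a'}=(\Gamma')^r(S_a\cap S_{a'})$, forcing $a=a'$ since the $S_a$ are pairwise disjoint; hence $b=\tau(a)_r=b'$. \emph{Class membership}: a rotation $w\mapsto(\Gamma')^r w$ of $\torus^d$ is a homeomorphism, so it preserves ``finitely many connected components'' and ``open'', and it is a real-linear coordinate change, so it preserves being $\rel$-semialgebraic; products with $U_r$ and finite unions preserve all three properties. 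This already gives the result for $\Kcal\in\{\Tcal,\Tcal_O,\Tcal_{SA}\}$.

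The delicate case is $\Kcal=\Tcal_{SA(\rat)}$. There $\Gamma$ has algebraic entries, hence so do $\Gamma'$ (roots of algebraic numbers) and $\zeta$, and $\Delta\in(\torus\cap\alg)^{d+1}$; moreover $U_r$ is $\rat$-semialgebraic by construction. The subtlety is that the rotation $(\Gamma')^r S_a$ of a $\rat$-semialgebraic set by the algebraic but in general \emph{irrational} element $(\Gamma')^r$ is, prima facie, only $\ralg$-semialgebraic. The fact I would invoke to close this gap is that every $\ralg$-semialgebraic subset of $\rel^N$ is in fact $\rat$-semialgebraic: each defining real-algebraic constant is rational-definable (as, say, the $j$-th real root of a rational polynomial, with ordering expressed by a rational first-order formula), so replacing the constants by existentially quantified variables and applying Tarski--Seidenberg quantifier elimination yields an equivalent quantifier-free rational formula. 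Hence each $T_b$ is $\rat$-semialgebraic, and $\beta\in\Tcal_{SA(\rat)}$. I expect this last point about algebraic rotations to be the only real obstacle; the rest is the routine bookkeeping of the correspondence $m\leftrightarrow(n,r)$.
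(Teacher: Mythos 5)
Your proof is correct and takes essentially the same route as the paper's: slow the rotation down by passing to a $k$-th root of $\Gamma$, adjoin the coordinate $\zeta=e^{\im 2\pi/k}$ as a counter modulo $k$, and let the targets be unions of rotated copies of the $S_a$ crossed with disjoint neighbourhoods of the $k$-th roots of unity. Your added care in the $\Tcal_{SA(\rat)}$ case (reducing $\ralg$-semialgebraic to $\rat$-semialgebraic via definability of real algebraic constants and quantifier elimination) addresses a point the paper glosses over, and your $(\Gamma')^r S_a$ is the correct form of the rotated set that the paper's displayed definition writes, apparently by a typo, as $\Gamma^{j}S_a$.
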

\begin{proof}
	Suppose $\alpha$ is generated by $(\Gamma, \{S_a\st a \in \Sigma_1\})$.
	The idea is to ``slow down $\Gamma$ by a factor of $k$'' and ``add a counter modulo $k$''.
	Let $\Gamma=(\gamma_1,\ldots,\gamma_d)$, $\lambda_j = e^{\im \Log (\gamma_j)/k}$ for $1 \le j \le d$, and $\Lambda = (\lambda_1,\ldots, \lambda_d)$.
	Observe that $\gamma_j = \lambda_j^k$ for all $j$.
	Further let $\omega= e^{\im 2\pi/k}$ and $B_j = \{z \in \com \st |z - \omega^j| < 1/k\}$.
	The sets $B_0, \ldots, B_{k-1}$ are open, $\rat$-semialgebraic and pairwise disjoint.
	Moreover, $\omega^n \in B_j$ if and only if $n \equiv j \pmod{k}$.
	
	For a letter $b \in \Sigma_2$, define
	\[
	S_b = \bigcup_{
		\substack{a \in \Sigma_1,\:\: 0\le j < |\tau(a)|
			\\\tau(a)(j) = b}} \Lambda^{j} 
		S_a \times B_j.  
	\]
	We will show that for all $n \in \nat$ and $b \in \Sigma_2$, $\beta(n) = b$ if and only if $\Lambda_1^n \in S_b$, where $\Lambda_1 = (\lambda_1,\ldots,\lambda_d,\omega)$.
	Fix $n = qk + r$ where $0 \le r < k$.
	By construction,
	\[
	\Lambda_1^n \in S_b \Longleftrightarrow \exists a,j \st 
	\tau(a)(j) = b,\: \Lambda^n \in \Gamma^{j}S_a, \textrm{ and } \omega^n \in B_j.
	\]
	Recall that $\omega^n \in B_j$ is equivalent to $j = r$.
	Hence
	\[
	\Lambda^n \in \Gamma^{j}S_a \:\Longleftrightarrow\: \Gamma^{-j}\Lambda^n \in S_a \:\Longleftrightarrow\: \Gamma^q \in S_a \:\Longleftrightarrow\: \alpha(q) = a.
	\] 
	Above we used the fact that $\Gamma = \Lambda^{k}$.
	We have thus shown that $\Lambda_1^n \in S_b$ if and only if $\alpha(q) = a$ for some $a \in \Sigma_1$ satisfying $\tau(a)(r) = b$.
	Since $\beta(n) = \tau(\alpha(q))(r)$, it follows that $\Gamma^n \in S_b$ if and only if $\beta(n) = b$.
	That is, $\beta$ is the toric word generated by $(\Lambda_1, \{S_b\st b \in \Sigma_2\})$.
\end{proof}

\begin{corollary}
\label{toric-interleaving}
	The merge $\alpha$ of $\alpha_0,\ldots,\alpha_{L-1} \in \Kcal$, where $\Kcal$ is one of the classes of toric words as above, also belongs to $\Kcal$. 
\end{corollary}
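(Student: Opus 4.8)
The plan is to reduce the merge operation to a combination of two operations already shown to preserve each class $\Kcal$: products (closed by the product theorem) and applications of uniform morphisms (closed by \Cref{toric-product}). The key observation is that merging $L$ words and applying an $L$-uniform morphism are, in a precise sense, dual: a $k$-uniform morphism compresses blocks of $k$ letters into single letters, while a merge takes single letters from $L$ sources and distributes them cyclically. So I expect to write the merge $\alpha$ of $\alpha_0, \ldots, \alpha_{L-1}$ as the image, under a suitable $L$-uniform morphism $\tau$, of the product $\beta \coloneqq \alpha_0 \times \cdots \times \alpha_{L-1}$, after accounting for the index shift.

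First I would set up notation: $\beta \in (\Sigma_0 \times \cdots \times \Sigma_{L-1})^\omega$ is the product word, so $\beta(q) = (\alpha_0(q), \ldots, \alpha_{L-1}(q))$; by the product theorem, $\beta \in \Kcal$. Now I want an infinite word in $\Kcal$ whose $n$th letter encodes the pair $(q, r)$ with $n = qL + r$, because the desired letter of the merge is $\alpha(nL+r) = \alpha_r(q)$. The natural move is to take $\tau\colon (\Sigma_0 \times \cdots \times \Sigma_{L-1}) \to (\bigcup_r \Sigma_r)^*$ to be the $L$-uniform morphism sending a tuple $(a_0,\ldots,a_{L-1})$ to the length-$L$ word $a_0 a_1 \cdots a_{L-1}$. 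Then $\tau(\beta)$ has, at position $qL + r$, exactly the letter $\alpha_r(q)$, which is precisely $\alpha(qL+r)$. Hence $\alpha = \tau(\beta)$ on the nose — no index shift correction is even needed. By \Cref{toric-product}, since $\beta \in \Kcal$ and $\tau$ is $L$-uniform, $\tau(\beta) = \alpha \in \Kcal$, and we are done.

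The only point requiring minor care is that $\tau$ as defined maps into $\Sigma_2^*$ with $\Sigma_2 = \bigcup_{r} \Sigma_r$, a union which may not be disjoint, but that is harmless: \Cref{toric-product} is stated for arbitrary $k$-uniform morphisms $\tau\colon \Sigma_1 \to \Sigma_2$ with no disjointness hypothesis on the target alphabet, and the merge of words over possibly-overlapping alphabets is by definition a word over $\bigcup_r \Sigma_r$. I anticipate no real obstacle here; the entire content is recognizing the merge as "product then uniform morphism," and the verification $\alpha(qL+r) = \tau(\beta)(qL+r) = \alpha_r(q)$ is immediate from the definitions of product, of $\tau$, and of merge. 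One could also note this gives effectiveness for $\Kcal = \Tcal_{SA(\rat)}$ for free, since both the product construction and the uniform-morphism construction were shown to be effective on that class.
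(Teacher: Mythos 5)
Your proposal is correct and follows exactly the paper's own argument: write the merge as the image of the product $\alpha_0 \times \cdots \times \alpha_{L-1}$ under the $L$-uniform morphism sending $(a_0,\ldots,a_{L-1})$ to the concatenation $a_0\cdots a_{L-1}$, then invoke closure under products and under uniform morphisms. The index verification $\tau(\beta)(qL+r) = \alpha_r(q) = \alpha(qL+r)$ is exactly the "Observe that..." step the paper leaves implicit, so there is nothing to add.
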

\begin{proof}
	Suppose $\alpha_i \in \Sigma_i$.
	Let $\tau$ be the $L$-uniform morphism that maps each $(a_0,\ldots,a_{L-1}) \in \Sigma_0\times \cdots\times \Sigma_{L-1}$ to the concatenation of $a_0, \ldots, a_{L-1}$.
	Observe that $\alpha = \tau(\alpha_0 \times \cdots \times \alpha_{L-1})$.
\end{proof}

Finally, we show that our classes of toric words  are closed under taking suffixes.
This property is shared with the classes of almost-periodic words.
\begin{theorem}
	\label{toric-closure-suffixes}
	All four classes of toric words are closed under taking suffixes.
\end{theorem}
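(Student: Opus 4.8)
The plan is to produce the suffix $\alpha[N,\infty)$ from the \emph{same} generating point $\Gamma$, absorbing the shift by $N$ into a rotation of the target sets. Suppose $\alpha \in \Sigma^\omega$ is generated by $(\Gamma, \Scal)$ with $\Scal = \{S_a \st a \in \Sigma\}$ and $\Gamma = (\gamma_1,\ldots,\gamma_d) \in \torus^d$, and fix $N \in \nat$. For each $a \in \Sigma$, put $S'_a \coloneqq \Gamma^{-N} S_a = \{\Gamma^{-N} z \st z \in S_a\}$. Then $\alpha[N,\infty)(n) = \alpha(N+n) = a$ holds iff $\Gamma^{N+n} \in S_a$ iff $\Gamma^n \in \Gamma^{-N} S_a = S'_a$, so $\alpha[N,\infty)$ is generated by $(\Gamma, \{S'_a \st a \in \Sigma\})$. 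The whole argument then reduces to checking that $\{S'_a \st a\in\Sigma\}$ inherits whichever structural condition defines the class $\Kcal$ under consideration.

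The key observation is that $z \mapsto \Gamma^{-N} z$ is a self-homeomorphism of $\torus^d$ (translation in a compact topological group), with continuous inverse $z \mapsto \Gamma^N z$. Hence the $S'_a$ remain pairwise disjoint, each $S'_a$ has exactly the same finite number of connected components as $S_a$, and $\Gamma^n \in \bigcup_{a} S'_a$ for all $n$; this settles $\Kcal = \Tcal$. A homeomorphism preserves openness, so if every $S_a$ is open then so is every $S'_a$, settling $\Kcal = \Tcal_O$. For $\Kcal = \Tcal_{SA}$ I would pass to the real coordinates $(x_1,y_1,\ldots,x_d,y_d)$ of $\torus^d$, in which multiplication by $\Gamma^{-N}$ acts as an invertible linear map whose block on the $j$th pair $(x_j,y_j)$ is the planar rotation with entries $\Rea(\gamma_j^{-N})$ and $\Ima(\gamma_j^{-N})$; the real-coordinate description of $S'_a$ is then obtained by substituting linear forms into the polynomial (in)equalities defining $S_a$, which keeps the set semialgebraic over whatever subfield of $\rel$ contains those linear forms' coefficients. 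For coefficients in $\rel$ this gives $\Kcal = \Tcal_{SA}$.

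The one point that needs care is $\Kcal = \Tcal_{SA(\rat)}$. Here $\Gamma$ is unchanged and still has algebraic entries (powers and complex conjugates, hence inverses on $\torus$, of algebraic numbers are algebraic), but the coefficients of the linear change of variables above are real and imaginary parts of the $\gamma_j^{-N}$, which lie in $\ralg$ and need not be rational; so the substitution only exhibits each $S'_a$ as an $\ralg$-semialgebraic set. To close this gap I would invoke the standard fact that every $\ralg$-semialgebraic subset of $\rel^m$ is already $\rat$-semialgebraic: replace each algebraic coefficient $\theta$ occurring in the definition by a fresh variable $t$ subject to $q(t) = 0 \wedge l < t < r$, where $q \in \rat[t]$ is the minimal polynomial of $\theta$ and $l, r \in \rat$ isolate $\theta$ among the roots of $q$, and then eliminate the new variables via Tarski--Seidenberg quantifier elimination over $\rat$ to obtain a quantifier-free definition with rational coefficients. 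Since the $\gamma_j$ are presented as algebraic numbers, the minimal polynomials and isolating intervals of the relevant real algebraic numbers are computable, so this step is effective too. This conversion from $\ralg$-semialgebraic back to $\rat$-semialgebraic is the main obstacle; everything else is routine transport of the four defining properties along a single fixed rotation of $\torus^d$.
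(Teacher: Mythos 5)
Your proposal is correct and takes essentially the same approach as the paper: the paper's proof is the single observation that $\alpha[N,\infty)$ is generated by $(\Gamma, \{\Gamma^{-N}S_a \st a \in \Sigma\})$, and your argument is exactly this translation of the target sets. Your additional care for the $\Tcal_{SA(\rat)}$ case --- converting the $\ralg$-semialgebraic description back to a $\rat$-semialgebraic one via minimal polynomials and quantifier elimination --- correctly fills in a detail the paper leaves implicit.
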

\begin{proof}
	If $\alpha$ is generated by $(\Gamma, \{S_a\st a \in \Sigma\})$, then $\alpha[N,\infty)$ is generated by $(\Gamma, \{\Gamma^{-N}S_a\st a \in \Sigma\})$.
\end{proof}

\subsection{Almost-periodicity of toric words}
\label{sec-toric-ap}

We will now show that toric words belonging to the classes $\Tcal_O$ and $\Tcal_{\mathit{SA}}$ are almost-periodic, albeit for somewhat different reasons.
The proof for the former class is topological, whereas the proof for $\Tcal_{\mathit{SA}}$ relies on the Skolem-Mahler-Lech theorem for linear recurrence sequences.
Combined with closure under products, almost periodicity of toric words will allow us to apply Sem\"enov's theorem to the problem of deciding the MSO theory of $\langle \nat; <, P_1, \ldots, P_m\rangle$, where each $P_i$ is a predicate associated with a toric word.

\begin{theorem}
	\label{toric-TO-strongly-AP}
	Every $\alpha \in \Tcal_O$ is strongly almost-periodic.
\end{theorem}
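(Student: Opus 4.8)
The plan is to fix a toric word $\alpha \in \Tcal_O$ generated by $(\Gamma, \Scal)$ with each $S_a \subseteq \torus^d$ open, and to verify the two defining properties of strong almost-periodicity: (i) every factor $u$ of $\alpha$ that occurs at all occurs infinitely often, and (ii) there is a uniform bound $R(u)$ on return times. The central tool is \Cref{toric-Kronecker-torus}: the orbit $\Ocal(\Gamma)$ is dense in $\torus_\Gamma = \{z : G(\Gamma) \subseteq G(z)\}$, and visits every (relatively) open subset of $\torus_\Gamma$ infinitely often. First I would restrict all sets to $\torus_\Gamma$: since $\Gamma^n \in \torus_\Gamma$ for all $n$, we may replace each $S_a$ by $S_a \cap \torus_\Gamma$ without changing $\alpha$, and these are open in $\torus_\Gamma$. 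Note $\torus_\Gamma$ is a compact subgroup of $\torus^d$.

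For a factor $u = \alpha(m)\cdots\alpha(m+k-1)$ of length $k$, the set of positions $n$ at which $u$ occurs is exactly $\{n : \Gamma^n \in T_u\}$ where $T_u \coloneqq \bigcap_{i=0}^{k-1} \Gamma^{-i} S_{u(i)}$ is an open subset of $\torus_\Gamma$. If $u$ occurs at all then $T_u$ is a nonempty open subset of $\torus_\Gamma$, so by \Cref{toric-Kronecker-torus}(a) the orbit enters it infinitely often; this gives property (i). For the uniform return bound, the key step is compactness together with a minimality-type argument. Because $\torus_\Gamma$ is a compact abelian group and $\Gamma$ generates a dense subgroup, the rotation $z \mapsto \Gamma z$ is minimal on $\torus_\Gamma$: every orbit $\{\Gamma^n z_0 : n \in \nat\}$ is dense in $\torus_\Gamma$ (indeed $\overline{\Ocal(z_0)} = z_0 \torus_\Gamma \cdot (\text{closure of } \langle\Gamma\rangle)$, and one checks $\overline{\langle\Gamma\rangle} = \torus_\Gamma$ forces $z_0 \in \torus_\Gamma$, so $\overline{\Ocal(z_0)} = \torus_\Gamma$). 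Now fix a nonempty open $T_u \subseteq \torus_\Gamma$. For each point $z \in \torus_\Gamma$ minimality gives some $n_z \ge 0$ with $\Gamma^{n_z} z \in T_u$, hence an open neighbourhood $U_z \ni z$ with $\Gamma^{n_z} U_z \subseteq T_u$; by compactness finitely many $U_{z_1}, \ldots, U_{z_N}$ cover $\torus_\Gamma$. Setting $R = \max_j n_{z_j} + 1$, any window of length $R$ of the orbit, say starting at $\Gamma^m$, has $\Gamma^m \in U_{z_j}$ for some $j$, whence $\Gamma^{m+n_{z_j}} \in T_u$ with $n_{z_j} < R$ — so $u$ occurs within that window. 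This $R$ serves as $R(u)$, establishing (ii); and if $u$ does not occur at all there is nothing to check. Together with (i) this shows $\alpha$ is strongly almost-periodic.

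The main obstacle is the clean justification that $\overline{\Ocal(\Gamma)} = \torus_\Gamma$ is a \emph{group} and that the rotation by $\Gamma$ on it is \emph{minimal}; once that structural fact is in hand, the compactness covering argument for the uniform return bound is routine. The group structure follows because $\{z : G(\Gamma) \subseteq G(z)\}$ is visibly a subgroup of $\torus^d$ (a finite intersection of kernels of characters $z \mapsto z_1^{v_{k,1}}\cdots z_d^{v_{k,d}}$), and \Cref{toric-Kronecker-torus}(b) identifies it with $\torus_\Gamma$; minimality of a rotation by a topological generator of a compact abelian group is classical, but one should either cite it or give the two-line argument that for $z_0 \in \torus_\Gamma$ the orbit closure is a coset of $\overline{\langle\Gamma\rangle} = \torus_\Gamma$ containing $z_0$, hence all of $\torus_\Gamma$. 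A minor point to handle carefully is that $R(u)$ must bound return times \emph{and} the first occurrence from position $R(u)$ onward, but the covering argument already gives occurrence in \emph{every} length-$R$ window, which subsumes both.
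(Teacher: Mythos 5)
Your proof is correct and follows essentially the same route as the paper: reduce occurrence of a factor $u$ to membership of $\Gamma^n$ in the open set $T_u \subseteq \torus_\Gamma$, use \Cref{toric-Kronecker-torus} for infinitude of occurrences, and then a compactness covering of $\torus_\Gamma$ by the preimages $\Gamma^{-k}T_u$ to get the uniform return bound (the paper writes these as $X_k$; your neighbourhoods $U_z$ are just subsets of them). Your explicit justification that the rotation is minimal on the compact subgroup $\torus_\Gamma$ — so that $\{X_k\}$ really does cover $\torus_\Gamma$ — is a point the paper leaves largely implicit, and is a welcome addition rather than a deviation.
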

\begin{proof}
	Consider $\alpha \in \Tcal_O$ that is generated by $(\Gamma, \{S_a \st a \in\Sigma\})$ where $\Gamma = (\gamma_1,\ldots,\gamma_d) \in \torus^d$ and each $S_a$ is an open subset of $\torus^d$.
	Let $\torus_\Gamma$ denote the closure of $\seq{\Gamma^n}$, and consider a finite word $w = w(0)\cdots w(l-1) \in \Sigma^l$. 
	The latter occurs at position $n$ in $\alpha$ if and only if 
	\[
	\bigwedge_{i=0}^{l-1}\Gamma^{n+i} \in S_{w(i)}
	\]
	which is equivalent to $\Gamma^n \in S_w$ where
	\[
	S_w \coloneqq \torus_\Gamma \cap  \bigcap_{i=0}^{l-1} \Gamma^{-i}S_{w(i)}.
	\]
	Since each $S_{w(i)} \subseteq \torus^d$ is open, $S_w$ is an open subset of $\torus_\Gamma$.
	If $S_w$ is empty, then $w$ does not occur in $\alpha$.
	Suppose therefore $S_w$ is not empty.
	
	For $k \in \nat$, let $X_k = \{z \in \torus_\Gamma \st \Gamma^kz \in S_w\}$.
	Each $X_k$ is an open subset of~$\torus_\Gamma$, and since $\seq{\Gamma^n}$ visits every open subset of $\torus_\Gamma$ infinitely often, $\{X_k \st k \in \nat\}$ is an open cover of~$\torus_\Gamma$.
	By compactness of $\torus_\Gamma$, there exists $K \in \nat$ such that $\bigcup_{k=0}^K X_k$ covers $\torus_\Gamma$.
	That is, the orbit of any point in $\torus_\Gamma$ under the action of $z \to \Gamma z$ visits $S_w$ in at most $K$ steps.
	Hence for every $n \in \nat$ there exists $0 \le k \le K$ such that $\Gamma^{n+k} \in S_w$.
	Therefore, the word $w$ is guaranteed to occur in $\alpha[n, n+K+l)$ for every~$n$.
\end{proof}
\begin{corollary}
	\label{toric-TO-TSA-seap}
	Suppose $\alpha \in \Sigma^\omega$ is generated by $(\Gamma, \{S_a\st a \in \Sigma\})$ where $\Gamma \in (\torus \cap \alg)^d$ and each $S_a$ is open and $\rat$-semialgebraic..
	Then $\alpha$ is strongly and effectively almost-periodic.
\end{corollary}
\begin{proof}
	As discussed in \Cref{sec-orbits-in-Td}, we can compute the $\rat$-semialgebraic set $\torus_\Gamma$ effectively.
	Hence, given $w$, we can effectively compute a representation of $X_k$ (see the proof of \Cref{toric-TO-strongly-AP}) as a $\rat$-semialgebraic set using tools of semialgebraic geometry.
	We can then determine $K$ by checking for increasing values of $m$, starting with $m = 0$, whether $\bigcup_{k=0}^m X_k$ covers $\torus_\Gamma$.
	Hence given~$w$, we can effectively compute $K + l$ as a bound on between two consecutive occurrences of $w$ in $\alpha$.
\end{proof}
We now move on to the classes $\Tcal_{\mathit{SA}}$ and $\Tcal_{\mathit{SA}(\mathbb{Q})}$.

\begin{theorem}
	\label{toric-TSA-AP}
	Let $\alpha \in \Kcal$, where $\Kcal$ is either $\Tcal_{\mathit{SA}}$ or $\Tcal_{\mathit{SA}(\mathbb{Q})}$. 
	\begin{itemize}
		\item[(a)] There exists a suffix $\beta \coloneqq \alpha[N,\infty)$ of $\alpha$ such that $\beta \in \Tcal_O \cap \Kcal$.
		\item[(b)] The word $\alpha$ is almost-periodic.
	\end{itemize}
\end{theorem}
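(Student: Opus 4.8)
The plan is to prove part~(a) by exhibiting, for a suitable $N$, a representation of $\alpha[N,\infty)$ whose target sets are genuinely open, and then to deduce part~(b) from part~(a) together with \Cref{toric-TO-strongly-AP}. The engine of part~(a) is the Skolem--Mahler--Lech theorem. Suppose $\alpha$ is generated by $(\Gamma,\{S_a\st a\in\Sigma\})$ with $\Gamma=(\gamma_1,\dots,\gamma_d)\in\torus^d$, and write each $S_a$ as a Boolean combination of the sets $\{p_i>0\},\{p_i=0\},\{p_i<0\}$ for polynomials $p_1,\dots,p_t$ with real (resp.\ rational) coefficients in the $2d$ real coordinates. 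First I would observe that for each $i$ the sequence $\seq{p_i(\Gamma^n)}$ is a linear recurrence sequence: substituting $\Rea(\gamma_l^n)=\tfrac12(\gamma_l^n+\gamma_l^{-n})$ and $\Ima(\gamma_l^n)=\tfrac1{2\im}(\gamma_l^n-\gamma_l^{-n})$ rewrites $p_i(\Gamma^n)$ as an exponential polynomial $\sum_k c_k\mu_k^n$ with each $\mu_k\in\torus$. By Skolem--Mahler--Lech (equivalently the refinement of Berstel--Mignotte recalled in \Cref{sec-mathematical-background}), there is $L\ge 1$ such that for every $i$ and every $0\le r<L$ the subsequence $\seq{p_i(\Gamma^{Ln+r})}$ is either identically zero or has only finitely many zeros; enlarging $L$, I also take it divisible by the number of connected components of $\torus_\Gamma$, so that $\torus_{\Gamma^L}$ equals the identity component $\torus_\Gamma^\circ$, a subtorus of $\torus^d$.

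Next I would decompose $\alpha$ as the merge of $\alpha_0,\dots,\alpha_{L-1}$, $\alpha_r\coloneqq\seq{\alpha(Ln+r)}$; it then suffices to place each $\alpha_r$, after deleting a finite prefix of length $M$ common to all $r$, into $\Tcal_O\cap\Kcal$, since merging the resulting suffixes recovers $\alpha[LM,\infty)$. Fix $r$ and write $\Delta\coloneqq\Gamma^L$; then $\alpha_r$ is generated by $(\Delta,\{\Gamma^{-r}S_a\})$, with $\seq{\Delta^n}$ dense in $\torus_\Delta=\torus_\Gamma^\circ$ by \Cref{toric-Kronecker-torus}. Split $\{1,\dots,t\}$ into $J_0=\{i:p_i(\Gamma^{Ln+r})\equiv 0\}$ and its complement $J_1$, and choose $M$ larger than every zero of every subsequence $\seq{p_i(\Gamma^{Ln+r})}$ with $i\in J_1$ (over all $r$). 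Passing to $\alpha_r[M,\infty)$, generated by $(\Delta,\{T_a\})$ with $T_a=\Gamma^{-(r+LM)}S_a$ (\Cref{toric-closure-suffixes}), and setting $\hat p_i(z)\coloneqq p_i(\Gamma^{r+LM}z)$: for $i\in J_0$ the polynomial $\hat p_i$ vanishes on the dense orbit $\seq{\Delta^n}$, hence on all of $\torus_\Gamma^\circ$, whereas for $i\in J_1$ it vanishes at no orbit point. Let $U$ be the open subset of $\torus_\Gamma^\circ$ obtained by deleting $\bigcup_{i\in J_1}\{\hat p_i=0\}$: it contains the orbit, and the sign vector $(\operatorname{sign}\hat p_i(z))_i$ is constant on each connected component of $U$. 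Therefore each $T_a\cap U$ is a union of connected components of $U$, so it is open in $\torus_\Gamma^\circ$; the family $\{T_a\cap U\}$ is pairwise disjoint and covers the orbit.

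To finish part~(a), note that since $\torus_\Gamma^\circ$ is a subtorus of $\torus^d$ there is a monomial group isomorphism $\pi\colon\torus^k\xrightarrow{\sim}\torus_\Gamma^\circ$ defined over $\rat$; pulling back, $\alpha_r[M,\infty)$ is generated by $(\pi^{-1}(\Delta),\{\pi^{-1}(T_a\cap U)\})$, whose target sets are now open in $\torus^k$ and, in each of the cases $\Kcal=\Tcal_{SA}$ and $\Kcal=\Tcal_{SA(\rat)}$, remain in the appropriate semialgebraic class (exactly as for closure under suffixes, \Cref{toric-closure-suffixes}). Hence $\alpha_r[M,\infty)\in\Tcal_O\cap\Kcal$, and applying \Cref{toric-interleaving} (with the class $\Tcal_O$, and also with $\Kcal$) gives $\alpha[LM,\infty)\in\Tcal_O\cap\Kcal$, which is part~(a) with $N\coloneqq LM$. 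Part~(b) is then immediate: $\beta\coloneqq\alpha[N,\infty)\in\Tcal_O$ is (strongly) almost-periodic by \Cref{toric-TO-strongly-AP}, and prepending the finite word $\alpha[0,N)$ preserves almost-periodicity — given a witness $R_\beta(u)$ for $\beta$, the bound $N+R_\beta(u)$ witnesses it for $\alpha$ — so $\alpha$ is almost-periodic.

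The conceptual heart, and the step I expect to be the main obstacle, is the passage from the discrete orbit to arithmetic progressions via Skolem--Mahler--Lech: a priori $\seq{\Gamma^n}$ could strike the boundary of some $S_a$ infinitely often, and it is precisely the fact that the zero set of a linear recurrence is a finite set together with finitely many full arithmetic progressions that confines this either to a finite set (removed with the prefix) or to an entire residue class modulo $L$ — on which the offending polynomial is identically zero on the orbit closure and can be dropped from the semialgebraic description. The remaining bookkeeping — handling the connected components of $\torus_\Gamma$ and producing the monomial reparametrization so that ``open in $\torus_\Gamma^\circ$'' becomes ``open in a torus'' — is routine but necessary to land in $\Tcal_O$ rather than merely in the class of toric words with relatively open target sets.
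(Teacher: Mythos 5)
Your proposal is correct and follows essentially the same route as the paper's proof: express the target sets by polynomial sign conditions, observe that each $\seq{p_i(\Gamma^n)}$ is a linear recurrence sequence, invoke Skolem--Mahler--Lech in the Berstel--Mignotte form to obtain a finite prefix and a modulus $L$ along which each such sequence is identically zero or never zero, handle the $L$ arithmetic subprogressions separately, recombine them via \Cref{toric-interleaving}, and deduce part~(b) from \Cref{toric-TO-strongly-AP}. The only point of divergence is how openness of the new target sets is established: the paper simply rewrites the defining Boolean combination, deleting clauses governed by identically-vanishing polynomials and replacing $\ge$ by $>$ for the never-vanishing ones, which directly yields sets open in $\torus^d$, whereas you delete the zero loci of the never-vanishing polynomials, take unions of connected components inside $\torus_\Gamma^\circ$, and then reparametrise that subtorus by a monomial isomorphism so as to land in $\Tcal_O$. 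Your detour is sound but heavier than necessary; its one genuine advantage is that intersecting with $U$ keeps the new target sets pairwise disjoint everywhere, while the paper's rewritten sets are only guaranteed disjoint along the orbit.
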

\begin{proof}
	Suppose $\alpha$ is generated by $((\gamma_1,\ldots,\gamma_d), \{S_a\st a \in \Sigma\})$ where each $S_a$ is a $\K$-semialgebraic subset of $\torus^d$; if $\Kcal = \Tcal_{\mathit{SA}}$, then $\K = \rel$, and $\K = \rat$ otherwise.
	Recall the definition of a semialgebraic subset of $\com^d$.
	For each letter $a$ and $z =(z_1,\ldots,z_d) \in \torus^d$, we have that $z \in S_a$ if and only if
	\[
	\bigvee_{i \in I_a} \bigwedge_{j \in J_a} p_{i,j}(\Rea(z_1),\Ima(z_1),\ldots,\Rea(z_d),\Ima(z_d)) \comp_{i,j} 0
	\]
	where each $p_{i,j}$ is a polynomial with real coefficients and ${\comp_{i,j}} \in \{\ge, >\}$.
	Define
	\[
	u^{a,i,j}_n = p_{i,j}(\Rea(\gamma_1^n),\Ima(\gamma_1^n),\ldots,\Rea(\gamma_d^n),\Ima(\gamma_d^n)).
	\]
	Observe that each $\seq{u^{a,i,j}_n}$ is a linear recurrence sequence over $\rel$.
	Applying the Skolem-Mahler-Lech theorem, for each $a,i,j$ there exist $\nu \coloneqq N_{a,i,j}$ and $\lambda \coloneqq L_{a,i,j}$ such that for each $0 \le r < \lambda$, the subsequence $\seq{u^{a,i,j}_{\nu+n\lambda + r}}$ is either identically zero or does not have any zero terms.
	Take $N = \max_{a,i,j} N_{a,i,j}$ and $L = \prod_{a,i,j} L_{a,i,j}$.
	We have that for every $a,i,j$ and $0 \le r < L$, the subsequence $\seq{v^{a,i,j,r}_n}$ of $\seq{u^{a,i,j}_n}$ given by
	\[
	v^{a,i,j,r}_n = u^{a,i,j}_{N+nL +r}
	\]
	 is either identically zero or is never zero.
	
	Consider $\beta = \alpha[N,\infty)$ and for $0 \le r < L$, define $\beta_r$  by
	\[
	\beta_r(n) \coloneqq \beta(nL+r) = \alpha(N+nL+r)
	\]
	for all $n \in \nat$.
	We will show that $\beta_r \in \Tcal_O \cap \Kcal$ for all $r$.
	Thereafter, from \Cref{toric-product} it follows that $\beta \in \Tcal_O \cap \Kcal$, proving (a).
	Invoking \Cref{toric-TO-strongly-AP}, $\beta$ is strongly almost-periodic.
	Since $\beta$ is a suffix of $\alpha$, we conclude that $\alpha$ is almost-periodic.
	
	Fix $0 \le r < L$. 
	For every $a \in \Sigma$ and $n \in \nat$ we have that  $\beta_r(n) = a$ if and only if
	\[
	\bigvee_{i \in I_a} \bigwedge_{j \in J_a} v^{a,i,j,r}_{n} \comp_{i,j} 0
	\]
	where ${\comp_{i,j}} \in \{\ge, >\}$.
	By construction of $N,L$, for each $i \in I_a$ and $j \in J_a$, the $(i,j)$th inequality above either holds for all $n$ (in case $v^{a,i,j,r}_n$ is identically zero and $\Delta_{i,j}$ is equality), or holds if and only if $v^{a,i,j,r}_{n} > 0$.
	Hence there exist $K_a \subseteq I_a$ and $M_a \subseteq J_a$ such that for all $n$, $\beta_r(n) = a$ if and only if
	\[
	\bigvee_{i \in K_a} \bigwedge_{j \in M_a} v^{a,i,j,r}_{n} > 0.
	\]
	Let $\lambda_k = \gamma_k^L$ for $1 \le k \le d$, and observe that we can write $v^{a,i,j,r}_{n} > 0$ as
	\[
	q_{a,i,j,r}(\Rea(\lambda_1^n),\Ima(\lambda_1^n), \ldots, \Rea(\lambda_d^n),\Ima(\lambda_d^n)) > 0
	\]
	for a polynomial $q_{a,i,j,r}$ with real coefficients.
	For each $a$, define $S^{(r)}_a \subseteq \torus^d$ by
	\[
	(z_1,\ldots,z_d) \in S^{(r)}_a \Longleftrightarrow \bigvee_{i \in K_a} \bigwedge_{j \in M_a} q_{a,i,j,r}(\Rea(z_1),\Ima(z_1), \ldots, \Rea(z_d),\Ima(z_d)) > 0.
	\]
	We have that $\beta_r$ is the toric word generated by $((\lambda_1,\ldots,\lambda_d), \{S^{(r)}_a\st a \in \Sigma\})$. 
	Since each $S^{(r)}_a$ is open, $\beta_r \in \Tcal_O$.
	As discussed above, it follows that $\beta$ is strongly almost-periodic and $\alpha$ is almost-periodic.
\end{proof}
\begin{corollary}
	\label{toric-skolem-eap}
	Assuming decidability of the Skolem problem for LRS over $\ralg$, every $\alpha \in \Tcal_{\mathit{SA}(\rat)}$ is \eap.
\end{corollary}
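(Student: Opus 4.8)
The plan is to obtain the result by making every step in the proof of \Cref{toric-TSA-AP} effective, using the Skolem oracle precisely where that proof appeals (non-constructively) to the Skolem--Mahler--Lech theorem, and then transferring effective almost-periodicity from the constructed suffix back to $\alpha$. We represent $\alpha \in \Tcal_{SA(\rat)}$ by a generator $(\Gamma, \{S_a \st a \in \Sigma\})$ with $\Gamma \in (\torus \cap \alg)^d$ and each $S_a$ defined by a Boolean combination of polynomial inequalities with rational coefficients. First note that the letter $\alpha(n)$ is computable: the coordinates of $\Gamma^n$ are algebraic numbers we can compute, and deciding $\Gamma^n \in S_a$ reduces to evaluating the signs of finitely many rational polynomials at algebraic reals. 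Thus the first clause in the definition of an \eap word is immediate, and the task is: given a finite word $u$, compute a value $R(u)$ with the required occurrence property.

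Next I would revisit the construction in the proof of \Cref{toric-TSA-AP}. There, one forms finitely many LRS $\seq{u^{a,i,j}_n}$ over $\ralg$ whose defining data (a recurrence with algebraic coefficients and algebraic initial terms) is computable from $\Gamma$ and the polynomials $p_{i,j}$. By Berstel and Mignotte's result recalled in \Cref{sec-mathematical-background}, for each such sequence we can compute $L_{a,i,j} \ge 1$ so that each subsequence indexed by a residue modulo $L_{a,i,j}$ is either identically zero or has finitely many zeros; being identically zero is decidable, and for the remaining subsequences the Skolem oracle lets us locate the last zero (ask the oracle whether the $M$-th shift has a zero, for $M = 0, 1, 2, \dots$; the answer becomes ``no'' once $M$ exceeds the largest zero, and the finitely many earlier zeros are then found by direct inspection). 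Consequently the quantities $N$ and $L$, the base point $(\lambda_1, \dots, \lambda_d)$ with $\lambda_k = \gamma_k^L \in \torus \cap \alg$, and the open $\rat$-semialgebraic sets $S^{(r)}_a$ arising in that proof are all computable, so it exhibits $\beta \coloneqq \alpha[N, \infty)$, together with an explicit generator, as a member of $\Tcal_O \cap \Tcal_{SA(\rat)}$ (via \Cref{toric-interleaving}).

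Now \Cref{toric-TO-TSA-seap} applies to $\beta$: it is strongly and effectively almost-periodic, so from $u$ we can compute a bound $R_\beta(u)$, and since $\beta$ is strongly almost-periodic we can decide whether $u$ occurs in $\beta$ at all — equivalently, whether $u$ occurs in the computable factor $\beta[0, R_\beta(u))$. It remains to transfer this to $\alpha = \alpha[0,N)\,\beta$. If $u$ does not occur in $\beta$, then any occurrence of $u$ in $\alpha$ at a position $\ge N$ would be an occurrence in $\beta$; hence $u$ does not occur in $\alpha[N,\infty)$, and $R(u) = N$ satisfies clause (a). If $u$ does occur in $\beta$, set $R(u) = N + R_\beta(u)$: a factor $\alpha[p, p+R(u))$ contains $\alpha[\max(p,N),\, p+R(u))$, which, as $\max(p,N) \le p+N$ and $\max(p,N) \ge N$, is a factor of $\beta$ of length at least $R_\beta(u)$ and therefore contains $u$. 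Either way $R(u)$ has the required property, so $\alpha$ is \eap.

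The only genuinely non-routine ingredient is the appeal to the Skolem oracle in the second step: without it, the Skolem--Mahler--Lech decomposition underlying \Cref{toric-TSA-AP} certifies the \emph{existence} of the threshold $N$ but gives no way to compute it, since pinning down the finite exceptional zero set of an LRS is exactly the Skolem Problem. I expect this to be the conceptual crux; the remaining pieces — computing $\alpha(n)$, the Berstel--Mignotte splitting, and the transfer of the return-time bound from $\beta$ to $\alpha$ — are elementary given the results already established.
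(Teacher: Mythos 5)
Your proposal is correct and follows essentially the same route as the paper: it reuses the construction from \Cref{toric-TSA-AP}, observes that each $\seq{u^{a,i,j}_n}$ is an LRS over $\ralg$ whose data is computable, invokes the Berstel--Mignotte splitting together with the Skolem oracle (exactly as recalled in \Cref{sec-mathematical-background}) to compute $N$ and $L$ and hence an explicit generator for $\beta = \alpha[N,\infty) \in \Tcal_O \cap \Tcal_{SA(\rat)}$, and then applies \Cref{toric-TO-TSA-seap}. Your additional details --- the computability of $\alpha(n)$ and the explicit transfer of the bound $R_\beta(u)$ from the suffix $\beta$ back to $\alpha$ --- are steps the paper leaves implicit, and they are handled correctly.
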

\begin{proof}
	Suppose $\alpha$ is generated by $(\Gamma, \{S_a\st a \in \Sigma\})$, where $\Gamma \in (\torus \cap \alg)^d$ and each $S_a$ is $\rat$-semialgebraic.
	In this case, in the proof of \Cref{toric-TSA-AP} each $\seq{u^{a,i,j}_n}$ is an LRS over $\ralg$.
	If we assume decidability of the Skolem problem for LRS over $\ralg$, then using the Skolem-Mahler-Lech theorem (see \Cref{sec-mathematical-background}) we can effectively compute the values of $N_{a,i,j}, L_{a,i,j}$ and hence $N, L$ in the proof above.
	We can therefore effectively compute $(\Gamma_1,\Scal_1)$ that generates the toric word $\beta =\alpha[N,\infty)$, where $\Gamma_1 \in (\torus \cap \alg)^{d}$ and each set in~$\Scal_1$ is open and $\rat$-semialgebraic.
	Invoking \Cref{toric-TO-TSA-seap}, $\beta$ is strongly and \eap.
	Hence $\alpha$ is \eap.
\end{proof}

\Cref{toric-TO-strongly-AP} tells us that words belonging to the class $\Tcal_{\mathit{SA}}$ are, in a sense, not too different from words in the class $\Tcal_O$.
In fact, we can combine words across the two classes by taking a product, while maintaining almost periodicity.

\begin{theorem}
	\label{toric-pandora}
	Let $\alpha_0, \ldots, \alpha_{L-1} \in \Tcal_O$ and $\beta_0,\ldots,\beta_{M-1} \in \Tcal_{\mathit{SA}}$.
	The word $\delta \coloneqq \prod_{i = 0}^{L-1} \alpha_i \times \prod_{j=0}^{M-1}\beta_j$ is almost-periodic.
\end{theorem}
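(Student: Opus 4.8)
The plan is to reduce the statement to the already-proved facts that $\Tcal_O$ is closed under products and that every word in $\Tcal_O$ is strongly almost-periodic, by first passing to a sufficiently deep common suffix. This reduction is genuinely needed: as recalled in \Cref{sec-ap}, the product of two almost-periodic words need not be almost-periodic, so the separate almost-periodicity of $\prod_i\alpha_i$ and $\prod_j\beta_j$ does not suffice on its own.

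First I would use the closure-under-products theorem to collapse the two blocks, writing $\alpha \coloneqq \prod_{i=0}^{L-1}\alpha_i \in \Tcal_O$ and $\beta \coloneqq \prod_{j=0}^{M-1}\beta_j \in \Tcal_{SA}$, so that $\delta = \alpha \times \beta$ after the obvious identification of product alphabets. The key step is then to invoke the \emph{structural} half of \Cref{toric-TSA-AP}, namely part (a): there is an $N \in \nat$ with $\beta[N,\infty) \in \Tcal_O$. By \Cref{toric-closure-suffixes} we also have $\alpha[N,\infty) \in \Tcal_O$, and hence by closure of $\Tcal_O$ under products, $\alpha[N,\infty)\times\beta[N,\infty) \in \Tcal_O$. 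But $\delta[N,\infty)(n) = (\alpha(N+n),\beta(N+n))$, so this product is exactly the suffix $\delta[N,\infty)$; applying \Cref{toric-TO-strongly-AP} shows that $\delta[N,\infty)$ is strongly almost-periodic, in particular almost-periodic.

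It then remains to transfer almost-periodicity from the suffix $\delta[N,\infty)$ back to $\delta$. Given a finite word $u$, let $R(u)$ witness almost-periodicity of $\delta[N,\infty)$; I would check that $R(u)+N$ works for $\delta$. The only case not immediate is that of the finitely many factors of $\delta$ starting at a position $< N$, which is handled by enlarging the window so that it contains the block $\delta[N,\,N+R(u))$ as a sub-factor. This is the same passage already used at the end of the proof of \Cref{toric-TSA-AP} to conclude almost-periodicity of $\alpha$ from that of its suffix $\beta$. I do not anticipate any real obstacle here; the one point to get right is to feed part~(a) of \Cref{toric-TSA-AP} — and not merely its almost-periodicity conclusion~(b) — into the product-closure of $\Tcal_O$, since it is precisely the membership in $\Tcal_O$ of a common suffix that makes the product well-behaved.
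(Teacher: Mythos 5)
Your proof is correct and follows essentially the same route as the paper's: collapse each block into a single product word via product-closure, use part~(a) of \Cref{toric-TSA-AP} to pass to a suffix of $\beta$ lying in $\Tcal_O$, align $\alpha$ to the same suffix via \Cref{toric-closure-suffixes}, conclude $\delta[N,\infty)\in\Tcal_O$ and hence strong almost-periodicity by \Cref{toric-TO-strongly-AP}, and transfer back to $\delta$. Your emphasis on invoking the structural statement~(a) of \Cref{toric-TSA-AP} rather than its almost-periodicity conclusion is exactly the right point, and in fact identifies the intended reference more precisely than the paper's own proof does.
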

\begin{proof}
	Let $\alpha \coloneqq  \prod_{i = 0}^{L-1} \alpha_i$ and $\beta \coloneqq \prod_{j=0}^{M-1}\beta_j$.
	The word $\delta$, up to a renaming of letters, is equal to $\alpha \times \beta$.
	By \Cref{toric-product}, $\alpha \in \Tcal_O$ and $\beta \in \Tcal_{\mathit{SA}}$.
	By \Cref{toric-TSA-AP}, there exists~$N$ such that $\beta[N,\infty) \in \Tcal_O$.
	By closure under taking suffixes (\Cref{toric-closure-suffixes}), $\alpha[N,\infty) \in \Tcal_O$.
	Applying \Cref{toric-product}, $\delta[N, \infty) = \alpha[N,\infty) \times \beta[N,\infty)$ belongs to $\Tcal_{O}$ and hence is strongly almost-periodic.
	It follows that $\delta$ is almost-periodic. 
\end{proof}

We have thus uncovered a myriad of structures with potentially decidable MSO theories:
Suppose $P_1,\ldots, P_m$ are predicates with characteristic words $\alpha_1,\ldots,\alpha_m$ that belong to $\Tcal_O \cup \Tcal_{\mathit{SA}}$.
Then the word $\alpha \coloneqq \alpha_1 \times \cdots \times \alpha_m$ is almost-periodic by \Cref{toric-pandora}.
Recall that by Sem\"enov's theorem, a sufficient condition for decidability of the MSO theory of $\langle \nat; <, P_1,\ldots,P_m\rangle$ is \emph{effective} almost periodicity of $\alpha$.
Hence the questions arises: for which toric predicates $P_1,\ldots, P_m$ is it possible to prove effective almost periodicity of the product word? 
A similar open problem is decidability of the MSO theory of $\langle \nat; < \rangle$ extended with a morphic predicate $P_1$ and a toric predicate~$P_2$.
In this case once again we can separately decide the MSO theories of $\langle \nat ; <, P_1 \rangle$ and $\langle \nat ; <, P_2 \rangle$ by \cite{carton-thomas-morphic-words} and Sem\"enov's theorem, respectively.

We conclude this section by isolating a class of toric words which we can combine while maintaining effective almost periodicity of the product word and decidability of the resulting MSO theory.
It turns out that this family of toric words is powerful enough for proving decidability of various subclasses of the model-checking problem for linear dynamical systems, discussed in \Cref{sec-lds}.

\begin{theorem}
	\label{toric-mso}
	Let $\alpha_1,\ldots,\alpha_m$ be toric words such that each $\alpha_i$ is generated by $(\Gamma_i, \Scal_i)$ where $\Gamma_i \in (\torus \cap \alg)^{d_i}$ and each $\Scal_i$ is a collection of open and $\rat$-semialgebraic sets.
	Then $\alpha$ is strongly and effectively almost-periodic.
	\begin{itemize}
		\item[(a)] The product $\alpha = \alpha_1 \times \cdots \times \alpha_m$ is \eap.
		\item[(b)] The MSO theory of the structure $\langle\nat; <, P_{\alpha_1},\ldots,P_{\alpha_m}\rangle$ is decidable.
	\end{itemize}
\end{theorem}
\begin{proof}
	Apply the construction of \Cref{toric-product} and \Cref{toric-TO-TSA-seap} to prove~(a).
	To prove (b),  recall that by B\"uchi's construction, the decision problem for the MSO theory of the structure above reduces to the acceptance problem for $\alpha$.
	The latter is decidable by \Cref{toric-TO-TSA-seap} and Sem\"enov's theorem.
\end{proof}

We can do better if we assume existence of a Skolem oracle.
\begin{theorem}
	Let $\alpha_1,\ldots,\alpha_m \in \Tcal_{\mathit{SA}(\rat)}$.
	Assuming decidability of the Skolem problem for LRS over $\ralg$, the MSO theory of $\langle \nat; <, P_{\alpha_1},\ldots,P_{\alpha_m}\rangle$ is decidable.
\end{theorem}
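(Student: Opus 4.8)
The plan is to reduce, via B\"uchi's theorem, decidability of the MSO theory of $\langle \nat; <, P_{\alpha_1},\ldots,P_{\alpha_m}\rangle$ to the Acceptance Problem $\mathsf{Acc}_\alpha$ for the product word $\alpha \coloneqq \alpha_1 \times \ldots \times \alpha_m$, and then to show that $\alpha$ is effectively almost-periodic so that Sem\"enov's theorem applies.

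First I would invoke closure of $\Tcal_{SA(\rat)}$ under products: since each $\alpha_i \in \Tcal_{SA(\rat)}$, the product $\alpha$ is again generated by some pair $(\Gamma, \Scal)$ with $\Gamma \in (\torus \cap \alg)^d$ and each member of $\Scal$ a $\rat$-semialgebraic subset of~$\torus^d$. Inspecting the proof of that closure result, the new data is obtained \emph{effectively}, by concatenating the coordinate tuples $\Gamma_i$ and taking Cartesian products of the target sets, so we actually hold an explicit representation of $\alpha$ as an element of $\Tcal_{SA(\rat)}$.

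Next I would apply \Cref{toric-skolem-eap}: under the hypothesis that the Skolem Problem for LRS over $\ralg$ is decidable, every word in $\Tcal_{SA(\rat)}$ is effectively almost-periodic. Concretely, the polynomials defining $\Scal$, evaluated along $\seq{\Gamma^n}$, give linear recurrence sequences over $\ralg$; a Skolem oracle together with the Berstel--Mignotte refinement of Skolem--Mahler--Lech lets us compute a threshold $N$ and period $L$ past which each such LRS is either identically zero or zero-free on each residue class, whence the suffix $\alpha[N,\infty)$ can be effectively realised in $\Tcal_O \cap \Tcal_{SA(\rat)}$ and is therefore strongly and effectively almost-periodic by \Cref{toric-TO-TSA-seap}; effective almost-periodicity then transfers back to $\alpha$ itself. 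Finally, by Sem\"enov's theorem, $\mathsf{Acc}_\alpha$ is decidable for the effectively almost-periodic word $\alpha$, which by the B\"uchi correspondence yields decidability of the MSO theory.

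There is no real obstacle beyond bookkeeping effectiveness: the sole nontrivial ingredient is the Skolem oracle, and that is exactly what \Cref{toric-skolem-eap} already packages. The one point worth emphasising is that the statement drops the hypothesis $\alpha_i \in \Tcal_O$ used in \Cref{toric-mso}; it is precisely \Cref{toric-skolem-eap} (rather than the purely topological \Cref{toric-TO-strongly-AP}) that removes the need for openness, at the price of the number-theoretic assumption. So the proof is essentially a three-line composition: product closure, then \Cref{toric-skolem-eap}, then Sem\"enov's theorem combined with B\"uchi's construction.
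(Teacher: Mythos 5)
Your proposal is correct and follows essentially the same route as the paper: form the product word, use closure of $\Tcal_{SA(\rat)}$ under products, apply \Cref{toric-skolem-eap} to obtain effective almost-periodicity under the Skolem oracle assumption, and conclude via Sem\"enov's theorem and the B\"uchi correspondence. Your additional remarks on the effectiveness of the product construction and on why \Cref{toric-skolem-eap} replaces the openness hypothesis are accurate elaborations of what the paper leaves implicit.
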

\begin{proof}
	Let $\alpha = \alpha_1 \times \cdots \times \alpha_m$.
	By \Cref{toric-product}, $\alpha \in \Tcal_{\mathit{SA}}$, and by \Cref{toric-skolem-eap}, $\alpha$ is \eap under the assumption that the Skolem problem is decidable for real algebraic LRS.
	It remains to invoke Sem\"enov's theorem.
\end{proof}

\section{Applications}
\label{sec-families-of-words}
In this section we discuss MSO decidability and almost periodicity properties of Sturmian words, Pisot words, sign patterns of linear recurrence sequences, certain sequences of arithmetic origin, and words arising from linear dynamical systems.

\subsection{Sturmian words}
\label{sec-sturmian}
An infinite word over the alphabet $\Sigma = \{0,1\}$ is Sturmian if the number of its distinct factors of length $n$ is equal to $n+1$ for all $n \in\nat$.
We refer the reader to \cite[Chapter 10.5]{allouche_shallit_2003} for a detailed discussion of Sturmian words.
It is known that if a word has at most $n$ distinct factors of length $n$ for some $n>0$, then it is ultimately periodic.
Hence Sturmian words have the smallest factor complexity among words that are not ultimately periodic.

Sturmian words have many equivalent characterisations, including one as a family of toric words.
For $z \in \torus$ and $x \in \rel_{>0}$, let $\Ical(z,x)$ be the open interval subset of the unit circle $\torus$ generated by starting at $z$ and rotating counter-clockwise until $ze^{\im x}$ is reached.
Further define $\Ical [z, x) \coloneqq \{z\} \cup \Ical(z, x)$ and $\Ical(z,x] \coloneqq \Ical(z, x)\cup \{ze^{\im x}\}$.
A word $\alpha$ is Sturmian if and only if there exist $\gamma \in \torus$ not a root of unity and $\xi \in \torus$ such that for all $n$, $\alpha(n)=1$ if and only if $\gamma^n \in \Ical [\xi, \theta)$, where $\theta = |\Log(\gamma)|$.
That is, a Sturmian word is the coding of $\seq{\gamma^n}$ for some $\gamma$ that is not a root of unity with respect to a partition $\{S_0, S_1\}$ of $\torus$ where $S_1$ is a semi-open interval of length exactly~$\theta$.\footnote{Note that $\theta = |\Log(\overline{\gamma})|$ and $\gamma^n \in \Ical (\xi, \theta]$ if and only if $\overline{\gamma}^n \in \Ical [\,\overline{\xi\gamma},\theta)$. Hence it suffices to only consider closed-open intervals when defining Sturmian words.}
Hence all Sturmian words belong to $\Tcal_{\mathit{SA}}$, and are almost-periodic by \Cref{toric-TSA-AP}.
In fact, they are strongly almost-periodic~\cite{allouche_shallit_2003}.

Carton and Thomas \cite{carton-thomas-morphic-words} asked:
Is the MSO theory of $\langle \nat; <, P_\alpha\rangle$, where $\alpha$ is a Sturmian word, decidable?
Call the Sturmian word with parameters $\gamma$ and $\xi$ \emph{effective} if there exists an algorithm for approximating
$\Log(\xi)$ and $\theta \coloneqq |\Log(\gamma)|$ to arbitrary precision.
We will show that such $\alpha$ is \eap and hence the MSO theory of $\langle \nat; <, P_\alpha\rangle$ is decidable.
Note that by the assumption that $\gamma$ is not a root of unity, the equation $\gamma^n = \xi$ can have at most one solution in~$n$.
Moreover, $\gamma^n = \xi e^{\im \theta}$ if and only if $\gamma^{n+1} = \xi$ or $\gamma^{n-1} = \xi$.
Hence for every effective Sturmian word $\alpha$ there exists an algorithm that computes $\alpha(n)$ given~$n$.
The algorithm simply stores the value $N$ (if any) such that $\gamma^N = \xi$, as well as the values of $\alpha(N-1),\alpha(N), \alpha(N+1)$.\footnote{Here we only show existence of the desired algorithm. If we want to write such an algorithm down, we have to first determine, if any, the value of $N$. Techniques for accomplishing this depend on the values of $\xi,\gamma$ and how they are presented.}
On $n \notin \{N-1, N, N+1\}$, it determines $\alpha(n)$ by approximating $\Log (\gamma^n)$ to sufficient precision and comparing it to approximations of $\Log(\xi)$ and $\Log(\xi e^{\im \theta})$.

\begin{theorem}
	\label{sturmian-1word}
An effective Sturmian word $\alpha$ is effectively almost-periodic.
\end{theorem}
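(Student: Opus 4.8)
The plan is to reduce to \Cref{toric-TO-strongly-AP} and then to make its compactness argument \emph{effective}, with the effectiveness hypothesis on $\alpha$ standing in for the algebraicity assumption of \Cref{toric-TO-TSA-seap} (which is unavailable, since the parameters of a Sturmian word need not be algebraic). Write $\alpha$ as the coding of $\seq{\gamma^n}$ with respect to $\{S_0,S_1\}$ where $S_1=\Ical[\xi,\theta)$, $\theta=|\Log\gamma|$, and $\gamma$ is not a root of unity. The effectiveness hypothesis --- in the form already used above to compute $\alpha(n)$ --- allows us to approximate $\Log\gamma$ and $\Log\xi$, and hence the argument of every point of the form $\gamma^{-k}\xi$ or $\gamma^{-k}\xi e^{\im\theta}$, to arbitrary precision. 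Since $\gamma$ is not a root of unity it has only the trivial multiplicative relation, so $\torus_\gamma=\torus$ and the forward orbit $\seq{\gamma^n z}$ of every $z\in\torus$ is dense in $\torus$; in particular none of the machinery of \Cref{sec-orbits-in-Td} is needed. Computability of $\alpha(n)$ was established above, so it remains to compute, given a finite word $w$, a value $R(w)$ as in the definition of effective almost-periodicity.

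First I would pass to a suffix lying in $\Tcal_O$. The set $F\coloneqq\{n\in\nat:\gamma^n\in\{\xi,\xi e^{\im\theta}\}\}$ has at most two elements, again because $\gamma$ is not a root of unity; fix any $N'>\max F$ (taking $N'=0$ if $F=\emptyset$). Then $\beta\coloneqq\alpha[N',\infty)$ is the coding of $\seq{\gamma^n}$ with respect to the two \emph{open} arcs $S^\beta_1\coloneqq\gamma^{-N'}\big(S_1\setminus\{\xi\}\big)$ and $S^\beta_0\coloneqq\gamma^{-N'}\big(S_0\setminus\{\xi e^{\im\theta}\}\big)$, whose union is $\torus$ minus two points that the orbit of $\gamma$ never meets; hence $\beta\in\Tcal_O$, and by \Cref{toric-TO-strongly-AP} it is strongly almost-periodic --- which already re-proves almost-periodicity of $\alpha$. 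As in the algorithm computing $\alpha(n)$, the (possibly non-uniformly determined) constant $N'$ together with the prefix $\alpha(0)\cdots\alpha(N'-1)$ is hard-wired into the algorithm below.

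The heart of the proof is to effectivize the compactness step of \Cref{toric-TO-strongly-AP} for $\beta$. Given $w=w(0)\cdots w(l-1)$, that proof shows $w$ occurs at position $n$ in $\beta$ iff $\gamma^n\in S_w\coloneqq\bigcap_{i=0}^{l-1}\gamma^{-i}S^\beta_{w(i)}$. The endpoints of the arcs $\gamma^{-i}S^\beta_{w(i)}$ take at most $l+1$ distinct values on $\torus$, pairwise distinct because $\gamma$ is not a root of unity; approximating their arguments to sufficient precision pins down their cyclic order and hence the finitely many ``atomic'' arcs into which they partition $\torus$. Membership of an atomic arc's midpoint --- never an endpoint of any $\gamma^{-i}S^\beta_{w(i)}$ --- in each $\gamma^{-i}S^\beta_{w(i)}$ is a \emph{strict} comparison of approximable arguments, hence decidable; thus I can compute $S_w$ as a finite union of arcs, and in particular decide whether $S_w=\emptyset$. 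Since $S_w$ is open and $\Ocal(\gamma)$ is dense in $\torus=\torus_\gamma$, emptiness is equivalent to $w$ not occurring in $\beta$, \ie in $\alpha[N',\infty)$; in that case I output $R(w)=N'$, satisfying clause (a). If $S_w\neq\emptyset$, I search upward for the least $m$ with $\bigcup_{k=0}^{m}\gamma^{-k}S_w=\torus$; such $m$ exists by compactness of $\torus$ (each $\gamma^{-k}S_w$ is open and, every forward orbit being dense, these sets cover $\torus$), and ``$\bigcup_{k=0}^{m}\gamma^{-k}S_w=\torus$'' is decidable by the same reasoning: for $0\le k\le m$ the endpoints of $\gamma^{-k}S_w$ are finitely many pairwise-distinct, argument-approximable points, an endpoint of an open arc is never covered by that arc, and all remaining comparisons are strict. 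By the bound obtained in the proof of \Cref{toric-TO-strongly-AP}, $w$ then occurs in every factor of $\beta$ of length $K+l$ with $K\coloneqq m$, hence in every factor of $\alpha$ of length $N'+K+l$; I output $R(w)=N'+K+l$, satisfying clause (b). This shows $\alpha$ is \eap.

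The only genuinely delicate point --- and the one I expect to be the crux --- is this effectivity: $\Log\xi$ and $\theta$ are merely approximable, so every branch the algorithm takes must ultimately hinge on a \emph{strict} numerical inequality. What makes that possible is precisely that $\gamma$ is not a root of unity, which forces the finitely many arc endpoints relevant to a fixed $w$ and search bound $m$ to be pairwise distinct; this kills all boundary cases (a point never lies in the interior of an arc it bounds), leaving only strict comparisons, which sufficiently precise rational approximations always resolve. The non-uniform hard-wiring of $N'$ (and of the sign of $\Log\gamma$) is routine bookkeeping of the kind already present in the algorithm for $\alpha(n)$.
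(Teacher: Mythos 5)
Your proposal is correct and follows essentially the same route as the paper's proof: pass to a suffix on which the orbit avoids the arc endpoints, characterise occurrences of $w$ by $\gamma^n \in S_w=\bigcap_{i}\gamma^{-i}S_{w(i)}$, decide emptiness of $S_w$ by approximating the finitely many pairwise-distinct endpoint values, and effectivise the compactness argument of \Cref{toric-TO-strongly-AP} to bound the return time. If anything, your treatment of the endpoints is the more careful one: since $\xi e^{\im\theta}=\gamma^{\pm 1}\xi$, the $2l$ endpoints collapse to $l+1$ pairwise-distinct values and consecutive arcs \emph{do} share an endpoint (contrary to the paper's literal claim that no two of the arcs $\gamma^{-i}S_{w(i)}$ do), a degeneracy you resolve symbolically by tracking each endpoint as a known power of $\gamma$ times $\xi$, and your atomic-arc covering test replaces the paper's passage to a semialgebraic subinterval $J$ and its appeal to tools of semialgebraic geometry.
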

\begin{proof}
	Suppose $\alpha$ is generated by $\gamma$ and $\xi$.
	Define $\theta$, $S_0$ and $S_1$ as above.
	As mentioned earlier, all Sturmian words are strongly almost-periodic.
	Moreover, under the assumption on $\alpha$, there exists a program that computes $\alpha(n)$ given $n$.
	Hence we have to show existence of a program that, given a finite word~$u$, determines whether $u$ occurs in $\alpha$, and in case it does, computes an upper bound on the gaps between consecutive occurrences.
	If $\gamma^N = \xi$ for some $N$, then let $M = N+2$.
	Otherwise, let $M = 0$.
	For $n \ge M$, $\gamma^n \ne \xi, \xi e^{\im\theta}$.
	That is, $\gamma^n$ does not hit the boundary of $S_0,S_1$.
	It suffices to prove effective almost periodicity of $\beta \coloneqq \alpha[M, \infty)$.
	As in the proof of \Cref{toric-TO-strongly-AP}, a word $w=w(0)\cdots w(l-1)$ occurs at a position $n \ge M$ in $\alpha$ if and only if $\gamma^n \in S_w$, where 
	\[
	S_w =  \bigcap_{i=0}^{l-1} \gamma^{-i}S_{w(i)}
	\]
	and each $S_{w(i)}$ is the open interval $\Ical(\xi, \xi e^{\im\theta})$ if $w(i)=1$ and $S_{w(i)} = \Ical(\xi e^{\im\theta}, \xi)$ otherwise.
	Since $\gamma$ is not a root of unity, no two distinct intervals $\gamma^{-i}S_{w(i)}$ and $\gamma^{-j}S_{w(j)}$ share an endpoint.
	Hence by approximating $\Log(z)$ to sufficient precision for every endpoint $z$ of $\gamma^{-i}S_{w(i)}$ for $0 \le i < l$, we can decide whether $S_w$ is empty.
	If $S(w) = \emptyset$, then $w$ does not occur in $\beta$.
	If $S_w \ne \emptyset$, then we can compute, using the approximate positions of the endpoints,
	an open semialgebraic interval subset $J$ of $\torus$ that is contained in $S_w$.
	Similarly to the proofs of \Cref{toric-TO-strongly-AP,toric-TO-TSA-seap}, let $K$ be such that $\bigcup_{i=0}^K \gamma^{-i}J$ covers $\torus$; such $K$ can be computed using a trial-and-error method and tools of semialgebraic geometry.
	Thus for every $m \in \nat$ there exists $n \in [m, m+K]$ such that $\gamma^n \in J$, which implies $\gamma^n \in S_w$.
	It follows that for every $m \in \nat$ the word $w$ occurs in $\beta[m, m+K+l)$.
\end{proof}

What about decidability of the MSO theory of $\langle \nat;<, P_{\alpha_1},\ldots,P_{\alpha_m}\rangle$, where each $\alpha_i$ is Sturmian? 
Suppose each $\alpha_i$ is an effective Sturmian word with parameters $\gamma_i, \xi_i$ and $\theta_i = |\Log(\gamma_i)|$.
Suppose further that $\gamma_1,\ldots,\gamma_m$ are multiplicatively independent. Importantly, under this assumption, $\torus_\Gamma = \torus^d$ for $\Gamma = (\gamma_1,\ldots,\gamma_d)$.
\begin{theorem}
	Under the assumptions above, $\alpha \coloneqq \alpha_1 \times \cdots \times \alpha_m$ is \eap and hence the MSO theory of $\langle \nat; <, P_{\alpha_1},\ldots,P_{\alpha_m}\rangle$ is decidable.
\end{theorem}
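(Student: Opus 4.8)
The plan is to mimic the proof of \Cref{sturmian-1word}, but now over the $m$-dimensional torus, using multiplicative independence of $\gamma_1,\ldots,\gamma_m$ to guarantee that the relevant orbit is equidistributed on all of $\torus^m$. Note that we cannot simply invoke \Cref{toric-TO-TSA-seap}: the $\gamma_i$ are only assumed approximable, not algebraic, so effectivity must be re-established by hand.

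First I would record the toric presentation of $\alpha$: it is generated by $(\Gamma,\{S_b : b \in \{0,1\}^m\})$ with $\Gamma = (\gamma_1,\ldots,\gamma_m)$ and $S_b = \prod_{i=1}^m S^{(i)}_{b_i}$, where $S^{(i)}_1 = \Ical[\xi_i,\theta_i)$ and $S^{(i)}_0$ is its complementary semi-open arc. Since each $\gamma_i$ is not a root of unity, each equation $\gamma_i^n = \xi_i$ and $\gamma_i^n = \xi_i e^{\im\theta_i}$ has at most one solution in $n$; let $M$ exceed all of them, so that for every $n \ge M$ the point $\Gamma^n$ avoids the endpoints of all the arcs $S^{(i)}_0, S^{(i)}_1$. (As in \Cref{sturmian-1word}, $M$ exists; writing the algorithm down requires knowing it, which depends on how the parameters are presented.) As $\alpha$ differs from $\beta \coloneqq \alpha[M,\infty)$ only in a finite prefix, and its characteristic function is computable coordinatewise via the effective-Sturmian argument preceding \Cref{sturmian-1word}, it suffices to prove that $\beta$ is effectively (strongly) almost-periodic.

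Next, fix a finite word $w = w(0)\cdots w(l-1)$ over $\{0,1\}^m$. Exactly as in \Cref{toric-TO-strongly-AP,sturmian-1word}, $w$ occurs at a position $n \ge M$ of $\alpha$ iff $\Gamma^n \in S_w$, where $S_w \coloneqq \bigcap_{i=0}^{l-1}\Gamma^{-i}S^{\circ}_{w(i)}$ and $S^{\circ}_{w(i)}$ is $S_{w(i)}$ with its endpoints removed; here we may pass to open arcs precisely because $\Gamma^{n+i}$ avoids all endpoints. The key structural observation is that each constraint involves only a single coordinate, so $S_w = \prod_{j=1}^m T_j$ with each $T_j \subseteq \torus$ an intersection of finitely many open arcs. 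Hence $S_w = \emptyset$ iff some $T_j = \emptyset$, which is decided as in \Cref{sturmian-1word} by approximating the finitely many endpoints $\gamma_j^{-i}\xi_j$, $\gamma_j^{-i}\xi_j e^{\im\theta_j}$ to sufficient precision (distinct endpoints are genuinely separated since $\gamma_j$ is not a root of unity). If $S_w \ne \emptyset$, the same approximations yield in each coordinate $j$ a nonempty open semialgebraic sub-arc $J_j \subseteq T_j$, whence $J \coloneqq \prod_j J_j$ is a nonempty open semialgebraic box with $J \subseteq S_w$.

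Finally I would use multiplicative independence: since $G(\Gamma) = \{0\}$, \Cref{toric-Kronecker-torus}(b) gives $\torus_\Gamma = \torus^m$, so the rotation $z \mapsto \Gamma z$ is minimal on $\torus^m$ and by \Cref{toric-Kronecker-torus}(a) the orbit of every point enters the open box $J$. Thus the open sets $X_k = \{\, z \in \torus^m : \Gamma^k z \in J \,\}$, $k \in \nat$, form an open cover of the compact space $\torus^m$, so there is $K$ with $\bigcup_{k=0}^K \Gamma^{-k}J = \torus^m$; this $K$ can be found by a trial-and-error search, checking at increasing precision whether finitely many (approximately located) semialgebraic boxes cover $\torus^m$ — a semialgebraic question whose affirmative instances are semidecidable, and which succeeds for some $K$ by the cover/compactness argument. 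Then every factor of $\beta$ of length $K + l$ contains $w$, so $R_\beta(w) \coloneqq K + l$ witnesses effective (strong) almost-periodicity of $\beta$; therefore $\alpha$ is effectively almost-periodic, and decidability of the MSO theory of $\langle\nat;<,P_{\alpha_1},\ldots,P_{\alpha_m}\rangle$ follows from B\"uchi's correspondence and Sem\"enov's theorem. I expect the main obstacle to be this last step — and, upstream, the emptiness test for $S_w$ — which must be carried out using only approximations to the parameters; this is exactly where being not-a-root-of-unity (endpoint separation) and compactness of $\torus^m$ (existence of a finite, hence ultimately confirmable, subcover) are essential.
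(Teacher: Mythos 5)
Your proposal is correct and matches the paper's own proof sketch essentially step for step: reduce to a suffix avoiding arc endpoints, observe that $S_w$ factors as a product of per-coordinate intersections of open arcs (an open box), decide emptiness and extract a semialgebraic sub-box $J$ by approximating endpoints, and then use multiplicative independence (so $\torus_\Gamma = \torus^m$) together with compactness and a trial-and-error semialgebraic covering check to bound the return time. Your explicit remark that \Cref{toric-TO-TSA-seap} cannot be invoked directly because the $\gamma_i$ are only approximable rather than algebraic is a fair point the paper leaves implicit, but it does not change the argument.
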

\noindent\emph{Proof sketch.}
Let $\Sigma = \{0,1\}^m$ and $M$ be such that for all $n \ge M$ and $1 \le j \le m$, $\gamma_j^n \ne \xi_j$ and $\gamma_j^n \ne \xi_j e^{\im\theta_j}$.
For each $a \in \Sigma$, there exists $S_a \subset \torus^m$ that is a product of open interval subsets of $\torus$ (henceforth called a \emph{box}) such that for all $n \in \nat$, $\alpha(n) = a$ if and only if $\Gamma^n \in S_a$.
Let $w \in \Sigma^l$.
For $n \ge M$, the word $w$ occurs at the position $n$ in $\alpha$ if and only if $\Gamma^n \in S_w$, where $S_w = \bigcap_{i=0}^{l-1} \Gamma^{-i}S_{w(i)}$ and each $S_{w(i)}$ is of the form $\prod_{j=1}^m T^{(i)}_j$ for open intervals ${T^{(i)}_1, \ldots,T^{(i)}_{m} \subset \torus}$.
Therefore,
\[
S_w  =\prod_{j=1}^m\: \bigcap_{i=0}^{l-1}\gamma_j^{-i} T^{(i)}_j
\] 
itself is an open box.
As argued in the proof of \Cref{sturmian-1word}, using the oracles for approximating $\Log(\gamma_i), \Log(\xi_i)$ to sufficient precision we can decide whether each $\bigcap_{i=0}^{l-1}\gamma_j^{-i} T^{(i)}_j$ is empty.
In case $S_w$ is non-empty, we compute an open semialgebraic box $J$ such that $J \subset S_w$.
It remains to bound the return time of $\seq{\Gamma^n}$ in $J$ by 
computing $K$ such that $\bigcup_{i=0}^K\Gamma^{-i}J$ covers~$\torus_\Gamma$, which is the whole of $\torus^d$ by the multiplicative independence assumption.
Since $J$ is $\rat$-semialgebraic, such $K$ can be computed effectively by trial-and-error.
In the end, for every $m \ge M$, the word $w$ occurs in $\alpha[m,m+K+l)$.
\qed

We mention that for a \emph{characteristic Sturmian word} $\alpha$ generated by a quadratic irrational (see \cite[Chap.~9]{allouche_shallit_2003}), the \emph{first-order} theory of the structure $\langle \nat; <, +, n \mapsto \alpha(n)\rangle$ is decidable by the automata-theoretic methods of Hieronymi \emph{et al}.\ 
\cite{hieronymi2018ostrowski,hieronymi2024decidability}.
Note that in this theory we have access to addition, but not to second-order quantification.
Because the continued fraction expansions of quadratic irrationals are ultimately periodic, a word $\alpha$ as above is, in fact, morphic \cite[Chap.~9]{allouche_shallit_2003}.

\subsection{Pisot words}
\label{sec-pisot}
We now discuss a class of morphic words called \emph{Pisot words} and the related Pisot conjecture.
The conjecture identifies a class of morphic words that are expected to have, in a specific sense, a toric representation.

A \emph{Pisot–Vijayaraghavan number}, also called a \emph{Pisot number}, is a real algebraic integer greater than $1$ whose Galois conjugates all have absolute value less than 1.
A \emph{Pisot} substitution  $\tau \colon \Sigma^* \to \Sigma^*$ has the property that the incidence matrix $M_\tau$ of $\tau$ has a single real dominant eigenvalue that is a Pisot number.
A morphic word generated by a Pisot substitution is called a \emph{Pisot word}.
The Fibonacci and Tribonacci words we encountered are both Pisot words that also belong to $\Tcal_O$.
The Fibonacci word is the coding of a rotation with respect to two interval subsets of $\torus$, whereas the Tribonacci word is the coding of $\seq{\Gamma^n}$, where $\Gamma = (e^{\im \frac{2\pi}{x}}, e^{\im \frac{2\pi}{x^2}})$ and $x \approx 1.839$ is the largest root of the polynomial $x^3-x^2-x-1$, with respect to $\Scal =\{S_1,S_2,S_3\}$ with fractal boundaries (see \Cref{sec-morphic}).

To state the Pisot conjecture, we first need a few definitions.
The \emph{language} $\Lcal(\alpha)$ of $\alpha \in \Sigma^\omega$ is the set of all factors of $\alpha$.
Recall that a substitution $\tau\colon\Sigma^*\to \Sigma^*$ is primitive if there exists $k \in \nat$ such that starting from any letter $a$, $\tau^k(a)$ contains all possible letters.
Further recall that a fixed point of a primitive substitution is strongly and effectively almost-periodic.
A substitution $\tau$ is \emph{unimodular} if $\det(M_\tau) = \pm 1$.
Finally, $\tau$ is \emph{irreducible} if the characteristic polynomial of $M_\tau$ is irreducible.
Now we are ready to state the Pisot conjecture.
\begin{conjecture}[Pisot conjecture]
    If $\alpha$ is a fixed point of a unimodular, primitive and irreducible Pisot substitution over a $k$-letter alphabet, then there exists a word $\beta$ with the following properties.
\begin{itemize}
	\item[(a)] $\Lcal(\beta) = \Lcal(\alpha)$, and
	\item[(b)] $\beta$ is the toric word generated by some $(\Gamma, \Scal)$ where $\Gamma \in \torus^{k-1}$ and each set in $\Scal$ is open.
\end{itemize}
\end{conjecture}
Statement~(b) implies $\beta \in \Tcal_{O}$.
Note that by~(a), the word $\beta$ is also strongly and effectively almost-periodic.
The Pisot conjecture is widely believed to be true but has only been proven for $k = 2$; see \cite{akiyama-pisot} for a detailed account.

\subsection{Procyclic and sparse predicates}
\label{sec-procyclic-and-sparse}
The results of this section were recently obtained in \cite{berthe2024mso} using a combination of tools from number theory, automata theory, and symbolic dynamics.
Let $P = \{f(n) \st n \in \nat\}$, where $f \st \nat \to \nat$ is strictly increasing.
We say that $P$ is \emph{procyclic} if given $m \ge 1$, we can effectively compute $N, p$ such that $f(n+p) \equiv f(n) \pmod{m}$ for all $n \ge N$.
Now consider $P_1, P_2$ given by $P_i = \{f_i(n) \st n \in \nat\}$, where each $f_i$ is strictly increasing.
The pair of predicates $P_1,P_2$ is said to be \emph{effectively sparse} if for every $K\in \nat$, the set $\{(n_1,n_2) \st |f_1(n_1)-f_2(n_2)| \le K\}$ is finite and can be effectively computed.

For predicates $P_1,\ldots,P_m$ with respective characteristic words $\alpha_1,\ldots,\alpha_m$, we write $\operatorname{Ord}(P_1,\ldots,P_m)$ for the word over $(\{0,1\}^m \setminus (0,\ldots,0))^\omega$, called the \emph{order word}, obtained by deleting all occurrences of the letter $(0,\ldots,0)$ from $\alpha_1\times \cdots\times \alpha_m$.
We have the following.

\begin{theorem}
	\label{thm::char-to-order-word}
	Let $P_1,\ldots,P_m$ be predicates with respective characteristic words $\alpha_1,\ldots,\alpha_m$, $\alpha = \alpha_1 \times \cdots \times \alpha_m$, and $\beta = \operatorname{Ord}(P_1,\ldots,P_m)$.
	Suppose each $P_i$ is procyclic and the pair $P_i, P_j$ is effectively sparse for every $i \ne j$.
	Then $\mathsf{Acc}_\alpha$ reduces to $\mathsf{Acc}_\beta$.
\end{theorem}

This result is the first step in the proof of decidability of the MSO theory of $\langle \nat; <, \{2^n \st n \in \nat\}, \{3^n \st n \in \nat\}\rangle$.
To state our decidability result in full,  consider linear recurrence sequences
\[
u^{(i)}_n = c_i\rho_i^n + \sum_{k=1}^{K_i} p_{i,k}(n)\lambda_{i,k}^n
\]
over $\intg$ for $1\le i \le m$ with the following properties.
For all $i,j, k$,
\begin{itemize}
	\item $\lambda_{i,k} \in \alg$, $c_i, \rho_i \in \ralg$, $p_{i,k} \in \alg[x]$, 
	\item $c_i > 0$, $\rho_i > 1$, $|\lambda_{i,k}| < \rho_i$, and
	\item $c_i \rho_i^n = c_j\rho_j^n$ has finitely many solutions when $i \ne j$.
\end{itemize}
Write $P_i$ for $\{u^{(i)}_n \st n \in \nat\} \cap \nat$. 
We have the following.

\begin{theorem}
	\label{thm-lics24}
	The MSO  theory of $\langle \nat; P_1,\ldots,P_m\rangle$ is decidable assuming Schanuel's conjecture.
	The decidability is unconditional if either of the following holds:
	\begin{itemize}
		\item $1/\Log(\rho_1),\ldots,1/\Log(\rho_m)$ are linearly independent over $\rat$;
		\item Every triple of $\rho_1,\ldots,\rho_m$ is multiplicatively dependent, and $\rho_1,\ldots,\rho_m$ are pairwise multiplicatively independent.
	\end{itemize}
\end{theorem}
By \Cref{thm-lics24} the MSO  of $\langle \nat; <, 2^{\nat}, 3^{\nat},
6^{\nat}, 12^{\nat}\rangle$ is decidable, where we write $k^\nat$ to denote $\{k^n \st n \in \nat\}$.
The idea of the proof is to first reduce to the order word using \Cref{thm::char-to-order-word}.
It turns out that the order word obtained from predicates of the form $k^\nat$ belongs to the class of \emph{billiard words}, which are almost-periodic (in fact, uniformly recurrent) and belong to $\Tcal_O$.
If the second condition in \Cref{thm-lics24} does not hold, Schanuel's conjecture is required for computing bounds on the window function. 
In contrast to our decidability result, Hieronymi and Schulz have recently shown that the \emph{first-order} theory of $\nat$ equipped with addition and the predicates $2^\nat, 3^\nat$ is undecidable \cite{hieronymi022cobham}.

\subsection{Sign patterns of linear recurrence sequences}
\label{sec-sign-patterns}
The \emph{sign pattern} of a real-valued LRS $\seq{u_n}$ is the word $\alpha \in \{+, 0, -\}^\omega$ such that $\alpha(n)$ is defined by $\operatorname{sign}(u_n)$ for all $n \in \nat$.
The Skolem, Positivity and Ultimate Positivity problems introduced in \Cref{sec-mathematical-background} are all decision problems about such sign patterns.
We will see that sign patterns of LRS can have distinctive combinations of toricity and almost periodicity properties.

We start with \emph{simple} (also known as \emph{diagonalisable}) sequences.
An LRS $\seq{u_n}$ over $\alg$ is called simple if it can be expressed in the form $u_n = c^\top M^ns$ where $c,s \in \alg^d$ and $M \in \alg^{d \times d}$ is diagonalisable.
Using a deep result \cite{evertse} of Evertse on the sums of $S$-units, we can show that the sign pattern $\alpha$ of a simple LRS $\seq{u_n}$ has a suffix that belongs to $\Tcal_{O}$.

\begin{theorem}[Theorem 11 in \cite{karimov-power-of-positivity}]
\label{simple-lrs-sign}
	Let $\seq{u_n}$ be a simple LRS over $\ralg$ with the sign pattern $\alpha \in \{+,0,-\}^\omega$.
	\begin{itemize}
		\item[(a)] There exist integers $d, N$, open semialgebraic subsets $S_+, S_0, S_-$ of $\torus^d$, and $\Gamma \in (\torus \cap \alg)^d$ such that $\alpha[N,\infty) \in \Tcal_O \cap \Tcal_{\mathit{SA}(\rat)}$ and is generated by $(\Gamma, \{S_+,S_0,S_-\})$.
		\item[(b)] The value of $N$ and representations of
                  $S_+,S_0,S_-$ can be effectively computed assuming
                  decidability of the Positivity problem for simple LRS over~$\rat$.
	\end{itemize}
\end{theorem}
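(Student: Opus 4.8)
The idea is to put $\seq{u_n}$ into exponential--polynomial form, split it into blocks according to the modulus of the eigenvalues, and show that after deleting an initial segment and adjoining a bounded counter the sign of $u_n$ equals, on each residue class of a suitable modulus $L$, the sign of a single trigonometric polynomial evaluated at $\Gamma^n$ --- an \emph{open} (strict) semialgebraic condition. Since $\seq{u_n}$ is diagonalisable over $\ralg$ we may write $u_n=\sum_{i=1}^k c_i\lambda_i^{\,n}$ with distinct nonzero algebraic $\lambda_i$ and algebraic $c_i$ (terms from a zero eigenvalue affect only $n=0$, which is absorbed by a suffix). Partition $\{1,\dots,k\}=I_1\sqcup\cdots\sqcup I_s$ by $|\lambda_i|=\rho_t$ with $\rho_1>\cdots>\rho_s>0$, put $\gamma_i=\lambda_i/\rho_t$ for $i\in I_t$, and note $\gamma_i\in\torus\cap\alg$ since $\rho_t=\sqrt{\lambda_i\overline{\lambda_i}}\in\ralg$. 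Collecting the distinct $\gamma_i$ into $\Gamma\in(\torus\cap\alg)^d$ and writing the block $B_t(n)=\sum_{i\in I_t}c_i\lambda_i^{\,n}=\rho_t^{\,n}Q_t(\Gamma^n)$ for a Laurent polynomial $Q_t$, we obtain $u_n=\sum_{t}\rho_t^{\,n}Q_t(\Gamma^n)$; each $B_t(n)$ is real (conjugation permutes $I_t$), so $Q_t$ may be taken real-valued on $\torus^d$.

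Each $B_t$ is a diagonalisable LRS over $\ralg$, so by the Skolem-Mahler-Lech theorem with the Berstel-Mignotte refinement, applied simultaneously to all $B_t$ and all of their subsums $\sum_{i\in T}c_i\lambda_i^{\,n}$ ($T\subseteq I_t$), there is a modulus $L$ such that each such (sub)sum restricted to $n\equiv r\pmod L$ is either identically zero or nonzero for all large $n$. Call $r$ \emph{null} if $n\mapsto B_t(Ln+r)$ is identically zero for every $t$ (then $u_{Ln+r}=0$ for all $n$); otherwise let $t(r)$ be the least $t$ for which $n\mapsto B_t(Ln+r)$ is not identically zero, and, iteratively deleting the subsums of $B_{t(r)}$ that vanish identically on the class (this adds $0$, so the decomposition of $u_{Ln+r}$ is unchanged), assume the surviving block $B_{t(r)}(Ln+r)$ is \emph{nondegenerate}: no proper subsum vanishes for infinitely many $n$.

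The crux is a lower bound for a nonzero sum of $S$-units. On a non-null class the surviving terms $\lambda_i^{\,Ln+r}$ lie in a finitely generated multiplicative group and the sum is nondegenerate, so by Evertse's theorem \cite{evertse}, for every $\varepsilon>0$ all but finitely many $n$ in the class satisfy $|B_{t(r)}(Ln+r)|\ge\rho_{t(r)}^{\,Ln+r}\,(\max_i|c_i|)\,e^{-\varepsilon n}$ (using $|\gamma_i|=1$). Fixing $\varepsilon$ with $e^{-\varepsilon}>\max_t\rho_{t+1}/\rho_t$ and bounding the blocks of smaller modulus by $C\rho_{t(r)+1}^{\,Ln+r}$, we find that for all large $n$ in the class $B_{t(r)}(Ln+r)$ strictly dominates the rest of $u_{Ln+r}$; hence $u_{Ln+r}\ne0$ and $\operatorname{sign}(u_{Ln+r})=\operatorname{sign}(B_{t(r)}(Ln+r))=\operatorname{sign}(Q_{t(r)}(\Gamma^{Ln+r}))$, the last step since $\rho_{t(r)}^{\,Ln+r}>0$. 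In particular $Q_{t(r)}$ is nonzero along the orbit on this class, so $\operatorname{sign}(u_n)$ is separated from $0$ by the open conditions $Q_{t(r)}(\Gamma^n)>0$ and $Q_{t(r)}(\Gamma^n)<0$. This is the one genuinely hard step: without such a lower bound one could not exclude $u_n$ coming arbitrarily close to $0$ infinitely often without vanishing, and then the sign could not be cut out by open sets.

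Finally pick $N$ past all the finitely many exceptions above together with the sporadic zeros of the $B_t$, and set $\omega=e^{\im2\pi/L}$, $\Gamma_1=(\Gamma,\omega)\in(\torus\cap\alg)^{d+1}$. Each slice $\{(z,w)\in\torus_{\Gamma_1}:w=\omega^r\}$ is clopen in $\torus_{\Gamma_1}$ (one of the finitely many fibres of the projection onto the finite group $\langle\omega\rangle$), and $\Gamma_1^{\,n}$ lies in the slice indexed by $n\bmod L$. Let $S_0$ be the union of the null slices, and for each non-null $r$ add $\{(z,\omega^r)\in\torus_{\Gamma_1}:Q_{t(r)}(z)>0\}$ to $S_+$ and the analogous ``$<0$'' set to $S_-$. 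These are pairwise disjoint and open, and for every $n\ge N$ the point $\Gamma_1^{\,n}$ lies in exactly the one matching $\operatorname{sign}(u_n)$; translating by $\Gamma_1^{-N}$ yields a generator of $\alpha[N,\infty)$. Since $\torus_{\Gamma_1}$ is $\rat$-semialgebraic (\Cref{toric-Kronecker-torus}), $\Gamma_1$ is algebraic, and the $Q_t$ have algebraic coefficients, the sets may be rendered $\rat$-semialgebraic by the standard rationalisation of semialgebraic sets defined over $\ralg$ (cf.\ \cite{karimov-power-of-positivity}), establishing (a). For (b), the $\lambda_i,c_i$, the block decomposition, a basis of the multiplicative relations of $\Gamma_1$ (Masser's bound), the modulus $L$ (Berstel-Mignotte) and the identically-zero tests are all effective, so one computes a candidate generator $(\Gamma_1,\{S_+,S_0,S_-\})$ and its toric word $\beta$, which agrees with $\alpha$ at all large $n$; the only ineffective quantity is the threshold $N$ --- Evertse's bound is ineffective and one must also locate the sporadic zeros of $\seq{u_n}$ --- and it is recovered with a Positivity oracle for rational LRS (which in particular answers the Skolem Problem over $\rat$), since the finitely many $n$ at which $\alpha$ and $\beta$ disagree can be isolated by (ultimate) positivity queries on rational linear recurrences derived from $\seq{u_n}$.
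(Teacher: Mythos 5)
Note first that the paper does not actually prove this statement: it is imported verbatim from \cite{karimov-power-of-positivity}, with the surrounding text only remarking that the key ingredient is Evertse's theorem on sums of $S$-units. Your proposal reconstructs exactly that route --- exponential-polynomial form, grouping roots by modulus, passing to residue classes to restore non-degeneracy, and using the $S$-unit lower bound $|B_{t(r)}(n)|\ge\rho_{t(r)}^{\,n}e^{-\varepsilon n}$ to let the leading block dictate the sign via a \emph{strict} inequality --- so it is essentially the same argument as the source, and the identification of the Evertse bound as the one genuinely hard step is correct. One small repair: your sets $\{(z,\omega^r)\in\torus_{\Gamma_1}:Q_{t(r)}(z)\gtrless 0\}$ are open in $\torus_{\Gamma_1}$ but not in the ambient torus $\torus^{d+1}$, which is what membership in $\Tcal_O$ requires; replace the exact fibres $w=\omega^r$ by the open balls $B_r=\{w:|w-\omega^r|<1/L\}$, exactly as in the proof of \Cref{toric-product}, and the construction goes through unchanged.
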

Sign patterns of non-simple LRS, however, do not have such properties.
We next give an example of a sign pattern of a non-simple LRS that is almost-periodic but provably does not belong to $\Tcal_{O}$ nor to $\Tcal_{\mathit{SA}}$.
Let $\gamma = 0.6+0.8\im  \in \torus \cap \alg$ and $\theta =\Log(\gamma)/\im$, noting that $\gamma$ is not a root of unity. 
Consider the linear recurrence sequences $u_n = \sin (n\theta)$ and  $v_n = n\sin (n\theta) - 7\cos (n\theta)$. 
Write $\alpha, \beta \in \{+,0,-\}^\omega$ for their sign patterns, respectively.

\begin{figure}[t]
	\begin{subfigure}[h]{.5\textwidth}
		\centering
		\begin{tikzpicture}[scale=1]
			\def\x{3.1};
			\draw [Cyan,thick,domain=180:360] plot ({2*cos(\x)}, {2*sin(\x)});
			\draw [Rhodamine,thick,domain=0:180] plot ({2*cos(\x)}, {2*sin(\x)});
			\node[Rhodamine] at (-0.3,2.4) {$\bm{S_+}$};
			\node[Cyan] at (0.35,-2.4) {$\bm{S_-}$};
			\node at (1.35,1.8) {$\gamma$};
			\draw [domain=0:55] plot ({0.3*cos(\x)}, {0.3*sin(\x)});
			\node at (0.45,0.22) {$\theta$};
			\draw[->] (0,0) -- (0.6*2,0.8*2);
			\draw [white,thick,domain=0:360,fill=white] plot ({2*(0.04*cos(\x)-1)}, {2*(0.02*sin(\x))});
			\draw[-{Latex[length=2mm]}] (0,-\x) -- (0,\x);
			\draw[-{Latex[length=2mm]}] (-\x,0) -- (\x,0);
			\draw [orange,thick,domain=0:360,fill=orange] plot ({2*(0.04*cos(\x)+1)}, {2*(0.04*sin(\x))});
			\draw [black,thick,domain=0:360,fill=white] plot ({2*(0.04*cos(\x)-1)}, {2*(0.04*sin(\x))});
			\node[orange] at (2.3,0.3) {$\bm{S_0}$};
		\end{tikzpicture}
		\subcaption{}
	\end{subfigure}
	\begin{subfigure}[h]{.5\textwidth}
		\centering
		\begin{tikzpicture}[scale=1]
			\def\x{3.1};
			\draw[-{Latex[length=2mm]}] (-\x,0) -- (\x,0);
			\draw[-{Latex[length=2mm]}] (0,-\x) -- (0,\x);
			\draw [Cyan,thick,domain=195:375] plot ({2*cos(\x)}, {2*sin(\x)});
			\draw [Rhodamine,thick,domain=15:195] plot ({2*cos(\x)}, {2*sin(\x)});
			\node[Rhodamine] at (-0.7,2.4) {$\bm{S_+{(n)}}$};
			\node[Cyan] at (0.7,-2.4) {$\bm{S_-{(n)}}$};
			\draw[->] (0,0) -- (0.6*2,0.8*2);
			\node at (1.35,1.8) {$\gamma$};
			\draw [black,thick,domain=0:360,fill=white] plot ({2*(0.04*cos(\x)-0.966)}, {2*(0.04*sin(\x)-0.257)});
			\draw [black,thick,domain=0:360,fill=white] plot ({2*(0.04*cos(\x)+0.966)}, {2*(0.04*sin(\x)+0.257)});
			\draw[ ->, domain=15:0] plot ({2.3*cos(\x)}, {2.3*sin(\x)});
			\draw[ ->, domain=-165:-180] plot ({2.3*cos(\x)}, {2.3*sin(\x)});
		\end{tikzpicture}
		\subcaption{}
	\end{subfigure}
	\caption{Target intervals for $\seq{u_n}$ and $\seq{v_n}$ in the proof of \Cref{sign-eap}}
	\label{fig:sign-patterns} 
\end{figure}

\begin{lemma}
\label{sign-eap}
Both $\alpha$ and $\beta$ are effectively almost-periodic.
\end{lemma}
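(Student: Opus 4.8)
The plan is to treat the two sign patterns separately, since they have slightly different structure. For $\alpha$, the sequence $u_n = \sin(n\theta)$ is diagonalisable: writing $\gamma = e^{\im\theta}$, we have $u_n = \frac{1}{2\im}(\gamma^n - \overline{\gamma}^n)$, an LRS over $\ralg$ whose characteristic roots $\gamma, \overline{\gamma}$ lie on $\torus$ and are not roots of unity. So $\alpha$ is directly the coding of $\seq{\gamma^n}$ with respect to the partition $\{S_+, S_0, S_-\}$ of $\torus$, where $S_+ = \Ical(1, \pi)$ is the open upper half-circle, $S_- = \Ical(-1,\pi)$ the open lower half-circle, and $S_0 = \{1, -1\}$; see \Cref{fig:sign-patterns}(a). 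The set $S_0$ is finite, hence it is semialgebraic but \emph{not} open, so $\alpha \in \Tcal_{SA(\rat)}$ but a priori only $\alpha \in \Tcal_{SA(\rat)}$. However, since $\gamma$ is not a root of unity, the equation $\gamma^n = \pm 1$ has at most one solution $n = N$ (in fact none, since $0.6 + 0.8\im$ generates an infinite subgroup of $\torus$ avoiding $\pm 1$ — one should check this, e.g.\ via the group of multiplicative relations, but even in the worst case there is at most one bad index). Thus after discarding finitely many initial symbols we may assume $\gamma^n \notin S_0$ for all $n$, i.e.\ the relevant suffix is generated by the \emph{open} sets $S_+, S_-$ alone (a partition of $\torus_\gamma = \torus$), so this suffix lies in $\Tcal_O \cap \Tcal_{SA(\rat)}$. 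Then \Cref{toric-TO-TSA-seap} gives that the suffix is strongly and effectively almost-periodic, and since effective almost periodicity is preserved under prepending finitely many letters, $\alpha$ itself is \eap.

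For $\beta$, the sequence $v_n = n\sin(n\theta) - 7\cos(n\theta)$ is \emph{not} diagonalisable (the factor of $n$ forces a Jordan block of size $2$ at each of $\gamma, \overline{\gamma}$), so we cannot invoke $\Tcal_O$ or even $\Tcal_{SA}$ directly — and indeed the paper announces that this $\beta$ provably lies in neither class. Instead I would argue about almost periodicity combinatorially. The key observation is that $v_n$ has the same sign as $\frac{v_n}{n} = \sin(n\theta) - \frac{7}{n}\cos(n\theta)$ for $n \ge 1$; as $n \to \infty$, the perturbation $\frac{7}{n}\cos(n\theta) \to 0$, so the sign of $v_n$ agrees with the sign of $\sin(n\theta)$ whenever $|\sin(n\theta)| > 7/n$, i.e.\ whenever $\gamma^n$ is not within a shrinking neighbourhood of $\{1, -1\}$. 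Geometrically, $v_n = 0$ corresponds to $\gamma^n$ lying on a \emph{moving} target, namely the two antipodal points $\xi_n, -\xi_n$ where $\tan(\arg \xi_n) = 7/n$ — these spiral toward $\pm 1$; see \Cref{fig:sign-patterns}(b). So the positive/negative target sets $S_+(n), S_-(n)$ are time-dependent half-circles whose boundary rotates, by a diminishing amount, toward the real axis. The strategy is: fix a finite word $w$ and a threshold $n_0$ large enough that for $n \ge n_0$ the moving boundary is trapped inside a fixed small arc $U$ around $\{1, -1\}$; then for the purpose of locating occurrences of $w$ in $\beta[n_0, \infty)$, the word $\beta$ agrees with the \emph{static} Sturmian-like coding of $\seq{\gamma^n}$ except possibly when one of the finitely many relevant rotates $\gamma^{-i}$ lands in $U$. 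A compactness/covering argument analogous to \Cref{toric-TO-strongly-AP} (bounding the return time of $\seq{\gamma^n}$ into the open set $S_w \setminus (\text{neighbourhood of bad arcs})$) then yields a bound on the gaps between consecutive occurrences of $w$. Effectivity follows because $\gamma$ is algebraic, $7$ is rational, $\theta = \Log(\gamma)/\im$ can be approximated to arbitrary precision, and all the arcs and thresholds involved are $\rat$-semialgebraic, so the covering number can be found by trial and error using tools of semialgebraic geometry, exactly as in \Cref{toric-TO-TSA-seap} and \Cref{sturmian-1word}.

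In more detail, for $\beta$ I would proceed as follows. First, handle the finitely many zeros: show $v_n = 0$ has finitely many (effectively boundable) solutions, since $v_n = 0$ forces $\tan(\arg \gamma^n) = 7/n$ and for large $n$ this would put $\gamma^n$ arbitrarily close to $\pm 1$, contradicting that $\seq{\gamma^n}$ is equidistributed (or at least bounded away from $\pm 1$ along any effectively-located tail — one can quantify using a lower bound on $|\gamma^n \mp 1|$ coming from Baker's theorem or, more elementarily, from the fact that $\gamma$ is algebraic and not a root of unity). So $\beta$ differs from the word $\beta' \in \{+,-\}^\omega$ given by $\beta'(n) = \operatorname{sign}(\sin(n\theta) - \tfrac{7}{n}\cos(n\theta))$ in only finitely many places, and it suffices to prove $\beta'$ is \eap. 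Next, fix $w = w(0)\cdots w(l-1)$. For $n$ large, $\beta'(n)$ depends only on which side of the moving diameter through $\pm\xi_n$ the point $\gamma^n$ lies. Choose $n_0$ with $7/n_0$ small; then for all $n \ge n_0$ the diameter lies within angle $\arctan(7/n_0)$ of the real axis. Define the "safe" target set $S_w^{\mathrm{safe}} \subseteq \torus$ as the intersection $\bigcap_{i=0}^{l-1} \gamma^{-i} H_{w(i)}$ where $H_{+} , H_{-}$ are the open half-circles bounded by the real axis, further intersected with the complement of $\bigcup_{i=0}^{l-1}\gamma^{-i} U_{n_0}$ for a fixed arc $U_{n_0}$ around $\{1,-1\}$ of angular radius $2\arctan(7/n_0)$. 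On $S_w^{\mathrm{safe}}$, for any $n \ge n_0$ with $\gamma^n \in S_w^{\mathrm{safe}}$, each $\gamma^{n+i}$ is strictly on the $w(i)$ side of \emph{every} moving diameter that could occur at time $n+i \ge n_0$, so $\beta'[n, n+l) = w$. Conversely if $S_w^{\mathrm{safe}}$ is "large enough" we can conclude either $w$ never occurs or it occurs with bounded gaps. The main obstacle is making this last dichotomy precise: one needs that if $w$ occurs in $\beta'$ at all (infinitely often), then $S_w^{\mathrm{safe}}$ is nonempty for suitable $n_0$, and this requires ruling out the degenerate case where $w$'s occurrences are "driven entirely by the moving part of the target" near $\pm 1$. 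I would resolve this by a separate, direct analysis of the behaviour of $\beta'$ on the (sparse, effectively locatable) indices $n$ with $\gamma^n$ near $\pm 1$: there the sign is governed by comparing $|\sin(n\theta)| \approx [\![ n\theta/\pi ]\!]\cdot\pi$ with $7/n$, and one can bound how long such "near-$\pm 1$" episodes last using a quantitative lower bound (effective irrationality measure) for $\theta/\pi$, which exists since $\gamma$ is algebraic and not a root of unity. Combining the safe-set covering bound with this boundary analysis gives a uniform, effectively computable bound $R(w)$, completing the proof. I expect the boundary analysis near $\pm 1$ — and in particular extracting an effective irrationality-measure-type bound — to be the main technical obstacle; everything else is a routine adaptation of \Cref{toric-TO-strongly-AP}, \Cref{toric-TO-TSA-seap}, and \Cref{sturmian-1word}.
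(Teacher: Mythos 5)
Your treatment of $\alpha$ is essentially the paper's proof: discard the finitely many positions where $\gamma^n$ hits the boundary, note that the resulting suffix lies in $\Tcal_O \cap \Tcal_{SA(\rat)}$, and invoke \Cref{toric-TO-TSA-seap}. For $\beta$ your overall skeleton (moving half-circle targets converging to static ones, plus a covering bound on return times) is also the paper's, but the proof has a genuine gap at precisely the two points where you reach for Diophantine lower bounds on the orbit. First, the effective location of the zeros of $v_n$: you propose to rule out $v_n=0$ for large $n$ via a Baker-type lower bound on $|\gamma^n \mp 1|$ defeating $7/n$. No such bound exists --- the paper's very next theorem shows by continued fractions that $0 < \Log(\gamma^n) < 2\pi/n < \arctan(7/n)$ for infinitely many $n$, so $\gamma^n$ enters the $O(1/n)$-neighbourhood of $1$ infinitely often, and Baker-type bounds only give $|\gamma^n-1| > cn^{-C}$ with $C$ far too large to help. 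The paper instead rewrites $v_n=0$ as the exact algebraic identity $(\gamma/\overline{\gamma})^n = (7-n\im)/(7+n\im)$ and kills it with the Weil height: the left-hand side has height $n\,h(\gamma/\overline{\gamma})$ growing linearly while the right-hand side has height $O(\log n)$, with all constants effective.

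Second, and more seriously, the dichotomy you yourself flag as ``the main obstacle'' --- either $w$ occurs finitely often, or the safe set is nonempty and gives bounded gaps --- is the crux of the lemma, and it is left unresolved; the tool you propose (an effective irrationality measure for $\theta/\pi$) is the wrong one, since, by the observation above, any argument that needs the orbit to eventually stay clear of an $O(1/n)$-neighbourhood of the moving boundary must fail. The paper's resolution is purely geometric and needs no Diophantine input: each set $\gamma^{-j}S_{w(j)}(n+j)$ is the half-circle $\gamma^{-j}I_j$ rotated by $\Theta(1/n)$, and because $\gamma$ is not a root of unity the $2l$ endpoints $\pm\gamma^{-j}$ of the limit half-circles are pairwise distinct. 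Consequently the limit intersection $\bigcap_{j}\gamma^{-j}I_j$ is either empty with a positive, effectively computable margin --- whence $S_w(n)$ is empty for all $n$ beyond an effective threshold and $w$ occurs only finitely often --- or it is an open arc of effectively positive length, whence $S_w(n)$ contains a fixed $\rat$-semialgebraic interval $J$ for all large $n$ and the covering argument applies to $J$. With that geometric dichotomy supplied, your safe-set construction does go through; without it the proof is incomplete.
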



\begin{proof}
	Sequences $\seq{u_n}$ and $\seq{v_n}$ are non-degenerate LRS of order 2 and~4, respectively.
	Hence by \cite{berstel-deux-decidable-properties-of-lrs} both sequences have finitely many zeros.
	In fact, we can identify all of them.
	Our sequences satisfy recurrence relations $u_{n+2} = 1.2 u_{n+1} + u_n$ and
	\[
	v_{n+4} = 2.4v_{n+3} - 3.44 v_{n+2} + 2.4v_{n+1} - v_n.
	\]
	Since $\gamma$ is not a root of unity, it is immediate that $u_n = 0$ only for $n = 0$.
	We can determine all zeros of $\seq{v_n}$ either using the general algorithm for solving the Skolem problem for LRS over $\ralg$ of order four \cite{mignotte-shorey-tijdeman-skolem,vereschagin-skolem}, or a simple approach based on the \emph{(absolute logarithmic) Weil height}.
	The Weil height $h(z)$ of an algebraic number has the following properties:
	\begin{itemize}
		\item[(a)] $h(z) > 0$ for every non-zero $z$ that is not a root of unity;
		\item[(b)] $h(k) = \Log|k|$ for $k \in \intg\setminus\{0\}$;
		\item[(c)] $h(z^n) = n h(z)$ for every $z \in \alg$ and $n \in \intg$; 
		\item[(d)] $h(z\cdot y), h(z+y) \le h(z) + h(y) + \Log(2)$ for all $z,y \in \alg$.
	\end{itemize}
	See \cite{waldschmidt2000} for a detailed discussion of the Weil height.
	We have that $v_n = 0$ if and only if $z^n = y_n$, where $z = \gamma/\overline{\gamma}$ and $y_n = \frac{7-n\im}{7+n\im}$.
	Both $z$ and $y_n$ for all $n$ are algebraic numbers of degree at most 2.
	From (c) and (d), $h(y_n) < C\Log n$ for an effectively computable constant $C$, whereas $h(z^n) = nh(z)$ by (b).
	Since $\gamma$ is non-zero and not a root of unity, $h(\gamma) \ne 0$.
	Therefore, $h(z^n)$ grows linearly, whereas $h(y_n)$ grows logarithmically in $n$.
	Equating $h(z^n)$ to $h(y_n)$, we conclude that $v_n \ne 0$ for all $n \ge N$, where $N$ is effectively computable.
	Checking all $n \le N$ individually, we find that for $n \ge 1$, $v_n \ne 0$.
	Therefore, $z(n), y(n) \in \{+,-\}$ for all $n \ge 1$.
	
	\Cref{fig:sign-patterns}~(a) describes how $\alpha \in \Tcal_{\mathit{SA}}$ is generated.
	Both $S_+$ and $S_-$ are open subsets of $\torus$, and $S_0 = \{1\}$.
	For all $n \in\nat$, $\alpha(n)$ is $+$ if and only if $\gamma^n \in S_+$ and $\alpha(n)$ is $-$ if and only if $\gamma^n \in S_-$.
	Since $\alpha(n) \in \{+,-\}$ for $n \ge 1$, $\alpha[1,\infty)$ is generated by $(\gamma, \{\gamma^{-1}S_+,\gamma^{-1}S_-\})$.
	Applying \Cref{toric-TO-TSA-seap}, $\alpha[1,\infty)$ and hence $\alpha$ are both effectively almost-periodic.

	Let us consider $\beta$ next.
	Let $\delta_n = {\arctan (7/n) \in (0,\pi/2)}$, $S_+{(n)} = {e^{\im\delta_n} S_+}$, and $S_-{(n)} = {e^{\im\delta_n}S_-}$.
	We have that for $n \ge 1$, $v_n > 0$ if and only if $\gamma^n \in S_+{(n)}$ and $v_n < 0$ if and only if $\gamma^n  \in S_-{(n)}$.
	\Cref{fig:sign-patterns}~(b) depicts $S_+{(n)}$ and $S_-{(n)}$ for $n = 30$.
	Since $\seq{e^{-\im\delta_n}}$ converges to $1$, as $n \to \infty$, $S_+{(n)}$ uniformly approaches the upper half $S_+$ of the unit circle, whereas $S_-{(n)}$ approaches~$S_-$.
	
	To prove effective almost periodicity of $\beta$, consider a finite word
	\[
	w = w(0)\cdots w(l-1) \in \{+,-\}^l.
	\]
	This word occurs at position $n \ge 1$ in $\beta$ if and only if
	\[
	\bigwedge_{j=0}^{l-1} \gamma^{n+j} \in S_{w(j)}{(n+j)} \:\Longleftrightarrow\: \gamma^n \in \bigcap_{j=0}^{l-1}\gamma^{-j} S_{w(j)}{(n+j)}.
	\]
	Define $S_w(n) = \bigcap_{j=0}^{l-1}\gamma^{-j} S_{w(j)}{(n+j)}$.
	We will argue that either $w$ occurs finitely often in $\beta$, or there exists an open interval subset $K$ of $\torus$ such that $K \subset S_w(n)$ for all sufficiently large $n$.
	
	Recall that for distinct $z_1,z_2 \in \torus_\Gamma$, $\Ical(z_1, z_2)$ is the open interval subset of $\torus$ with endpoints $z_1$ and $z_2$, generated by rotating counter-clockwise starting at $z_1$.
	Each $\gamma^{-j}S_{w(j)}({n+j})$ is of the form $e^{\im\delta_{n+j}}\gamma^{-j}I_j$, where $I_j$  is  $S_+$ if $w(j)$ is the letter $+$ and $I_j = S_-$ otherwise.
	Since $\delta_n = \Theta(1/n)$, 
	$\gamma^{-j}S_{w(j)}({n+j})$ uniformly approaches the interval $\gamma^{-j}I_j$ as $n \to \infty$.
	
	The endpoints of $\gamma^{-j}I_j$ are $\gamma^{-j}$ and $-\gamma^{-j}$.
	As $\gamma$ is not a root of unity, for every $j_1 \ne j_2$, $\gamma^{-j_1}$ is not equal to $\gamma^{-j_2}$ and $-\gamma^{-j_2}$.
	Hence the limit intervals $\gamma^{-j}I_j$ for $0 \le j < l$ have $2l$ distinct endpoints in total.
	Therefore,
	\begin{itemize}
		\item[(a)] either there exists $N$ such that $S_w(n)$ is empty for all $n \ge N$ (which happens if and only if the ``limit shape'' $\bigcap_{j=0}^{l-1} \gamma^{-j}I_j$ is empty), or
		\item[(b)] there exists $N$ such that for all $n \ge N$, $S_w(n) = \Ical(z_1 e^{\im \delta^{(1)}_n}, z_2 e^{\im \delta^{(2)}_n})$ is non-empty, where $z_1,z_2$ are distinct and of the form $\pm \gamma^{-j}$ for some $0 \le j <l$ and $\delta^{(1)}_n,\delta^{(2)}_n = \Theta(1/n)$.
	\end{itemize} 
	Since all steps above are effective, we can effectively compute $N$ in both cases, and in case~(b), construct a $\rat$-semialgebraic interval $J$
	such that for all $n \ge N$, $J \subset S_w(n)$.
	In case~(a) the word $w$ does not occur in $\beta[N,\infty)$ and we are done. 
	Otherwise, observe that for $n \ge N$, $\gamma^n \in J \Rightarrow \beta[n,n+l) = w$.
	Since the endpoints  of $J$ are algebraic, we can compute $K$ such that for all $m \in \nat$, $\gamma^n \in J$ for some $m \le n \le m+K$; see the proof of \Cref{toric-TO-strongly-AP} for the usual topological construction, or \cite[Lemma~2]{karimov-ltl} for a direct formula.
	We conclude that the word~$w$ occurs in every subword of $\beta$ of length $N + K + l$.
\end{proof}

The discussion above suggests to think of $\beta$  as being ``toric with moving targets''.
We next show that $\alpha \times \beta$ is radically different from both $\alpha$ and $\beta$, and far from belonging to $\Tcal_{O}$ or $\Tcal_{\mathit{SA}}$.

\begin{theorem}
	For $\alpha, \beta$ as in \Cref{sign-eap}, the word $\alpha \times \beta$ is not almost-periodic and hence does not belong to $\Tcal_O \cup \Tcal_{\mathit{SA}}$.
\end{theorem}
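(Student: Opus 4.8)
The plan is to isolate a single letter of the product alphabet, $w := (+,-)\in\{+,0,-\}^2$, and to show that $w$ occurs in $\delta$ infinitely often but with unbounded gaps between consecutive occurrences. By the definition of almost periodicity this is enough: no value $R(w)$ can work, since for every $R$ the letter $w$ both occurs in $\delta[R,\infty)$ (so case (a) fails) and is absent from some factor of length $R$ (so case (b) fails). The final assertion $\delta\notin\Tcal_O\cup\Tcal_{SA}$ then follows immediately from \Cref{toric-TO-strongly-AP} and \Cref{toric-TSA-AP}(b), which guarantee that every word in $\Tcal_O\cup\Tcal_{SA}$ is almost-periodic.

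First I would pin down when $\delta(n)=w$. Put $\eta_n:=\arctan(7/n)\in(0,\pi/2)$ (the angle called $\delta_n$ in the proof of \Cref{sign-eap}), so that $v_n=\sqrt{n^2+49}\,\sin(n\theta-\eta_n)$. By \Cref{sign-eap} we have $u_n,v_n\neq0$ for $n\ge1$, hence for such $n$: $\alpha(n)={+}$ iff $\arg(\gamma^n)\in(0,\pi)$, and $\beta(n)={-}$ iff $\arg(\gamma^n)\in(\pi+\eta_n,2\pi+\eta_n)\bmod 2\pi$. Intersecting these two conditions, a short computation gives $\delta(n)=w$ if and only if $\gamma^n\in\Ical(1,e^{\im\eta_n})$, i.e.\ $\gamma^n$ lies within counter-clockwise arc-length $\eta_n$ of the point $1\in\torus$. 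The decisive feature is that this target \emph{shrinks}: $\eta_n\to0$.

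To see that $w$ occurs infinitely often I would use continued fractions. Since $\gamma$ is not a root of unity, $\vartheta:=\theta/(2\pi)$ is irrational; let $p_k/q_k$ be its convergents. For infinitely many $k$ one has $0<q_k\vartheta-p_k<1/q_{k+1}\le1/q_k$, so $\arg(\gamma^{q_k})=2\pi(q_k\vartheta-p_k)\in(0,2\pi/q_k)$. As $q\arctan(7/q)\to7>2\pi$ when $q\to\infty$, for all large such $k$ we get $2\pi/q_k<\arctan(7/q_k)=\eta_{q_k}$, hence $\gamma^{q_k}\in\Ical(1,e^{\im\eta_{q_k}})$ and $\delta(q_k)=w$. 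Since $q_k\to\infty$, the letter $w$ appears in $\delta[N,\infty)$ for every $N$.

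For the unbounded gaps, fix $R\ge1$. The points $\gamma^0,\dots,\gamma^{R-1}$ are pairwise distinct (as $\gamma$ is not a root of unity), so with $\rho:=\pi/(2R)$ the union of the $R$ closed arcs of radius $\rho$ about them has total length $\le\pi<2\pi$ and therefore misses a non-empty open set $U\subseteq\torus$. Since $\overline\gamma=\gamma^{-1}$ is not a root of unity, $\{\gamma^{-a}:a\ge a_0\}$ is dense in $\torus$ for every $a_0$, so we may choose $a$ with $\gamma^{-a}\in U$ and $\eta_a<\rho$. Then for each $j\in\{0,\dots,R-1\}$, using $|\gamma^{a+j}-1|=|\gamma^j-\gamma^{-a}|>\rho>\eta_a\ge\eta_{a+j}$ and the elementary bound $|e^{\im t}-1|=2\sin(t/2)<t$ (so every point of $\Ical(1,e^{\im\eta_{a+j}})$ lies within distance $\eta_{a+j}$ of $1$), we conclude $\gamma^{a+j}\notin\Ical(1,e^{\im\eta_{a+j}})$, i.e.\ $\delta(a+j)\neq w$. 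Hence the length-$R$ factor $\delta[a,a+R)$ omits $w$, and as $R$ was arbitrary we are done. I expect the only step that is not soft to be the recurrence of the orbit into the shrinking arc $\Ical(1,e^{\im\eta_n})$ in the third paragraph, where one genuinely needs the quality of the convergent approximations of $\vartheta$, via the inequality $2\pi<7$; the rest is density of the reversed orbit together with the collapse $\eta_n\to0$.
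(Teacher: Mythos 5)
Your argument is correct and follows essentially the same route as the paper: isolate the letter $(+,-)$, show via continued-fraction convergents and the inequality $2\pi<7$ that it recurs infinitely often, and use the shrinking of the target arc $\Ical(1,e^{\im\eta_n})$ together with density of the (inverted) orbit to produce arbitrarily long gaps, exactly as in the paper's choice of an open set avoiding $\bigcup_{i}\gamma^{-i}J_m$. The only blemish is the inequality $|\gamma^{j}-\gamma^{-a}|>\rho$, which conflates arc-distance with chordal distance (arc-distance $>\rho$ only yields chordal distance $>2\sin(\rho/2)$); this is repaired by requiring $\eta_a<2\sin(\rho/2)$ instead of $\eta_a<\rho$, which your choice of sufficiently large $a$ already accommodates.
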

\begin{proof}
	Recall from \Cref{toric-TO-strongly-AP,toric-TSA-AP} that all words belonging to $\Tcal_O$ or $\Tcal_{\mathit{SA}}$ are almost-periodic. 
	We therefore only need to prove the first statement.
	We will show that (a) the letter $(+,-)$ occurs infinitely often in $\alpha \times \beta$, and (b) the length of the gaps between its consecutive occurrences is not bounded. 
	
We start with (a). 
The letter $(+, -)$ occurs at a position $n > 0$ if and only if $\sin (n\theta) > 0$ and $n\sin (n\theta) - 7\cos (n\theta) < 0$, which is equivalent to  $0 <  \Log(\gamma^n) < \arctan (7/n)$.
We will show that $0 < \Log (\gamma^n) < 2\pi/n$ is satisfied for infinitely many $n \in \nat$. 
Since $\arctan (7/n) > 2\pi/n$ for $n > 11$, this proves that $(+,-)$ occurs infinitely often in $\alpha \times \beta$.

Let $t = \Log(\gamma)/(2\pi\im) \in (0,1)\setminus \rat$.
For $n \ge 1$, $\Log(\gamma^n) \in (0, 2\pi/n)$ if and only if $nt - \lfloor nt \rfloor < 1/n$.
We find infinitely many values of $n$ satisfying the latter inequality using the \emph{continued fraction expansion} of $t$:
$$
t = \cfrac{1}{a_1 + \cfrac{1}{a_2 + \cfrac{1}{a_3 + \ddots}}}
$$
where each $a_i$ is a positive integer; see \cite{borwein_van-der-poorten_shallit_zudilin_2014}.
Let $p_n/q_n$ be \emph{$n$th convergent}. That is, $p_n/q_n$ is the rational approximation of $t$ obtained by truncating the expansion at the $n$th level. 
For all $n$, we have that
\[
q_{n+1}t -p_{n+1} = \frac{(-1)^{n+1}}{a_{n+2}q_{n+1}+q_n}.
\]
In particular, the $n$th convergent is an over-approximation when $n$ is odd and an under-approximation when $n$ is even.
Moreover, $|p_n/q_n-t| < 1/q_n^2$ for all $n$, and $\seq{q_n}$ is strictly increasing.
Therefore, for every even $n \ge 1$, 
\[
0 < t -  \frac{p_n}{q_n} < \frac{1}{q_n^2}
\]
and hence $q_nt - \lfloor q_nt \rfloor < 1/q_n$.

We move on to proving (b).
Let $J_n = S_+ \cap S_-(n)$.
Recall that the letter $(+,-)$ occurs at the position $n$ in $\beta$ if and only if $\gamma^n \in J_n$, and the length of~$J_n$ is $\Theta(1/n)$.
Let $B \in \nat$.
We show how to construct $n$ such that letter $(+,-)$ does not occur in $\beta[n,n+B)$.
Let $m$ be sufficiently large that $\torus \setminus \bigcup_{i=0}^{B} \gamma^{-i}J_{m}$ contains a non-empty open subset $O$ of  $\torus$.
Further let $n \ge m$ be such that $\gamma^n \in O$.
By construction, for every $0 \le i \le B$, $\gamma^{n+i} \notin J_{m}$.
Since $J_{m+i} \subset J_{m}$ for all $i \in \nat$, we have that for all $0 \le i < B$, $\gamma^{n+i} \notin J_{n+i}$.
That is, for all $0 \le i < B$, $(\alpha \times \beta)(n+i)$ is not the letter $(+,-)$.
\end{proof}
\begin{corollary}
	The word $\beta$ does not belong to $\Tcal_{O} \cup \Tcal_{\mathit{SA}}$. 
\end{corollary}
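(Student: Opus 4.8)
The plan is to read the Corollary off directly from the preceding Theorem together with the almost‑periodicity results already established for $\Tcal_O$ and $\Tcal_{SA}$; the Corollary merely records, as a standalone statement, the consequence already announced in the Theorem. Recall that \Cref{toric-TO-strongly-AP} shows every $\alpha \in \Tcal_O$ is strongly almost-periodic, and \Cref{toric-TSA-AP}~(b) shows every $\alpha \in \Tcal_{SA}$ is almost-periodic; hence every word in $\Tcal_O \cup \Tcal_{SA}$ is almost-periodic. (Since $\Tcal_{SA(\rat)} \subseteq \Tcal_{SA}$, the narrower semialgebraic class needs no separate treatment.)

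Arguing by contradiction, suppose $\delta \in \Tcal_O \cup \Tcal_{SA}$. By the previous paragraph $\delta$ would then be almost-periodic, which contradicts the Theorem. Therefore $\delta \notin \Tcal_O \cup \Tcal_{SA}$.

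No genuine obstacle remains at this point: all the substantive work is carried out in the proof of the Theorem, which exhibits a length-one factor $u = (+,-)$ that occurs at arbitrarily late positions in $\delta$ and yet admits no bound on the gaps between consecutive occurrences, thereby violating clause~(b) of the definition of almost periodicity. The only point requiring care is to invoke the correct half of that definition: once $u$ is known to occur infinitely often, clause~(a) (``$u$ does not occur in $\alpha[R(u),\infty)$'') is excluded, so clause~(b) (``$u$ occurs in every factor of length $R(u)$'') would have to hold — and it is precisely this that part~(b) of the Theorem's proof refutes. Consequently the Corollary follows with no further computation.
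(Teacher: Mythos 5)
Your argument is correct for the corollary as literally stated: every word in $\Tcal_O \cup \Tcal_{SA}$ is almost-periodic by \Cref{toric-TO-strongly-AP} and \Cref{toric-TSA-AP}, the preceding theorem shows $\delta$ is not almost-periodic, and so $\delta$ lies outside the union. But note that this implication is already written into the conclusion of that theorem (``is not almost-periodic \emph{and hence} does not belong to $\Tcal_O \cup \Tcal_{SA}$''), whose proof opens with exactly your observation; on your reading the corollary has no content beyond the theorem. The paper's own proof of the corollary does something genuinely different: it deduces that $\beta$ --- not $\delta$ --- fails to belong to $\Tcal_O \cup \Tcal_{SA}$, arguing that $\alpha$ lies in both classes, that both classes are closed under products, and hence that if $\beta$ belonged to either class then $\delta = \alpha \times \beta$ would too, contradicting the theorem. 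This strongly suggests the displayed statement carries a typo and is intended to assert $\beta \notin \Tcal_O \cup \Tcal_{SA}$. That intended statement is a real corollary (it is not contained in the theorem) and cannot be reached by your route: \Cref{sign-eap} shows that $\beta$ \emph{is} effectively almost-periodic, so non-almost-periodicity gives no obstruction for $\beta$ itself, and the closure-under-products argument is essential. In short, your proof is sound for the literal statement but recovers only a restatement of the theorem, whereas the paper's proof extracts the new fact that the ``toric with moving targets'' word $\beta$ is not toric in the sense of $\Tcal_O$ or $\Tcal_{SA}$.
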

\begin{proof}
	Recall that $\alpha$ belongs to both $\Tcal_{O}$ and $\Tcal_{\mathit{SA}}$, and both classes are closed under products.
	Since $\alpha \times \beta$ does not belong to $\Tcal_{O} \cup \Tcal_{\mathit{SA}}$, neither does~$\beta$.
\end{proof}

We mention that \cite[Ex.~2]{salimov2010uniform} gives an example of two uniformly recurrent morphic words whose product is not almost-periodic.

\subsection{Characteristic words of linear dynamical systems}
\label{sec-lds}

One application of toric words and  MSO decidability that has recently received significant attention is the \emph{model-checking problem (MCP)} for linear dynamical systems (LDS) \cite{Karimov2022}.
An LDS is given by a pair $(M,s)$ where ${M \in \rat^{d \times d}}$ is the update matrix and $s \in \rat^d$ is the starting configuration.
The \emph{orbit} of $(M,s)$ is the infinite sequence $\seq{M^ns}$.
Let $\Scal = \{S_1,\ldots, S_m\}$ be a collection of $\rat$-semialgebraic subsets of $\rel^d$.
Writing $\Sigma=2^\Scal$, the \emph{characteristic word} of $(M,s)$ with respect to $\Scal$ is the word $\alpha \in \Sigma^\omega$ defined by $S_i \in \alpha(n)\Longleftrightarrow M^ns \in S_i$ for all $1 \le i \le m$ and $n \in \nat$.
The model-checking problem is to decide, given $(M,s)$ and a deterministic automaton $\aut$, whether $\aut$ accepts $\alpha$.
If we fix $M,s, \Scal$, and only let $\aut$ vary, by B\"uchi's result \cite{buchi-collected-works}, the resulting problem is Turing-equivalent to the decision problem for the MSO theory of $\langle\nat; <, P_1,\ldots, P_m \rangle$, where each $P_i\st \nat \to \{0,1\}$ is the binary predicate defined by $P_i(n) = 1$ if and only if $M^ns \in S_i$ for all $n \in \nat$.

Let $p_1,\ldots, p_K$ be all polynomials (with rational coefficients) appearing in the definition of $\Scal$.
For each $1\le j \le K$, the sequence $u_n = p_j(M^ns)$ is an LRS over $\rat$.
Denote its sign pattern by $\alpha_j \in \{+,0,-\}^\omega$.
Since each $S_i$ is generated by a Boolean combination of polynomial inequalities, we have that $\alpha = \sigma(\alpha_1\times \cdots \times \alpha_K)$, where $\sigma$ is a 1-uniform morphism.
Hence understanding the characteristic word of an LDS with respect to a collection of semialgebraic sets $\Scal$ boils down to understanding sign patterns of a collection of linear recurrence sequences.

The model-checking problem for LDS subsumes, among many others, the Skolem problem, the Positivity problem, and the Ultimate Positivity problem for LRS over $\rat$.
Unsurprisingly, decidability of the full model-checking problem is currently open.
However, decidability can be proven if we place certain restrictions on $M, \aut$, and $\Scal$.
\begin{enumerate}
	\item[(A)] Call a $\rat$-semialgebraic set $T$ \emph{low-dimensional} if it either has intrinsic (\ie semialgebraic) dimension 1, or is contained in a three-dimensional linear subspace.
	The set $T$ is \emph{tame} if it can be obtained in finitely many steps from a collection of low-dimensional sets through the usual set operations.
	If all targets in~$\Scal$ are tame, then the characteristic word $\alpha$ of any LDS $(M,s)$ with respect to $\Scal$ is \eap \cite{Karimov2022,LinearLoopsPOPL}.
	In particular, $\alpha$ has a suffix belonging to the class $\Tcal_{O}\cap\Tcal_{\mathit{SA}(\rat)}$ that is fully effective.
	Hence the MCP with tame targets (but arbitrary $(M,s)$ and $\aut$) is decidable.
	\item[(B)] An automaton $\aut$ is \emph{prefix-independent} if for every infinite word $\beta$, whether $\aut$ accepts $\beta$ does not change if we perform finitely many insertions and deletions on $\beta$.
	It is shown in \cite{karimov-popl21} that the MCP is decidable if we assume $M$ is diagonalisable and $\aut$ is prefix-independent.
\end{enumerate}
From (A) it follows that the MCP is decidable in dimension at most~3.
On the other hand, (B) is closely related to \Cref{simple-lrs-sign}.
To see this, suppose $M$ is diagonalisable.
Then $u_n = p(M^ns)$ is a simple LRS for every polynomial $p$.
From the connection between the characteristic word $\alpha$ and the sign patterns of LRS defining $\Scal$ discussed above, the closure properties of toric words, as well as \Cref{simple-lrs-sign}~(a), it follows that $\alpha$ has a suffix that belongs to $\Tcal_{O} \cap \Tcal_{\mathit{SA}}$.
Unfortunately, it is not known how to determine the starting position of such a suffix in $\alpha$, which is the reason why in (B) we impose the prefix-independence restriction.
However, similarly to \Cref{simple-lrs-sign}~(b), it is shown in \cite{karimov-power-of-positivity} that the MCP is decidable for diagonalisable LDS if we assume decidability of the Positivity problem for simple LRS over $\rat$.



\bibliographystyle{elsarticle-num} 
\bibliography{refs.bib}

\end{document}